\documentclass[11pt]{article}
\usepackage[OT1]{fontenc}
\usepackage{dsfont}
\usepackage{fullpage}
\usepackage{booktabs} 
\usepackage[ruled]{algorithm2e} 

\SetAlFnt{\small}
\SetAlCapFnt{\small}
\SetAlCapNameFnt{\small}
\SetAlCapHSkip{0pt}
\IncMargin{-\parindent}

\usepackage{amssymb,amsthm,amsmath}
\usepackage{tikz}
\usepackage{bbm}
\usepackage{bm}
\usepackage{dsfont}
\usepackage{hyperref}
\usepackage[square,sort,comma,numbers]{natbib}
\usepackage{url}            
\usepackage{microtype}      
\usepackage{xcolor}
\usepackage{epic}
\usepackage{epsfig}
\usepackage{verbatim}
\usepackage[justification=centering]{caption}
\usepackage{enumitem}
\usepackage[normalem]{ulem}
\usepackage{thmtools,thm-restate}
\usepackage{float}
\usepackage{multirow}
\usepackage{makecell}
\usepackage{subcaption}
\usepackage{endnotes}

\input{style.sty}
\input{macros.sty}


\newcommand{\BSE}{\texttt{BSE}}

\newcommand{\DSE}{\texttt{DSE}}
\newcommand{\SSE}{\texttt{SSE}}
\newcommand{\RME}{\texttt{RME}}

\newcommand{\MILP}{\texttt{MILP}}
\newcommand{\MIP}{\texttt{MIP}}

\newcommand{\Markov}{\textsc{Markovian}}
\newcommand{\First}{\textsc{First}-$k$}




\title{Is Learning Effective in Dynamic Strategic Interactions? Evidence from Stackelberg Games\thanks{This work is supported by  Army Research Office
Award W911NF-23-1-0030, ONR Award N00014-23-1-2802 and NSF Award CCF-2303372.}}
\author{
Michael Albert \\ University of Virginia \\ \small albertm@darden.virginia.edu
\and
Quinlan Dawkins\thanks{Work   done when the author is at UVA CS.} \\ Advanced Micro Devices \\ \small qdawkins@amd.com
\and
Minbiao Han \\ University of Chicago \\ \small minbiaohan@uchicago.edu
\and
Haifeng Xu \\ University of Chicago \\ \small haifengxu@uchicago.edu
}

\date{}

\begin{document}


\maketitle
\begin{abstract}
In many settings of interest, a policy is set by one party, the leader, in order to influence the action of another party, the follower, where the follower's response is determined by some private information.
A natural question to ask is, can the leader improve their strategy by learning about the unknown follower through repeated interactions?
A well known folk theorem from dynamic pricing, a special case of this leader-follower setting, would suggest that the leader cannot learn effectively from the follower when the follower is fully strategic, leading to a large literature on learning in strategic settings that relies on limiting the strategic space of the follower in order to provide positive results.
In this paper, we study dynamic Bayesian Stackelberg games, where a leader and a \emph{fully strategic} follower interact repeatedly, with the follower's type unknown. 
Contrary to existing results, we show that the leader can improve their utility through learning in repeated play. 
Using a novel average-case analysis, we demonstrate that learning is effective in these settings, without needing to weaken the follower's strategic space. 
Importantly, this improvement is not solely due to the leader's ability to commit, nor does learning simply substitute for communication between the parties.
We provide an algorithm, based on a mixed-integer linear program, to compute the optimal leader policy in these games and develop heuristic algorithms to approximate the optimal dynamic policy more efficiently. 
Through simulations, we compare the efficiency and runtime of these algorithms against static policies.
\end{abstract}


\section{Introduction}
\label{sec:introduction}

Optimal pricing given unknown demand is a well-studied problem that comes up in many settings, such as airline ticket pricing, ride-sharing platforms, and online retail, where pricing strategies are crucial for maximizing revenue and efficiency \citep{vickrey1961counterspeculation,aggarwal2006truthful,varian2009online}. 
A natural extension of the problem is to assume that there is an opportunity for a seller to learn the demand through repeated interactions using \emph{dynamic pricing}.
However, even in the simplest setting, a single buyer with a fixed valuation drawn from a known distribution, i.e. binary demand, it is impossible for a seller to exploit any information learned to improve her revenue.
Formally, no dynamic pricing policy can outperform the fixed (or \emph{static}) policy of simply offering the optimal single round price, i.e. the \emph{Myerson price} \citep{myerson1981optimal}, at every round, at least when the buyer is both strategic and patient.\endnote{This result is a well known \textit{folk theorem}, and therefore lacks an agreed upon citation for the initial statement of the theorem. However, the reader can consult \citet{pavan2014dynamic,devanur2014perfect, vanunts2019optimal} for recent relevant work.}

This negative result, which we refer to as the \emph{No Learning Theorem}, is a function of the buyer behaving strategically in his purchase decisions in order to influence future prices.
The effect on the literature of this strong negative result has been to adopt assumptions that weaken the strategic behavior of the buyer (e.g., myopic buyers) in order to obtain positive results \citep{amin2013learning, amin2014repeated,dawkins2022first,vanunts2019optimal,immorlica2017repeated,dawkins2021limits}.
Given this foundational negative result, a natural question arises, ``Does a static policy suffice to achieve optimality
in a dynamic setting with strategic agents?''
Stated another way, ``Generally, can one both learn and exploit the information learned when facing strategic agents if we move beyond the dynamic pricing problem?''

There are many settings outside of dynamic pricing that share similar strategic concerns, specifically one party, the leader, sets a policy that induces a response by another party, the follower.
However, the follower may have some private information that would inform the optimal leaders policy while also having an incentive to mislead the leader due to conflicting objectives.
For example, in contract design \citep{bolton2004contract} the principal, the leader, is designing a contract to induce certain actions in the agent, the follower.
The contract may be revisited periodically, allowing for opportunities for the principal to redesign the contract given the revealed information from previous contracting periods.
In the context of security, whether it be cyber security, border patrol, airport security, protecting wildlife from poachers, or many other security contexts, the defender, the leader, commits to a security policy and the attacker, the follower, responds \citep{paruchuri2008playing,yang2014adaptive}.
However, the value of different targets to the attacker may be unknown, but through repeated rounds of defense and attack, the value may be learned based on the actions of the attacker.
Tolling, a pricing based congestion management approach, is also consistent with this setting.
The central planner, the leader, sets tolls along road segments under which the traffic flow, the follower, best responds to both minimize latency plus cost between any two nodes \citep{roth2016watch}.
Here again, the central planner may learn about the distribution of traffic over time through observing latency throughout the network.

The preceding examples, including the dynamic pricing problem, can all be formalized using the framework of Bayesian Stackelberg games \citep{conitzer2006computing}. 
Specifically, a standard Stackelberg game models a two-step sequential decision-making process between two agents, a leader and a follower.
In the dynamic pricing problem, the leader would be the seller and the buyer would be the follower.
The leader moves first in a Stackelberg game by committing to a randomized strategy, e.g. randomizing over prices for an item.
Then the follower responds with a utility maximizing action, e.g. does or does not buy the item, after observing the leader's strategy. 
However, the follower's utility information, the valuation of the buyer in the dynamic pricing problem, may be private and unknown to the leader, and instead the leader may only know a distribution over possible follower types, hence the \emph{Bayesian} Stackelberg game.
In repeated settings there is an opportunity to learn from the behavior of the follower.
In this general framework, given the strong strategic position of the follower, can the leader effectively exploit this information, in contrast to the No Learning Theorem for the dynamic pricing problem?


\subsection{Contributions}
\label{subsec:contributions}



This paper has three main contributions. Our first contribution is to show that learning and exploiting the learned information—referred to as \emph{effective learning}—is likely to be achievable, particularly in games with many possible follower types and ``random'' payoff matrices.
We demonstrate this through an \emph{average case} analysis for the space of dynamic Bayesian Stackelberg games.
Specifically, we define the concept of a random game and impose a condition on the distribution generating this random game.
We then prove using a novel stochastic geometry based argument that the probability of the leader being able to learn effectively converges to one linearly with the number of follower types.
This average case analysis highlights that the dynamic pricing problem is an anomaly within the broader set of dynamic Bayesian Stackelberg games, suggesting that the insights drawn from the No Learning Theorem do not generally apply in symmetric strategic settings.
In the process of proving this claim, we develop a sufficient condition that guarantees effective learning by the leader, and we show that this condition holds with high probability.
Despite the condition being merely sufficient, we consider it an important contribution for characterizing the types of games where learning is effective.

While we demonstrate that learning is generally effective in arbitrary repeated Bayesian Stackelberg games, our setting differs from the broader mechanism design literature in that a Bayesian Stackelberg game does not allow for arbitrary communication between the leader and follower.
One potential interpretation of our first result might be that effective learning is simply a substitute for direct communication.
If so, and if the follower could directly communicate their type, the leader might achieve a strict improvement over the optimal dynamic policy.
As our second contribution, we show that this is not the case.
In fact, we prove that the optimal dynamic policy is at least as effective as direct communication under a static policy, and we demonstrate scenarios in which the leader can learn a policy that achieves an $\Omega(1)$ improvement over the static policy.

Finally, while our previously stated contributions use constructive techniques to demonstrate the positive results, they do not generally provide the optimal dynamic policy.
As our final contribution, we develop a mixed-integer linear program to compute the optimal dynamic policy.
Unfortunately, it is straightforward to show that finding the optimal policy is NP-hard in general.
Therefore, we also develop two heuristics: one based on a Markov approach, and another that uses the first $k$ rounds of the game to learn, followed by an exploitation phase.
We show that these heuristics perform well on a set of standard Stackelberg games adapted for the repeated Bayesian setting.
Both heuristics approximate the optimal dynamic leader policy effectively and are significantly more computationally efficient, with the First-$k$ rounds policy consistently outperforming the Markovian policy in our experiments, while also being a polynomial-time algorithm for fixed $k$.

Together, these contributions suggest that in fully strategic settings beyond the pricing problem, learning is likely both effective and feasible.
Thus, our results imply that in fully strategic settings outside of dynamic pricing, it is unnecessary to weaken the strategic assumptions on the follower to achieve positive results regarding learning, contrary to the standard approach in the literature.

\subsection{Related Work}
\label{subsec:related_work}

In this section, we discuss the relationship between this work and several distinct strands in the literature.
Given the foundational nature of learning in strategic settings, particularly around pricing, this work connects to a large body of existing work.

While our setting is more restrictive than general mechanism design due to the limited ability for the leader and follower to communicate, it is closely related to the literature on dynamic mechanism design.
The most closely related work is by \citet{balseiroFutilityDynamicsRobust2021a}, who study the general dynamic mechanism design problem with a single principal (the leader) and a single agent (the follower).
The agent receives shocks to their utility function over time, drawn from an unknown distribution.
The principal aims to maximize their \emph{worst-case} utility for all distributions within a specified set.
In contrast, our setting focuses on maximizing the principal's average utility given a prior distribution over agent types.
Interestingly, our results differ from those of \citet{balseiroFutilityDynamicsRobust2021a}.
Specifically, they show that, under two assumptions, the optimal static mechanism achieves the minimax utility for the principal, whereas we demonstrate that dynamic policies increase average utility for most dynamic Bayesian Stackelberg games.
Our setting satisfies their two assumptions, suggesting that the fundamental difference lies in the maximin versus expected utility formulations.
This implies that, in settings with a prior over agent types, learning is likely effective, while in settings with a maximin objective—often used for robustness—learning is generally not effective.

Our setting, if viewed as a restricted dynamic mechanism design problem, is also related to \citet{pavan2014dynamic}.
The No Learning Theorem can be seen as a direct consequence of the characterization of the optimal mechanism given by \citet{pavan2014dynamic}, although their work does not address the likelihood that learning is generally effective.
Similar results to the No Learning Theorem have been independently identified in the literature, going back to at least \citet{baron1984regulation}.
\citet{baron1984regulation} studied a two-period interaction between a regulator and a regulated firm, where the agent’s type in the second round evolves from the first round.
They show that the multi-period problem reduces to a static problem when the agent type remains the same across periods, consistent with the insights from the No Learning Theorem.
Other work, such as \citet{courty2000sequential}, has studied the optimal mechanism for a two-period ticket-selling problem.
In contrast to our work, their setting involves an agent who only knows the distribution of their valuations in the first period and learns their actual valuations in the second period.
In our setting, the agent knows their type from the beginning, and it remains constant throughout all rounds.
\citet{battaglini2005long} examined the problem of characterizing the optimal contract between a firm and a long-term customer.
They show that when the customer's type is not constant, the optimal contract is not static.
In our setting, we look at the case where the follower's type is constant, and the optimal policy is, in general, not static.
Overall, our work primarily differs in that we focus on the prevalence of learning in strategic settings, while the previously mentioned works either broadly characterize optimal mechanisms \citep{pavanDynamicMechanismDesign2014a} or investigate the possibility of learning in specific instances \citep{baron1984regulation,courty2000sequential,battaglini2005long}.

In this work, we assume that the leader can fully commit to their strategy, while several works in the dynamic mechanism design literature consider cases where the leader cannot commit.
When the leader cannot commit, it leads to the well-known "ratchet effect" (see, for example, \citet{freixas1985planning,laffont1988dynamics}).
In this scenario, the ability to learn is diminished due to the inability to credibly commit to respond appropriately to any revealed information.
This is precisely what \citet{freixas1985planning,laffont1988dynamics} identify in a two-stage interaction between a principal and an agent, where the agent is incentivized to underproduce to avoid a demanding production schedule in future rounds.
We are interested in the possibility of learning in settings with full strategic power, so we consider the ability to commit as a reasonable assumption.
This assumption is also widely used in the literature on dynamic principal-agent problems in both mechanism design \citep{kakade2013optimal,pavan2014dynamic} and Stackelberg games \citep{li2017review,lauffer2022no}.

\subsubsection{Learning in Stackelberg Games}
\label{subsubsec:learning_in_stackelberg_games}

In addition to the broader literature on dynamic mechanism design, there has been significant interest in learning in the narrower space of Stackelberg games.
When the follower's payoff information is unknown, \citet{letchford2009learning,balcan2014learning,peng2019learning,haghtalab2024calibrated,han2024learning} propose various policies that allow a leader to learn the follower's utility by observing the follower's response to a particular leader's strategy.
Notably, these policies can learn the optimal leader strategy efficiently with a polynomial number of learning rounds, but
these papers all assume that the follower behaves myopically, i.e. the follower always best responds to the posted leader strategy without regard for future rounds.
Therefore, they are not characterizing the possiblity of learning in strategic settings.

Recently, \citet{haghtalab2022learning} studied learning in Stackelberg games with non-myopic agents and proposed a no-regret learning policy for the leader.
However, their work relies on the assumption that the follower discounts the future utility at a greater rate than the leader, a weakening of the strategic assumptions on the follower.
In our work, we assume symmetric discounting.
In a slightly different direction, \citet{deng2019strategizing} studies dynamic policy design in a repeated setting when the follower's payoff information is public but instead of best responding the follower follows a no-regret learning algorithm to interact with the leader.
While this can be construed as a certain kind of limited strategic behavior, it is very different than the fully strategic setting we are studying.

In \cite{ananthakrishnanKnowledgePowerIm2024}, they consider the case of a ``meta game'' between two players who play an unknown, but drawn from a known distribution, Stackelberg game.
They examine cases under which the players, who may be differentially informed, deploy a multi-round strategy as an action in the meta-game.
They show, consistent with the No Learning Theorem, that strategic interactions between these uninformed players cannot in all cases learn, and exploit the information learned, about the game in order to achieve their optimal Stackelberg strategy.
In contrast, we conduct an average case analysis where we show that it is indeed possible for an uninformed leader to learn effectively against a fully informed follower in the majority of games.

\subsubsection{Learning Optimal Prices}
\label{subsubsec:learning_optimal_prices}

The motivation for this work was the No Learning Theorem in dynamic pricing, and therefore the literature on learning optimal prices is closely related.
Given the breadth of this literature we will be restricting our discussion to work on learning in strategic settings.
Given that the No Learning Theorem places a strict impossibility for the fully strategic setting, work has generally been focused on the relaxation of some aspect of the strategic capacity of the buyer.
When the buyer discounts the future at a greater rate than the seller, there are positive results \citep{amin2013learning, amin2014repeated,mohri2014optimal,mohri2015revenue,golrezaeiDynamicIncentiveAwareLearning2021a}, e.g. there exists a no-regret learning policy to learn the optimal price.
Other work has focused on relaxing the commitment assumption in pricing \citep{immorlica2017repeated} or restricting the strategy space in some way \citep{dawkins2021limits,dawkins2022first}.
In settings with restricted commitment or strategy spaces, learning is generally possible.
However, in contrast to this literature we are interested in settings beyond the standard pricing problem, and we demonstrate that even in settings with full strategic power and symmetric discount rates, deviations from the classic pricing problem lead to the ability to learn effectively.

\subsubsection{Other Related Work}
\label{subsubsec:other_related_work}

A more recent literature has developed around algorithmic contract design problems. 
Most of this literature has focused on the computational issues of contract design \citep{dutting2019simple,alon2021contracts,castiglioni2022designing,castiglioni2021bayesian}. 
In recent work, the sample complexity of online contract design has been explored by \cite{ho2014adaptive,zhu2022sample}.
\cite{cohen2022learning} further extend the analysis to a specific scenario where the agent's utility function exhibits bounded risk aversion.
However, this learning does not happen in a strategic setting, which is our main focus.


Recent work has also considered the computation \citep{vorobeychik2012computing,bensoussan2015maximum,chang2015leader,goktas2022zero,vu2022stackelberg} of Stackelberg equilibria in stochastic games (see \cite{shapley1953stochastic} for a definition of stochastic games).  
However, in this literature, there is generally no direct information asymmetry.
While play proceeds stochastically, the leader is not directly learning from the actions of the follower, distinct from our setting where the information asymmetry is core to the problem.

Finally, our setting is reminiscent of a traditional bandit problem \citep{slivkins2019introduction}. 
However, the key difference between the traditional bandit problem and our setting is that the arm does not behave strategically. 
The strategic consideration is core to our focus.
Therefore, traditional no-regret learning results do not apply in our setting, and in fact, the No Learning Theorem states that for at least some settings (such as the dynamic pricing problem) the regret is unbounded.

\section{Preliminaries and Problem Setup}
\label{sec:preliminaries_and_problem_setup}

In this section, we introduce notation and we formally define the notion of a dynamic Bayesian Stackelberg game along with the optimal solution concept.

\subsection{Stackelberg Games}
\label{subsed:stackelberg_games}
We consider a Stackelberg game played by two players, who are referred to as the \textit{leader (she)} and the \textit{follower (he)}. Each player has a finite action set, and we denote the leader's action set as $[m] = \{1,\cdots m\}$ and the follower's action set as $[n] = \{1,\cdots, n\}$. The leader's utilities are described by a matrix $R \in \mathbb{R}^{m \times n}$, in which $R_{i,j}$ is the leader utility when she plays action $i \in [m]$ and the follower responds with action $j \in [n]$. Similarly, the follower's utility matrix is denoted by $C \in\mathbb{R}^{m \times n}$.
We denote the Stackelberg game as $\{R, C\}$.
Without loss of generality, we assume $R_{i,j}, C_{i,j} \in [0,1], \text{ for all } i, j$ in all statements of theorems.
However, for example, we may allow $R_{i,j}, C_{i,j} \in \mathbb{R}$ for the sake of expositional clarity.

In a single round (i.e. static) Stackelberg game \citep{stackelberg1934marktform}, the leader moves first by committing to a \emph{mixed strategy} $\bx \in \Delta^m$, where $\Delta^{m} = \{\bvec{x}: \sum_{i\in[m]} x_i = 1 \text{ and } 0\leq x_i \leq 1\}$ is the $m$-dimensional simplex and each $x_i$ denotes the probability the leader plays action $i$. 
After observing the leader strategy $\bx$, the follower responds by playing some action   $j$ which leads to expected follower utility   $V(\bx, j) = \sum_{i \in [m]}   C_{i,j} x_i$. A rational follower will pick the optimal action $j^*(\bx) = \arg\max_{j  \in [n]} V (\bx, j)$ to maximize his own utility.\endnote{Since the follower moves after the leader, there is no need for the follower to randomize his strategy. If there are ties, we assume the follower breaks the tie in favor of the leader, which is without loss of generality \citep{von2004leadership}.}  
The leader's utility $U(\bx, j)$ is defined similarly.  When the follower's utility matrix $C$ is known to the leader, the leader can predict the follower's reaction $j^*(\bx)$ to any $\bx$. 
The rational leader will commit to $\bx^* = \arg\max_{\bx \in \Delta^m} U(\bx, j^*(\bx))$. 
This bi-level optimization problem can be solved in $\poly(m,n)$  time via linear programming \citep{conitzer2006computing}, and the optimal $\bx^*$ is the \emph{Strong Stackelberg Equilibrium} ($\SSE$). 

\subsubsection{Bayesian Stackelberg Games}
\label{subsubsed:bayesian_stackelberg_games}

In many settings of interest, the leader may not know the follower's payoff matrix $C$. 
To capture the leader's uncertainty about follower payoffs, we follow the literature and model the uncertainty as a random \emph{follower type} $\theta \in \Theta$ which is known privately to the follower while the leader only has a prior distribution $\bmu \in \Delta^{|\Theta|}$ over the types. 
This prior distribution, $\boldsymbol{\mu}$, may be a result of either leader beliefs or may be due to observing a population of followers from which the prior can be learned.
We denote the payoff matrix of a $\theta$-type follower as $C^{\theta}$, and the best response for follower type $\theta$ to any leader strategy $\bx$  as  $j^{*\theta}(\bx) = \arg\max_{j  \in [n]} V^{\theta} (\bx, j)$. 
As a natural extension of $\SSE$ to this Bayesian setup, a leader with prior knowledge $\bmu$ will play an $\bx^*$ to maximize her expected utility, formally,  $\bx^* = \arg\max_{\bx \in \Delta^m} \sum_{\theta \in \Theta } \mu(\theta) U^{\theta} (\bx, j^{*\theta}(\bx))$, known as the Bayesian Stackelberg Equilibrium ($\BSE$). 
Unlike the $\SSE$, computing a $\BSE$ in Bayesian Stackelberg games is NP-hard \citep{conitzer2006computing}.  
For notational convenience,  we drop the leader utility's dependence on the follower type $\theta$ and simply use $R, U(\bx)$  instead of $R^{\theta}, U^{\theta}(\bx)$. All results generalize trivially.
We denote a specific Bayesian Stackelberg game as $\{R, \Theta, \{C^\theta\}_{\theta \in \Theta}, \bmu\}$.

\subsubsection{Dynamic Bayesian Stackelberg Games}
\label{subsubsec:dynamic_bayesian_stackelberg_games}

In this subsection, we formalize the concept of a \emph{dynamic} Bayesian Stackelberg game. 
Specifically, dynamic Bayesian Stackelberg games generalize static (Bayesian) Stackelberg games by allowing repeated leader-follower interactions where the agent responds non-myopically to maximize their cumulative utility throughout the repeated interactions. 
Formally, a Bayesian Stackelberg game, $\{R, \Theta, \{C^\theta\}_{\theta \in \Theta}, \bmu\}$, is played repeatedly for $T$ rounds. 
The follower has a fixed private type $\theta$ drawn from $\bmu$.
We denote a specific dynamic Bayesian Stackelberg game as $\{R, \Theta, \{C^\theta\}_{\theta \in \Theta}, \bmu, T\}$.
The leader plays a \emph{Dynamic Bayesian Stackelberg Policy}\endnote{Following conventions in sequential decision making, we use ``policy'' to denote a dynamic scheme that maps history to a strategy, whereas ``strategy'' is only used for one-round interactions.} (\texttt{DSP}) $\pi$  which specifies a leader strategy $\bx^t  = \pi(\bj_{t-1}) \in \Delta^m$ at each round $t$, where $\bj_{t-1} = (j_1, \cdots, j_{t-1})$\endnote{We use bold $\bj_t$ to denote a vector throughout the paper, where the subscript $t$ represents the vector's length.} is the follower's past responses. The leader commits to a \texttt{DSP}  $\pi$ before the game starts and the follower observes the policy in advance.
This commitment assumption is common in the literature, e.g. the dynamic pricing problem \citep{devanur2014perfect,vanunts2019optimal}, dynamic mechanism design \citep{amin2013learning,pavan2014dynamic,mirrokni2020non}  and Stackelberg security games \citep{sinha2018stackelberg}.
We call the optimal \texttt{DSP} the \emph{Dynamic Bayesian Stackelberg Equilibrium} ($\DSE$).

\subsubsection{Learning in Dynamic Bayesian Stackelberg Games}
\label{subsubsed:learning_in_dynamic_bayesian_stackelberg_games}

The leader in a dynamic Bayesian Stackelberg game is not primarily concerned with identifying the follower type, as in  traditional learning paradigms.
Instead, the leader seeks to maximize her utility over the rounds of the game. However, to do this, the leader may take advantage of the follower's \emph{revealed preferences} \citep{beigman2006learning,roth2016watch}, so that she can distinguish the follower type hence tailor her strategies accordingly.  
As a byproduct of such optimization, a dynamic policy $\pi$ may learn  the follower type.
From this perspective,    $\DSE$ as the optimal dynamic Stackelberg policy is a concept that combines learning with utility optimization. 
Formally, we define effective learning as follows:
\begin{definition}[Effective Learning]
 For a dynamic Bayesian Stackelberg game $\{R, \Theta, \{C^\theta\}_{\theta \in \Theta},$ $ \bmu, T\}$, a \texttt{DSP} strategy  $\pi$ \emph{learns effectively} if there exists $\theta,\theta' \in \Theta$ and $t \in \{2,...,T\}$ such that for follower best response histories $\bj^{*\theta}_{t-1}$ and $\bj^{*\theta'}_{t-1}$, $\pi(\bj^{*\theta}_{t-1}) \ne \pi(\bj^{*\theta'}_{t-1})$ and the leader utility realized by $\pi$ exceeds the leader utility realized by repeatedly playing the \texttt{BSE} leader strategy for $T$ rounds.
\end{definition}


Note that the $\DSE$ is, by necessity, a dynamic policy, and we assume that the leader has full commitment power.
Therefore, even with full knowledge of the follower type, the $\DSE$ should be better than the optimal static strategy due to the commitment power of the leader.
However, we only consider learning to be effective when the leader employs different strategies based on divergent best response histories.
Strategies that adapt to distinct best response histories require learning, in addition to commitment.
In our experiments, we generally do not, and cannot, distinguish between improvements relative to static policies due to commitment versus learning.
However, in Section~\ref{sec:whyDSElearns} we demonstrate that learning alone will ensure higher utility for the leader, relative to the optimal static policy, in most dynamic Bayesian Stackelberg games.

\subsubsection{Dynamic Pricing Games}
\label{section:dynamic_pricing_games}

The key motivation for this work is the \emph{No Learning Theorem} for the dynamic pricing game with a single buyer with fixed valuation, i.e. the \emph{dynamic pricing problem}.
In this section, we formally define the dynamic pricing problem.
In this game, a seller (she) repeatedly sells an item to the same buyer (he) for $T$ rounds.
The buyer has a  private value $v \in \mathbb{R}_+$ for the item, which is drawn from some prior distribution $\bmu$ before the game starts and is fixed throughout the game. 
The seller knows the prior distribution $\bmu$ and can post a price $p_t$ at each round $t$; the buyer responds with $j_t \in \{0, 1\}$, indicating accepting the price and buying the item ($j_t = 1$) or not. 
The buyers value is $j_t \cdot(v - p_t)$ for round $t$.
Before this game starts, the seller commits to a \emph{dynamic pricing policy} $\pi$
that maps any buyer's past responses to a price at the current round $t$, i.e. $p_t = \pi(j_1, \cdots, j_{t-1} )$. 

One potential policy is to repeatedly post the Myerson price \citep{myerson1981optimal}, $p^* = \arg \max_{p} [p (1-\text{Pr}(v\le p))]$, which maximizes the single-round revenue under seller's prior knowledge $\bmu$.
This policy would appear highly sub-optimal -- for example, if the buyer rejects the item in the first round, he will reject it in all future rounds, leading to a revenue of zero.
One might conjecture that the seller should be able to gradually learn the exact $v$ from repeatedly observing the buyer's responses, and then set the price to $v$, extracting maximum revenue in all future rounds.
However, the no learning theorem shows that, due to the buyer's strategic responses to the seller's learning, it is \emph{impossible} for the seller to achieve higher expected revenue using any other strategy \citep{vanunts2019optimal}!

\begin{theorem}[No Learning Folk Theorem for the Dynamic Pricing Game]
    The leader utility realized by the repeated $\BSE$ is identical to the leader utility realized by the $\DSE$.
\end{theorem}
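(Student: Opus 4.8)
The plan is to show that, from the buyer's viewpoint, \emph{any} dynamic pricing policy is equivalent to a static take-it-or-leave-it menu over multiple identical copies of the item, and then invoke Myerson's single-buyer characterization. Fix a dynamic pricing policy $\pi$. Since $\pi$ is deterministic, the buyer's pure strategies induce a finite collection of play-paths, and along any path the prices are pinned down by $\pi$ together with the responses; hence each path realizes an aggregate pair $(Q,P)$ with $Q=\sum_{t}j_t\in\{0,1,\dots,T\}$ the number of rounds in which the buyer purchases and $P=\sum_{t}j_t p_t$ the total payment. The key point is that the set $\mathcal{M}(\pi)$ of achievable $(Q,P)$ pairs depends only on $\pi$ and not on the buyer's value $v$. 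A patient buyer of value $v$ therefore picks the path maximizing $\sum_{t}j_t(v-p_t)=Qv-P$, so his payoff is the convex, piecewise-linear, nondecreasing function $u_\pi(v)=\max_{(Q,P)\in\mathcal{M}(\pi)}(Qv-P)\ge 0$, the last inequality because the path that never buys is always available.

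First I would extract the standard mechanism-design consequences of this reduction. Let $Q_\pi(v)$ denote the largest quantity among the buyer-optimal paths at value $v$ (this is the outcome selected under the seller-favorable tie-break, and it is a nondecreasing selection of subgradients of $u_\pi$); the envelope lemma gives $u_\pi(v)=u_\pi(v_0)+\int_{v_0}^{v}Q_\pi(s)\,ds$ with $v_0=\inf\mathrm{supp}(\bmu)$. The seller's expected revenue under $\pi$ equals $\mathbb{E}_{v\sim\bmu}\!\left[Q_\pi(v)\,v-u_\pi(v)\right]\le \mathbb{E}_{v\sim\bmu}\!\left[Q_\pi(v)\,v-\int_{v_0}^{v}Q_\pi(s)\,ds\right]$, and swapping the order of integration rewrites the right-hand side as $\mathbb{E}_{v\sim\bmu}[\,Q_\pi(v)\,\varphi(v)\,]$, where $\varphi$ is Myerson's (ironed) virtual value. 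Since $Q_\pi$ is nondecreasing and $Q_\pi/T$ takes values in $[0,1]$, this is at most $T\cdot\sup_{q}\mathbb{E}_{v\sim\bmu}[\,q(v)\,\varphi(v)\,]$ over nondecreasing $q:\mathrm{supp}(\bmu)\to[0,1]$, and by Myerson's theorem that supremum is exactly the single-round optimal revenue $\max_{p}\,p\,(1-\Pr(v\le p))$, attained by posting the Myerson price $p^{*}$. Hence $\mathrm{Rev}(\pi)\le T\cdot\max_{p}\,p\,(1-\Pr(v\le p))$ for every dynamic policy $\pi$.

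Next I would verify that the repeated-$\BSE$ policy meets this bound. When the seller posts the constant price $p^{*}$ in every round, the rounds decouple for the buyer: in each round he buys iff $v\ge p^{*}$ (breaking the tie at $v=p^{*}$ in the seller's favor), so the seller collects $T\,p^{*}$ from every type $v\ge p^{*}$ and nothing otherwise, for a total expected revenue of $T\,p^{*}\,\Pr(v\ge p^{*})=T\cdot\max_{p}\,p\,(1-\Pr(v\le p))$. Therefore no $\DSP$ achieves higher leader utility than the repeated $\BSE$; since the repeated $\BSE$ is itself a feasible $\DSP$, it is optimal, i.e.\ it is a $\DSE$, and the leader utilities realized by the repeated $\BSE$ and by the $\DSE$ coincide.

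The decoupling step and the computation for the constant-price policy are routine; the delicate part — and the main obstacle — is making Step 1 airtight, namely that the buyer's achievable $(Q,P)$ set is genuinely value-independent and that the seller-favorable buyer-optimal outcome induces a monotone allocation for which the envelope identity holds. This is essentially a from-scratch incentive-compatibility argument for the ``mechanism'' that $\pi$ implicitly defines, and one must additionally accommodate non-regular priors via ironing and discrete-support priors by replacing the integral envelope formula with its summation counterpart. If the model used a common discount factor $\delta<1$ rather than a patient buyer, the same reduction applies with $Q,P$ replaced by the discounted sums $\sum_{t}\delta^{t-1}j_t$ and $\sum_{t}\delta^{t-1}j_t p_t$ and the benchmark $T$ replaced by $\sum_{t}\delta^{t-1}$; the statement as given is the $\delta\to 1$ case.
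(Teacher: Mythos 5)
Your argument is correct in substance and reaches the bound by the same underlying idea as the paper --- reduce the dynamic policy to a static single-buyer mechanism and invoke Myerson --- but the execution is genuinely different. The paper (Appendix~A, following Proposition~A.1 of Vanunts et al.) constructs an explicit single-round auction $M_\pi$: the buyer reports $v$, the seller simulates his optimal response path against $\pi$, and then allocates the outcome of a uniformly random round at that round's price. Feasibility ($Q_M(v)\le 1$), individual rationality, and incentive compatibility all follow in a few lines directly from the definition of the optimal response path, and the revenue bound is then Myerson's theorem applied as a black box to an arbitrary feasible mechanism --- no envelope formula, virtual values, ironing, or monotonicity arguments are needed. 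You instead take the taxation-principle route: the policy defines a value-independent menu of aggregate $(Q,P)$ pairs, the seller-favorable buyer-optimal selection gives a monotone allocation, and you re-derive the revenue bound via the envelope identity and (ironed) virtual values. This buys a more self-contained and structurally transparent picture of what the dynamic policy ``is'' as a mechanism, but at the cost of redoing parts of Myerson's analysis, and the steps you yourself flag as delicate (ironing for non-regular priors, the summation version of the envelope formula for the discrete supports used in the paper's examples) are exactly the ones the paper's construction sidesteps: once you have the value-independent menu, you could instead observe that it defines an IC and IR direct mechanism with normalized allocation $Q/T\in[0,1]$ and expected revenue equal to $1/T$ times the policy's revenue, and then cite Myerson's bound on feasible mechanisms verbatim, which would collapse your Step~2 to the paper's one-line appeal and make the proof airtight without further work. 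Your verification that the constant Myerson price attains the bound matches the paper's.
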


It is important to note that in the pricing game, the seller does \emph{learn} about the buyer's type under the optimal, static pricing, policy.
Specifically, the seller will know whether or not $v \ge p^*$.
However, the seller cannot condition the pricing policy on the history of responses, so the seller does not learn \emph{effectively}.
Refer to Appendix~\ref{appendix_sec:no_learning_pricing} for a proof of the No Learning Theorem.

\section{Learning in Dynamic Bayesian Stackelberg Games}
\label{sec:whyDSElearns} 

Dynamic pricing is a special case of a general dynamic Bayesian Stackelberg game with the seller as the leader and the buyer as the follower.
Given the optimality of static pricing due to the No Learning Theorem, it becomes very natural to ask the following research question:
\begin{quote}
   {
\it Is it possible to learn effectively from a strategic follower in general dynamic Bayesian Stackelberg games?
}
\end{quote}
As the following examples illustrate, it is indeed possible to learn effectively from a strategic follower, even when the dynamic Stackelberg game is a small modification of a dynamic pricing game.

\begin{example}
    \label{example:example_1}
{\it     Let a Bayesian Stackelberg game be defined by the payoffs listed in Table \ref{table:DySS_better_MRSS}, and assume that types $0$ and $1$ are equally likely.
    As can be computed, the optimal \emph{static} policy for the leader in this game is the (\BSE) leader strategy  $\bx = (\frac{1}{3}, \frac{2}{3})$, i.e. the leader commits to a randomized strategy which plays action $i_0$ with probability $\frac{1}{3}$ and $i_1$ with probability $\frac{2}{3}$. This $\BSE$ strategy results in  an expected leader utility of $5\frac{1}{6}$. } 
    \begin{table}
    \begin{minipage}{0.31\linewidth}
        \centering
        \begin{tabular}{|c|c|c|} \hline
          $R$  &  $j_0$ & $j_1$  \\ \hline
          $i_0$   & 5 & 2  \\  \hline
          $i_1$   & 5 & 7 \\  \hline
        \end{tabular} 
    \end{minipage}
    \begin{minipage}{0.33\linewidth}
        \centering
        \begin{tabular}{|c|c|c|} \hline
          $C^0$  &  $j_0$ & $j_1$ \\  \hline
          $i_0$   & 5 & 2  \\  \hline
          $i_1$   & 4 & 2 \\  \hline
        \end{tabular} 
        \end{minipage}
        \begin{minipage}{0.33\linewidth}
        \centering
        \begin{tabular}{|c|c|c|} \hline
          $C^1$  &  $j_0$ & $j_1$  \\ \hline
          $i_0$   & 5 & 7  \\  \hline
          $i_1$   & 4 & 3 \\  \hline
        \end{tabular} 
        \end{minipage}
        \caption{Utility Matrices for Example~\ref{example:example_1}. A Bayesian Stackelberg game instance with two actions for the leader and follower. The leader's utility matrix is the first subtable $R$; the follower has two possible types, with utility matrix $C^0, C^1$ respectively, each having equal probability $0.5$. For readability, we relax the assumption that all utilities are in $[0,1]$.}
        \label{table:DySS_better_MRSS}
    \end{table}
\end{example}
         
In Example~\ref{example:example_1}, consider the simplest possible dynamic setup with \emph{two} rounds of leader-follower interactions. The optimal dynamic leader policy is to play $\bx^1=(0, 1)$ in the first round. If the follower responds with $j_0$ in the first round, the leader still plays $\bx^2 = (0,1)$ in the second round; otherwise, the leader plays  $\bx^2 = (1/2, 1/2)$. 
Follower  $C^0$ best responds by playing $j_0$ in both rounds, i.e. $\bj^{*0}_2 = (j_0, j_0)$, while $C^1$ plays $j_1$ in both rounds, i.e. $\bj^{*1}_2 = (j_1, j_1)$.
This implies that the leader plays different strategies against the two types in the second round, i.e $\pi(\bj^{*0}_{1}) \ne \pi(\bj^{*1}_{1})$; the optimal policy requires that the leader learns, and exploits knowledge of, the followers type.
Moreover, this dynamic policy results in a total leader utility of $10\frac{3}{4}$, averaged to $5 \frac{3}{8}$  per round, which is strictly larger than the static optimal leader utility of $5 \frac{1}{6}$.

The optimal dynamic policy also outperforms the leader's static strategy with complete knowledge.
Specifically, suppose the leader can observe the follower's type before the game and can play the optimal static strategy against each type (i.e. the $\SSE$), which is $\bx=(1, 0)$ against $C^0$ and $\bx=(\frac{1}{3},\frac{2}{3})$ against $C^1$.
In this case, the leader obtains expected utility $5$ against follower type $C^0$ and $5\frac{1}{3}$ against $C^1$, both of which are less than the averaged dynamic utility $5\frac{3}{8}$.
This is because dynamic strategies, with commitment, are intrinsically more powerful than static strategies.
Specifically, the leader is able to induce type $1$ to play $j_1$ in the first round, when he would prefer to play $j_0$, because the follower is rewarded in the second round with the more favorable leader strategy of $\bx^2 = (1/2, 1/2)$.
Stated another way, even if the leader knew the followers type, a dynamic policy would outperform the optimal static policy due to commitment.
In this example, both learning and commitment contribute to the increase in the leader's utility.

While Example~\ref{example:example_1} demonstrates that learning can be effective in a Bayesian Stackelberg game, this is certainly not guaranteed.
We can reformulate the dynamic pricing game as a Bayesian Stackelberg game, as Example~\ref{example:pricing_game_example} illustrates in the following, and then the No Learning Theorem implies that no dynamic policy can outperform the optimal static policy.


\begin{example}[Pricing as a Stackelberg Game]
    \label{example:pricing_game_example}
{\it     Consider the standard pricing problem where the seller sells a single item to a buyer  whose value $v$ over the item is drawn from a distribution which, as an example here, is the uniform distribution over set $  V=\{8, 35, 96\}$.
    The seller will set prices at a possible value from set $V$.
    This game is  then equivalent to a Bayesian Stackelberg game with utility matrices described in Table~\ref{table:example_pricing}.
    It is easy to compute that the optimal Myerson price, and therefore the optimal static policy, in this setting is $96$, which induces expected seller  revenue  $\mathbf{32}.$ }
    \begin{table}
    \begin{minipage}{0.24\linewidth}
        \centering
        \begin{tabular}{|c|c|c|} \hline
          $R$  &  $j_0$ & $j_1$  \\ \hline
          $i_0$   & 0 & 8 \\ \hline
          $i_1$   & 0 & 35 \\ \hline
          $i_2$   & 0 & 96 \\ \hline
        \end{tabular} 
    \end{minipage}
    \begin{minipage}{0.24\linewidth}
        \centering
        \begin{tabular}{|c|c|c|} \hline
          $C^0$  &  $j_0$ & $j_1$  \\ \hline
          $i_0$   & 0 & 0 \\ \hline
          $i_1$   & 0 & -27 \\ \hline
          $i_2$   & 0 & -88 \\ \hline
        \end{tabular} 
        \end{minipage}
    \begin{minipage}{0.24\linewidth}
        \centering
        \begin{tabular}{|c|c|c|} \hline
          $C^1$  &  $j_0$ & $j_1$  \\ \hline
          $i_0$   & 0 & 27 \\ \hline
          $i_1$   & 0 & 0\\ \hline
          $i_2$   & 0 & -61\\ \hline
        \end{tabular} 
    \end{minipage}
    \begin{minipage}{0.24\linewidth}
        \centering
        \begin{tabular}{|c|c|c|} \hline
          $C^2$  &  $j_0$ & $j_1$  \\ \hline
          $i_0$   & 0 & 88 \\ \hline
          $i_1$   & 0 & 61 \\ \hline
          $i_2$   & 0 & 0 \\ \hline
        \end{tabular} 
        \end{minipage}
    \caption{Utility Matrices for Example~\ref{example:pricing_game_example}. A pricing game example where $R$ represents the seller's utility matrix, and $C^0$, $C^1$, $C^2$ represents the buyer's utility matrices when his type is $v=8$, $v=35$, and $v=96$ correspondingly. In addition, $i_0, i_1, i_2$ represents setting a price of $8, 35, 96$, while $j_0/j_1$ represents the buyer rejects/accepts the price. For readability, we relax the assumption that all utilities are in $[0,1]$.}
    \label{table:example_pricing}
    \end{table}
\end{example}
    
It can be verified using the approach we propose in Section \ref{sec:algo} that the optimal dynamic seller policy is to set the Myerson price $96$ at every round, which is the optimal static price and is consistent with the No Learning Theorem for the pricing game.
However, we can modify the pricing game slightly so that the seller's (leader's) action space does not include prices that correspond to the buyer's (follower's) set of valuations.
This small modification to the pricing game leads to a significantly different optimal dynamic strategy.


\begin{example}[Modified Pricing Game]
    \label{example:mod_pricing_game}
 {\it    Define a modified pricing game, similar to Example~\ref{example:pricing_game_example}, where the buyers valuations are again in the set $V=\{8, 35, 96\}$
    However, assume that the seller can only offer the set of prices ${22, 40, 61}$.
    Given these restricted prices, the seller/leader's and the buyer/follower's utility matrices are defined as in Table~\ref{table:example_pricing_restricted}.} 
    
    \begin{table}
    \begin{minipage}{0.24\linewidth}
        \centering
        \begin{tabular}{|c|c|c|} \hline
          $R$  &  $j_0$ & $j_1$  \\ \hline
          $i_0$   & 0 & 22 \\ \hline
          $i_1$   & 0 & 40 \\ \hline
          $i_2$   & 0 & 61 \\ \hline
        \end{tabular} 
    \end{minipage}
    \begin{minipage}{0.24\linewidth}
        \centering
        \begin{tabular}{|c|c|c|} \hline
          $C^0$  &  $j_0$ & $j_1$  \\ \hline
          $i_0$   & 0 & -14 \\ \hline
          $i_1$   & 0 & -32 \\ \hline
          $i_2$   & 0 & -53 \\ \hline
        \end{tabular} 
        \end{minipage}
    \begin{minipage}{0.24\linewidth}
        \centering
        \begin{tabular}{|c|c|c|} \hline
          $C^1$  &  $j_0$ & $j_1$  \\ \hline
          $i_0$   & 0 & 13 \\ \hline
          $i_1$   & 0 & -5 \\ \hline
          $i_2$   & 0 & -26 \\ \hline
        \end{tabular} 
        \end{minipage}
    \begin{minipage}{0.24\linewidth}
        \centering
        \begin{tabular}{|c|c|c|} \hline
          $C^2$  &  $j_0$ & $j_1$  \\ \hline
          $i_0$   & 0 & 74 \\ \hline
          $i_1$   & 0 & 56 \\ \hline
          $i_2$   & 0 & 35 \\ \hline
        \end{tabular} 
    \end{minipage}
    \caption{Utility Matrices for Example~\ref{example:mod_pricing_game}. A pricing game example where $R$ represents the 
    leader's utility matrix, and $C^0$, $C^1$, $C^2$ represents 
    the follower's utility matrices when his type is $v=8$, 
    $v=35$, and $v=96$ correspondingly. In addition, $i_0, i_1, i_2$ represents setting a price of \textit{22, 40, 61} 
    (note the difference from table \ref{table:example_pricing}), while $j_0/j_1$ represents the 
    follower rejects/accepts the price. For readability, we 
    relax the assumption that all utilities are in [0,1].}
    \label{table:example_pricing_restricted}
    \end{table}
\end{example}


Interestingly, with this small modification to the pricing game, a dynamic policy indeed outperforms the optimal static policy. The optimal static leader policy can be computed as $\left(\frac{2}{3},0, \frac{1}{3}\right)$, i.e. setting an expected price of $35$, which gives the leader an expected average utility of $\mathbf{23\frac{1}{3}}$. 
On the other hand, the optimal dynamic policy is to play a mixed strategy $(0, 0, 1)$ (i.e. set a price of $61$) in the first round. If the price is rejected, the leader switches to the mixed strategy of $\left(\frac{2}{3},0, \frac{1}{3}\right)$ (i.e. set a price of $35$); the leader keeps the same price of $61$ if the price is accepted in the first round. 
As a result, Follower $C^0$ with private value $8$ rejects for two rounds; Follower $C^2$ with private value $96$ accepts the price for two rounds; Follower $C^1$ rejects the price in the first round and then accepts in the second round. 
This dynamic policy gives the leader an expected average utility of $\mathbf{26\frac{1}{6}}$, showing that the optimal dynamic policy outperforms the optimal static policy. 
 
This example also provides some intuition about one question an attentive reader may ask, ``Given that the allocation rule gives the seller more power than the leader in a general Stackelberg game, why is the seller less able to learn than in the general Stackelberg game?''
Notably, in the above example, the optimal dynamic utility is still worse than Myerson's revenue of $\mathbf{32}$ when the seller has an available price of $96$.
Therefore, even though the no learning theorem breaks down in the Stackelberg game (Table \ref{table:example_pricing_restricted}) in the sense that the optimal dynamic policy outperforms the optimal static policy, it does not outperform the Myerson revenue.

Additionally, Example~\ref{example:mod_pricing_game} does not rely on commitment directly to improve the seller's utility.
In this case, the seller offers the optimal (restricted) price assuming that the buyer is the highest type, and if the buyer does not buy, she offers the optimal price for the middle type.

\subsection{Sufficient Condition for Learning in Dynamic Bayesian Stackelberg Games}
\label{subsec:sufficient_condition_learning_random_games}

In the preceding examples, we have demonstrated that learning can be effective in dynamic Bayesian Stackelberg games.
However, the question remains, ``Is learning generally effective for dynamic Bayesian Stackelberg games?''
In this section and the next, we formally show that, yes, learning is generally effective for dynamic Bayesian Stackelberg games in contrast to the No Learning Theorem.
In order to accomplish this, we characterize a sufficient condition to ensure that learning is effective.

Interestingly, the proof, see Appendix~\ref{subappend:sufficient_condition_learning_random_games}, demonstrating that the condition given in Assumption~\ref{assumption:learnable_subgroup} is indeed sufficient is constructive, and it purely relies on learning, not commitment, to improve relative to the optimal static policy.
Specifically, the strategy that we will employ will be to identify a subset of types to target for the dynamic policy.
Starting from the \texttt{BSE}, the leader chooses to deviate to a strategy that improves relative to the \texttt{BSE} in the last round if and only if the agent responds consistently as the chosen subset.
All types of agents are incentivized to respond with their $\BSE$ best response in each round due to it being costly to emulate another type.
We show that given some assumptions, this policy strictly improves relative to the \texttt{BSE}.

Again, this constructed policy switches from a \texttt{BSE} for the whole group to a \texttt{BSE} for a subset of the types.
In both cases, the commitment power does not increase the utility relative to purely learning.
Stated differently, the leader switches to an optimal static strategy after being sufficiently convinced that the follower's type belongs to a subset of the types, though she does commit to only exploiting this information in the final round which is essential to prevent other types pooling with this sub-group of types.
Therefore, commitment is essential to the learning strategy, but learning is sufficient to ensure a higher utility for the leader.

We will make use of the notion of a sub-group \texttt{BSE}, defined as follows:
\begin{definition}[Sub-group \texttt{\BSE}]\label{def:sub-group_BSE}
    Given a Bayesian Stackelberg game $\{R, \Theta, \{C^\theta\}_{\theta \in \Theta}, \bmu\}$, consider the sub-group of types $\Theta' \subset \Theta$ and re-normalized distribution $\mu'(\theta) = \frac{\mu(\theta)}{\sum_{\theta' \in \Theta'} \mu(\theta)}$ for $\theta \in \Theta'$.
    The \emph{sub-group \texttt{BSE}} for $\Theta'$, denoted $\texttt{BSE}(\Theta')$, is the \texttt{BSE} for the Bayesian Stackelberg game $\{R, \Theta', \{C^\theta\}_{\theta \in \Theta'}, \bmu'\}$.
\end{definition}

Note that for a sub-group of size one, i.e. $\{\theta\}$ for some $\theta \in \Theta$, $\texttt{BSE}(\{\theta\})$ would correspond to the \texttt{SSE} for type $\theta$.
Additionally, we will be concerned with both the sets of best responses for a given set of types assuming a certain leader strategy as well as the set of static leader strategies that induce a certain best response for each type.
We define the notation for these sets as follows:

\begin{definition}[Best Response Set]\label{def:best-response-set}
For a given leader strategy $\x$, the \emph{set of best 
responses} for a sub-group $\Theta' \subset \Theta$ is all 
follower actions that achieve maximal utility for some 
follower in subset $\Theta'$, i.e., 
\begin{align*}   
    \texttt{BR}(\Theta', \x) = \Big\{j^* \in [n] \,\, \big| \,\,\exists \theta \in \Theta'  
    \textnormal{ such that } j^* \in \argmax_{j\in [n]} V^\theta(\x, j)\Big\}.
\end{align*}
Similarly, for a given type $\theta \in \Theta$ and action $j\in [n]$, we define the region of the leader strategy space for which $j$ is the best response for type $\theta$ as $\texttt{BR}_\theta(j)$, i.e.
\begin{equation*}
    \texttt{BR}_\theta(j) = \left\{\x \in \Delta^m \mid j \in \texttt{BR}(\theta, \x) \right\}.
\end{equation*}
\end{definition}

Now we can state our sufficient condition to ensure that learning is effective for a dynamic Bayesian Stackelberg game.

\begin{assumption}[Existence of Learnable Sub-group]\label{assumption:learnable_subgroup}
    Given \texttt{BSE} leader strategy $\boldsymbol{x}^*$, there exists a sub-group $\Theta' \subset \Theta$ such that $\BSE(\Theta') \ne \x^*$ and $\texttt{BR}(\Theta', \boldsymbol{x}^*) \cap \texttt{BR}(\Theta \setminus \Theta', \boldsymbol{x}^*) = \emptyset$.
\end{assumption}

As we will show in Theorem~\ref{thm:effective_learning_thm}, Assumption~\ref{assumption:learnable_subgroup} is a sufficient condition to ensure that there exists a policy that learns effectively.
Assumption~\ref{assumption:learnable_subgroup} states that there exists a subgroup whose set of best responses is entirely disjoint from the rest of the follower types.
Unsurprisingly, the pricing game does not satisfy Assumption~\ref{assumption:learnable_subgroup}.
In the pricing game, at the $\BSE$, there is always one type that is indifferent between purchasing and not purchasing.
Therefore, there is no best response set that is disjoint from the other types.

It may seem a-priori reasonable that a strategic leader should be able to induce a subgroup with distinct best responses to eventually reveal their type.
However, it is not obvious that this could be done in such a manner that the leader is made better off versus the optimal static strategy, which we will show is indeed the case.
More importantly, given the precise condition of Assumption~\ref{assumption:learnable_subgroup} that ensures that effective learning is possible, we can leverage the assumption to show that for random games, according to a certain definition of random, learning is very likely to be effective, as we will demonstrate in Section~\ref{subsec:learning_random_games}. 
This strongly suggests that we should not expect a No Learning Theorem type result to hold for an arbitrary class of Bayesian Stackelberg games.
Instead,   insights from the No Learning Theorem should   be viewed as narrowly applicable to the dynamic pricing game.

Finally, Assumption~\ref{assumption:learnable_subgroup} is a sufficient condition, not a necessary condition.
For example, the pricing game in Example~\ref{example:mod_pricing_game} does not satisfy Assumption~\ref{assumption:learnable_subgroup}.
 Therefore, any estimation of the proportion of games for which Assumption~\ref{assumption:learnable_subgroup} holds will be an upper bound on the true number of games for which learning is effective. The proof of the theorem can be found in Appendix~\ref{subappend:sufficient_condition_learning_random_games}.

\begin{theorem}\label{thm:effective_learning_thm}
    Given a Bayesian Stackelberg game $\{R, \Theta, \{C^\theta\}_{\theta \in \Theta}, \bmu\}$ that satisfies assumption \ref{assumption:learnable_subgroup}, there exists a $T^*$ such that for all $T \ge T^*$, the dynamic Bayesian Stackelberg game $\{R, \Theta, \{C^\theta\}_{\theta \in \Theta}, \bmu, T\}$ admits a $\texttt{DSP}$, $\pi$, that learns effectively.
\end{theorem}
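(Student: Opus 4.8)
The plan is to construct an explicit \texttt{DSP} that targets the learnable sub-group $\Theta'$ guaranteed by Assumption~\ref{assumption:learnable_subgroup}, and to show that for $T$ large enough this policy strictly beats the repeated \texttt{BSE}. The policy I have in mind plays the \texttt{BSE} leader strategy $\x^*$ in all rounds $1,\dots,T-1$; then in the final round $T$, it plays $\texttt{BSE}(\Theta')$ if the observed history of follower responses $\bj_{T-1}$ lies entirely inside $\texttt{BR}(\Theta',\x^*)$ (i.e., every response belongs to that best-response set), and plays $\x^*$ otherwise. The disjointness clause $\texttt{BR}(\Theta',\x^*)\cap\texttt{BR}(\Theta\setminus\Theta',\x^*)=\emptyset$ is exactly what makes this trigger cleanly separate the two groups: a type in $\Theta\setminus\Theta'$, best-responding to $\x^*$, produces a response outside $\texttt{BR}(\Theta',\x^*)$, so it cannot accidentally (or cheaply) land in the triggering event.

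The next step is the incentive analysis: I must check that, given this committed policy, it is still a best response for \emph{every} follower type $\theta$ to play its myopic \texttt{BSE} best response $j^{*\theta}(\x^*)$ in each of the first $T-1$ rounds. For $\theta\in\Theta'$ this is immediate — the follower's per-round play in rounds $1,\dots,T-1$ faces the fixed strategy $\x^*$, so it wants its best response there, and doing so keeps it inside the triggering set, which only helps it in round $T$ (we may need $\texttt{BSE}(\Theta')$ to be at least weakly better than $\x^*$ for the $\Theta'$-types, or at worst bound the one-round loss). For $\theta\in\Theta\setminus\Theta'$, deviating from $j^{*\theta}(\x^*)$ in some rounds to masquerade as a $\Theta'$-type would be needed to trigger $\texttt{BSE}(\Theta')$ in round $T$; but such masquerading is strictly costly in \emph{every} round it occurs (a strict best-response gap, which exists for generic payoffs and can be assumed or taken as a mild non-degeneracy condition, or handled by noting the gap is positive for at least the relevant rounds), and it buys at most a bounded one-round gain in round $T$. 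So for $T$ exceeding a threshold $T^*$ determined by the ratio of the round-$T$ gain to the per-round deviation cost, no non-$\Theta'$ type deviates. Hence every type plays its \texttt{BSE} response in rounds $1,\dots,T-1$ and the histories of the two groups are separated by round $T-1$.

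Given faithful play, the utility comparison is then straightforward: in rounds $1,\dots,T-1$ the policy earns exactly the repeated-\texttt{BSE} utility; in round $T$ it earns, conditioned on $\theta\in\Theta'$, the sub-group optimum $U^{\cdot}(\texttt{BSE}(\Theta'),\cdot)$ rather than $U^{\cdot}(\x^*,\cdot)$. Because $\texttt{BSE}(\Theta')\neq\x^*$ and $\texttt{BSE}(\Theta')$ is \emph{by definition} the expected-utility maximizer over the re-normalized $\Theta'$-distribution against best responses, the round-$T$ conditional leader utility on $\Theta'$ is at least that of $\x^*$ — and strictly greater unless $\x^*$ already happened to be optimal for the sub-group, which the hypothesis $\texttt{BSE}(\Theta')\neq\x^*$ combined with uniqueness/genericity rules out (or: if it were not strict, $\x^*$ would be a sub-group \texttt{BSE}, contradicting $\texttt{BSE}(\Theta')\neq\x^*$ when the \texttt{BSE} is unique; otherwise pick the sub-group whose \texttt{BSE} strictly differs in value). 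On the complementary event the round-$T$ utility equals the \texttt{BSE} value. Averaging over $\bmu$, the total leader utility strictly exceeds the repeated-\texttt{BSE} utility, and since $\pi(\bj^{*\theta}_{T-1})=\texttt{BSE}(\Theta')\neq\x^*=\pi(\bj^{*\theta'}_{T-1})$ for $\theta\in\Theta'$, $\theta'\notin\Theta'$, the policy also satisfies the history-dependence clause of the Effective Learning definition.

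\textbf{The main obstacle} I anticipate is the incentive argument for the $\Theta\setminus\Theta'$ types: I am implicitly using that emulating a $\Theta'$-response is \emph{strictly} costly per round, which requires that at $\x^*$ no non-$\Theta'$ type is indifferent between its true best response and some response in $\texttt{BR}(\Theta',\x^*)$. This is a genericity/non-degeneracy point that the proof must either assume, derive from the disjointness condition (the disjointness is about the best-response \emph{sets}, not about the strictness of the gap), or circumvent — e.g., by perturbing $\x^*$ slightly or by choosing the round-$T$ trigger more carefully so that a single indifferent round does not suffice to trigger. Pinning down the cleanest such condition, and then computing the explicit threshold $T^*$ as (one-round gain)$/$(per-round strict-deviation cost) rounded up, is the crux; the rest is bookkeeping over the $T-1$ identical rounds plus one.
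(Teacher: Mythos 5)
Your construction is essentially the paper's own proof: play the \texttt{BSE} strategy $\x^*$ in rounds $1,\dots,T-1$, switch to $\texttt{BSE}(\Theta')$ in round $T$ exactly when every observed response lies in $\texttt{BR}(\Theta',\x^*)$, and take $T^*$ of order (bounded one-round gain)/(per-round deviation gap). The ``main obstacle'' you flag is not actually an obstacle: by Definition~\ref{def:best-response-set} the set $\texttt{BR}(\Theta\setminus\Theta',\x^*)$ already contains every action that is tied-for-optimal for some type outside $\Theta'$, so the disjointness in Assumption~\ref{assumption:learnable_subgroup} by itself gives the strict gap $V^\theta(\x^*,j^{*\theta}(\x^*))-\max_{j\in\texttt{BR}(\Theta',\x^*)}V^\theta(\x^*,j)>0$ for every $\theta\notin\Theta'$ (and symmetrically a strict gap over $j\notin\texttt{BR}(\Theta',\x^*)$ for $\theta\in\Theta'$); no genericity, perturbation, or modified trigger is needed, and the paper's $T^*$ is exactly the $1/\mathrm{gap}$-type threshold you describe. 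Your two parenthetical caveats are also the ones the paper handles: the strict round-$T$ leader improvement on $\Theta'$ is read off from $\texttt{BSE}(\Theta')\ne\x^*$ (with the same implicit uniqueness caveat you note), and the worry that a $\Theta'$-type may prefer $\x^*$ to $\texttt{BSE}(\Theta')$ in round $T$ is met by an incentive constraint in which the $(T-1)$-fold repeated gap outweighs the bounded round-$T$ difference --- though be aware that this comparison (in your sketch and in the paper's constraint alike) is against deviating in every round, whereas a single off-path response already avoids the trigger, so the truly binding requirement for $\theta\in\Theta'$ is $V^\theta(\hat{\x},j^{*\theta}(\hat{\x}))\ge\max_{j\notin\texttt{BR}(\Theta',\x^*)}V^\theta(\x^*,j)$, a point worth making explicit if you write the argument out in full.
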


\subsection{Effective Learning in Random Bayesian Stackelberg Games}
\label{subsec:learning_random_games}

While Theorem~\ref{thm:effective_learning_thm} indicates that Assumption~\ref{assumption:learnable_subgroup} is sufficient to ensure that learning is effective, the primary question is whether or not Assumption~\ref{assumption:learnable_subgroup} is reasonable.
In this section, we demonstrate that for a randomly generated Bayesian Stackelberg game the probability that effective learning is possible is increasing in the number of follower types, given some mild assumptions on the game generation process.
This suggests that effective learning being possible is, in some sense, the most likely state.
Moreover, our definition of a random Bayesian Stackelberg game only assumes randomness over the \emph{followers} payoffs.
For the leader, we simply need the condition that there is not a dominant action, i.e. an action for which, no matter the followers response, is at least as good as any other action.
Clearly, learning cannot be effective if there is a dominant action, so this demonstrates that it is sufficient for the follower's payoff matrices to be ``general'' in order to imply that learning is effective with high probability.

For this section, we will be primarily be concerned with $(m-1)$-dimensional hyperplanes that define the sets of leader strategies for which a follower of type $\theta \in \Theta$ is indifferent between two actions $i,j \in [n]$.
These hyperplanes are defined by the sets $H_{i,j}^\theta = \{\x \in \mathbb{R}^m \mid \x\cdot(C_{\cdot, i}^\theta - C_{\cdot, j}^\theta) = 0\}$, where we denote by $C^\theta_{\cdot, j}$ the column vector of utilities for the follower of type $\theta \in \Theta$ when he plays action $j$.
We will denote by $G(m,k)$ the Grassmanian of $k$-dimensional linear subspaces of $\mathbb{R}^m$.
Therefore, $H_{i,j}^\theta \in G(m,m-1)$.
Additionally, we will be working with the half spaces of leader strategies for which $i$ is (weakly) preferred to $j$, i.e. $H_{i,j}^{\theta,+} = \{\x \in \mathbb{R}^m \mid \x\cdot(C_{\cdot, i}^\theta - C_{\cdot, j}^\theta) \ge 0\}$.
Note that for the leader strategies to be valid, we must restrict the sets to the intersection with $\Delta^m$.
However, for this section, we will generally not make that explicit.
We will demonstrate the non-existence of leader strategies that satisfy certain conditions, and clearly if a leader strategy does not exist in $\mathbb{R}^m$, it does not exist in $\Delta^m \subset \mathbb{R}^m$.
We will denote a random variable using a calligraphic font, e.g. $\mathcal{H}_{i,j}^{\theta,+}$.

\begin{definition}[Random Bayesian Stackelberg Game]
    A \emph{random Bayesian Stackelberg Game} is a Stackelberg game with leader utility matrix $R$ and prior distribution $\boldsymbol{\mu}$ over a fixed set of types $\Theta$ where the follower utility matrices, $\{\mathcal{C}^\theta\}_{\theta \in \Theta}$, are independently and identically distributed according to $f$. We denote a random game by $\{R, \Theta, \{\mathcal{C}^\theta\}_{\theta\in \Theta},\boldsymbol{\mu},f\}$.
\end{definition}

However, it is clear that an arbitrary distribution $f$ will not provide any non-zero bound on the probability that the random Bayesian Stackelberg game permits effective learning since any probability distribution that strictly generates pricing games does not permit effective learning with probability $1$.
Therefore, we must put some conditions on the generating distribution in order to ensure that the resulting game is likely to permit effective learning.
Our condition is fundamentally that no particular follower action is more likely to be the best response for a given leader strategy across all follower types.
Again, this seems a natural condition, given that if a single response is likely to always be a best response, then the leader cannot benefit from learning the follower's type.

\begin{assumption}[Generic Conditions for Follower Payoffs]
    \label{assum:generic_condition}
    Consider the random Bayesian Stackelberg Game $\{R, \Theta, \{\mathcal{C}^\theta\}_{\theta\in \Theta},\boldsymbol{\mu},f\}$.
    For each $j\in[n]$, there exists a $j'\in[n]$ such that the distribution, denoted $\phi'$, of $(\mathcal{C}^\theta_{\cdot, j} - \mathcal{C}^\theta_{\cdot, j'}) \in \mathbb{R}^m$ induced by $f$, is even and assigns measure zero to a linear subspace of $\mathbb{R}^m$. I.e., for $X \subset \mathbb{R}^m$ and $-X = \{x: -x\in X\} \subset \mathbb{R}^m$, $\phi'(X) = \phi'(-X)$, and for any linear subspace $L \subset \mathbb{R}^m$, $\phi'(L) = 0$.
    
\end{assumption}

Note that $\phi'$ in Assumption~\ref{assum:generic_condition} induces a distribution, which we will denote $\phi$, over $\mathcal{H}_{i,j}^\theta$ and $\mathcal{H}_{i,j}^{\theta, +}$.
The technical result that Assumption~\ref{assum:generic_condition} provides is that it ensures that for all $j\in [n]$ the random hyperplanes $\{\mathcal{H}_{j,j'}^{\theta}\}_{\theta \in \Theta}$ are in \emph{general position} with probability 1 \citep{hugRandomConicalTessellations2016}.
A set of $k$-dimensional hyperplanes $H_1, H_2, ..., H_n \in G(m,m-1)$ are in general position if for any $k \le m$, the hyperplanes have an intersection of dimension $m-k$.
Another way to state this is to consider the unit normal vector to the hyperplane $H_i^\perp = \{x \in \mathbb{R}^m \mid x\cdot y = 0 \ \forall \ y\in H_i \ \wedge \ ||x||=1\}$.
Then $H_1, H_2, ..., H_n \in G(m,m-1)$ are in general position if for any subset of size $m$ or less of $H_1^\perp, H_2^\perp, ..., H_n^\perp \in \mathbb{R}^m$ the subset is linearly independent.

Assumption~\ref{assum:generic_condition} holds for natural distributions over the set of payoff matrices.
For example, any distribution $f$ such that the columns of the follower's payoff matrix $\mathcal{C}_{\cdot,j}^\theta$ are independent and identically distributed for all $\theta \in \Theta$ and $j \in [n]$ satisfies Assumption~\ref{assum:generic_condition}, assuming that the distribution over columns does not assign positive measure to a linear subspace of $\mathbb{R}^m$.
The second condition in Assumption~\ref{assum:generic_condition} requires that any linear subspace of $\mathbb{R}^m$ is assigned measure zero by $\phi$.
However, in all the following results, this can be relaxed for cases under which the support of the distribution is confined to a linear subspace of dimension $k \le m$ but otherwise assigns measure zero to linear subspaces of dimension $k' \le k$.
This allows for settings under which the follower's payoffs for two or more leader actions are perfectly correlated.
For the sake of exposition, we will assume that $\phi'$ does not assign a positive measure to a linear subspace.
Given these preliminaries, we can now state our main technical result.

\begin{theorem}
    \label{thm:non_intersecting_BR_regions}
    For a fixed $n$ and $m$ and a randomly generated Bayesian Stackelberg game \newline $\{R, \Theta, \{\mathcal{C}^\theta\}_{\theta\in \Theta},\boldsymbol{\mu},f\}$ that satisfies Assumption~\ref{assum:generic_condition}, $|\Theta| - \left\lceil\log_{\frac{3}{4}}\frac{1}{|\Theta|}\right\rceil > 0$, and $\frac{|\Theta|}{2}\ge m - 2$, then
    \begin{equation*}
        P\left(\bigcup_{i \in [n]}\left(\bigcap_{\theta \in\Theta} \texttt{BR}_\theta(i)\right) \ne \emptyset\right) \le O\left(\frac{1}{|\Theta|}\right)
    \end{equation*}
\end{theorem}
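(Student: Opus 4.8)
The plan is to control this probability by (i) a union bound over the $n$ follower actions, (ii) a reduction, for each fixed action $i$, of the event ``$i$ is a best response for \emph{every} type at some leader strategy'' to the event that $|\Theta|$ i.i.d.\ random vectors in $\mathbb{R}^m$ all lie in a common closed half-space, and (iii) the classical stochastic geometry of random central hyperplane arrangements (Wendel's theorem / the chamber count underlying \citet{hugRandomConicalTessellations2016}) to estimate that probability. Throughout, $m$ and $n$ are fixed and $|\Theta|\to\infty$, so constants hidden in $O(\cdot)$ may depend on $m,n$.

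For step (ii), fix $i\in[n]$ and let $i'\in[n]$ be the action guaranteed by Assumption~\ref{assum:generic_condition} for $j=i$, so that the law $\phi'$ of $\v^\theta:=\mathcal{C}^\theta_{\cdot,i}-\mathcal{C}^\theta_{\cdot,i'}$ is even and assigns measure zero to every linear subspace of $\mathbb{R}^m$; since the $\{\mathcal{C}^\theta\}_{\theta\in\Theta}$ are i.i.d., so are the $\{\v^\theta\}_{\theta\in\Theta}$. If $\x\in\bigcap_{\theta\in\Theta}\texttt{BR}_\theta(i)$ then $\x\in\Delta^m$, so $\x\neq\zero$, and since $i$ maximizes type $\theta$'s payoff at $\x$ we have $\x\cdot\v^\theta=\x\cdot C^\theta_{\cdot,i}-\x\cdot C^\theta_{\cdot,i'}\ge 0$ for all $\theta$. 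Hence $\bigcap_\theta\texttt{BR}_\theta(i)\neq\emptyset$ implies all of $\{\v^\theta\}_{\theta\in\Theta}$ lie in the closed half-space $\{\v:\v\cdot\x\ge 0\}$, equivalently $\bigcap_\theta\mathcal{H}_{i,i'}^{\theta,+}\neq\{\zero\}$; the union bound then gives
\[
P\!\left(\bigcup_{i\in[n]}\bigcap_{\theta\in\Theta}\texttt{BR}_\theta(i)\neq\emptyset\right)\ \le\ \sum_{i\in[n]}P\bigl(\{\v^\theta\}_{\theta\in\Theta}\text{ lie in a common closed half-space}\bigr),
\]
where in the $i$-th summand $\v^\theta=\mathcal{C}^\theta_{\cdot,i}-\mathcal{C}^\theta_{\cdot,i'}$.

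For step (iii), evenness of $\phi'$ together with its charging no linear subspace makes the hyperplanes $\{\mathcal{H}_{i,i'}^{\theta}\}_{\theta\in\Theta}$ almost surely central and in general position in $\mathbb{R}^m$, as recorded after Assumption~\ref{assum:generic_condition}. Conditioning on the $|\Theta|$ random lines $\mathbb{R}\v^\theta$, evenness makes the orientation of $\v^\theta$ along its line an independent fair coin, and ``$\{\v^\theta\}$ lie in a common half-space'' holds exactly for the sign patterns realized by a chamber of the central arrangement $\{(\v^\theta)^\perp\}_{\theta}$; since $|\Theta|$ hyperplanes in general position cut $\mathbb{R}^m$ into $2\sum_{k=0}^{m-1}\binom{|\Theta|-1}{k}$ chambers (Schl\"afli), we obtain (Wendel's theorem)
\[
P\bigl(\{\v^\theta\}_{\theta\in\Theta}\text{ in a common half-space}\bigr)\ =\ \frac{1}{2^{|\Theta|-1}}\sum_{k=0}^{m-1}\binom{|\Theta|-1}{k}.
\]
The hypotheses $\frac{|\Theta|}{2}\ge m-2$ and $|\Theta|-\lceil\log_{3/4}\tfrac{1}{|\Theta|}\rceil>0$ keep the indices $k\le m-1$ in the lower part of the binomial row and force $|\Theta|$ large enough that this equals $O_m\bigl(|\Theta|^{m-1}2^{-|\Theta|}\bigr)$, which for fixed $m$ is $o(1/|\Theta|)$; multiplying by the fixed $n$ yields $O(1/|\Theta|)$.

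The main obstacle is the geometry in step (iii): checking that Assumption~\ref{assum:generic_condition} truly forces general position almost surely, and then using the chamber count / Wendel's theorem in the ``conditional on the lines, signs are fair coins'' form rather than its usual i.i.d.-points form. One must also be careful that step (ii) claims only the inequality in the right direction — passing from $\bigcap_\theta\texttt{BR}_\theta(i)$ (an intersection over the simplex of all $n-1$ payoff inequalities for each type) to the single generic pair $(i,i')$ over all of $\mathbb{R}^m$ only enlarges the event, exactly as needed for an upper bound. The remaining tail bookkeeping to land on the stated $O(1/|\Theta|)$ is routine.
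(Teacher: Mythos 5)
Your proposal is correct, but it takes a genuinely different and in fact sharper route than the paper. The paper also starts from the union bound over $i\in[n]$ and the same relaxation to the single generic pair $(i,i')$ supplied by Assumption~\ref{assum:generic_condition}, but then it splits $\Theta$ into a core $\Theta'$ of size $|\Theta|-\lceil\log_{3/4}\tfrac{1}{|\Theta|}\rceil$ and a remainder: it bounds the expected conic quermassintegral of the random cone $\bigcap_{\theta'\in\Theta'}\mathcal{H}^{\theta',+}_{i,i'}$ via a generalization of Corollary 4.2 of \citet{hugRandomConicalTessellations2016}, converts this to a high-probability statement with Markov's inequality, and then uses the evenness of $\phi'$ to argue that each of the remaining $\lceil\log_{3/4}\tfrac{1}{|\Theta|}\rceil$ types independently ``kills'' the cone with probability at least roughly $1/2$, yielding a per-action bound of the form $\bigl(\tfrac{1+d\overline{p}}{2}\bigr)^{|\Theta|-|\Theta'|}+\tfrac1d=O\bigl(\tfrac{1}{|\Theta|}\bigr)$; the two side hypotheses in the theorem statement exist only to make that parameter choice legal. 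You instead recognize that the relaxed event ``all $\v^\theta=\mathcal{C}^\theta_{\cdot,i}-\mathcal{C}^\theta_{\cdot,i'}$ lie in a common closed half-space through the origin'' is exactly the classical Wendel event for $|\Theta|$ i.i.d.\ symmetric vectors in general position (which Assumption~\ref{assum:generic_condition} gives almost surely), so its probability is exactly $2^{-(|\Theta|-1)}\sum_{k=0}^{m-1}\binom{|\Theta|-1}{k}=O_m\bigl(|\Theta|^{m-1}2^{-|\Theta|}\bigr)$, and the union bound over the fixed $n$ actions already gives a bound that is $o\bigl(\tfrac{1}{|\Theta|}\bigr)$ — strictly stronger than the stated $O\bigl(\tfrac{1}{|\Theta|}\bigr)$ and obtained without the two auxiliary hypotheses (which, as you note, play no real role in your argument; your aside that they are needed to ``force $|\Theta|$ large enough'' is unnecessary). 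Both arguments ultimately rest on the same Schl\"afli chamber count $C(N,m)=2\sum_{k\le m-1}\binom{N-1}{k}$ together with the conditional fair-coin orientation of each $\v^\theta$ given its line; your packaging of it through Wendel's theorem is shorter, avoids the core/remainder decomposition and the Markov-inequality step entirely, and sidesteps some delicate bookkeeping in the paper's use of the random-Schl\"afli-cone distribution, at the cost of not exercising the generalized quermassintegral machinery the paper develops and reuses elsewhere in its appendix.
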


Please see Appendix~\ref{subappend:learning_random_games} for the full proof.
Theorem~\ref{thm:non_intersecting_BR_regions} ensures that the second condition of Assumption~\ref{assumption:learnable_subgroup} is satisfied with high probability; there exists a subgroup $\Theta' \subset \Theta$ such that $\texttt{BR}(\Theta',\x^*) \cap \texttt{BR}(\Theta \setminus \Theta',\x^*) = \emptyset$.
Given that, for all $i\in [n]$, the probability that there exists any $\x \in \Delta^m$ such that $i \in \texttt{BR}(\{\theta\}, \x)$ for all $\theta \in \Theta$ is converging to zero as $\Theta$ increases implies that for $\x^*$, the probability that the subgroup $\Theta'$ exists converges to one.

The final step is to ensure that for this subgroup the leader's optimal strategy, $\texttt{BSE}(\Theta') \ne \x^*$.
The following Corollary ensures that this is the case with high probability.
The proof is in Appendix~\ref{subappend:learning_random_games}.
However, we must additionally assume that the leader does not have a strictly dominant action, i.e. there does not exist a row of $R_{i,\cdot}$ of the leader's utility matrix $R$ that dominates every other row.
If there is such a row, learning cannot be effective since the leader should always just play action $i$.

\begin{corollary}
    \label{cor:learning_is_effective_with_high_probability}
    Let $\{R, \Theta, \{\mathcal{C}^\theta\}_{\theta\in \Theta},\boldsymbol{\mu},f\}$ be a random Bayesian Stackelberg game  that satisfies Assumption~\ref{assum:generic_condition} with $\boldsymbol{\mu}$ having full support on $\Theta$.
    Additionally, let the leader's payoff matrix $R \in \mathbb{R}^{m\times n}$ be such that there does not exist a row $R_{i,\cdot}$ of $R$ such that for all $j \in [m]\setminus {i}$, $R_{i,\cdot}$ is element-wise greater than or equal to $R_{j,\cdot}$; i.e., there does not exist a dominant action for the leader.
    Then the probability that Assumption~\ref{assumption:learnable_subgroup} is not satisfied is $O\left(\frac{1}{|\Theta|}\right)$.
\end{corollary}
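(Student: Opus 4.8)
The plan is to bound the failure probability of each of the two requirements in Assumption~\ref{assumption:learnable_subgroup} by $O(1/|\Theta|)$ and then union-bound; throughout write $\x^*$ for the $\BSE$ leader strategy of the full game. For the disjointness requirement I would invoke Theorem~\ref{thm:non_intersecting_BR_regions} directly: with probability $1-O(1/|\Theta|)$ the event $E=\{\bigcap_{\theta\in\Theta}\texttt{BR}_\theta(i)=\emptyset\ \text{ for all } i\in[n]\}$ holds, so in particular, since $\x^*\in\Delta^m$, on $E$ no single follower action lies in $\texttt{BR}(\theta,\x^*)$ for every $\theta$. Partition $\Theta$ into the connected components $\Theta_1,\dots,\Theta_r$ of the graph on $\Theta$ whose edges join $\theta,\theta'$ whenever $\texttt{BR}(\theta,\x^*)\cap\texttt{BR}(\theta',\x^*)\neq\emptyset$; then the best-response sets of distinct components are disjoint at $\x^*$, so any single component, and any union of components, is a candidate separating sub-group. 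The point needing care is that $r\ge 2$ on $E$: a single component would mean the types' best-response sets at $\x^*$ are chained together through shared actions, but because Assumption~\ref{assum:generic_condition} places the indifference hyperplanes in general position, at the vertex $\x^*$ at most $m-1$ types can be indifferent among more than one of their own actions, and when $|\Theta|$ dominates $m$ — exactly the regime Theorem~\ref{thm:non_intersecting_BR_regions} requires via $\tfrac{|\Theta|}{2}\ge m-2$ and $|\Theta|-\lceil\log_{3/4}\tfrac{1}{|\Theta|}\rceil>0$ — the chain cannot remain connected without forcing a common best response, contradicting $E$.

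The second requirement, that the separating sub-group $\Theta'$ satisfies $\BSE(\Theta')\neq\x^*$, is where the no-dominant-action hypothesis on $R$ is needed, and I expect this to be the main obstacle. I would argue by contradiction: conditioned on $E$, suppose $\BSE(\Theta_l)=\x^*$ for every component $\Theta_l$ of the partition above. Decompose the leader objective as $G=\sum_{l=1}^{r}G_l$ with $G_l(\x)=\sum_{\theta\in\Theta_l}\mu(\theta)\,U(\x,j^{*\theta}(\x))$, so each piecewise-linear $G_l$ is maximized at $\x^*$. Since at most $m-1$ types are pinned at $\x^*$, a component $\Theta_l$ containing no pinned type exists once $r$ is large (a fact one would extract from the same counting used above); its types retain fixed best responses in a full neighborhood of $\x^*$ within $\Delta^m$, so $G_l$ is linear there, and a non-constant linear function cannot be maximized at an interior point of $\Delta^m$. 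Thus either $\x^*\in\partial\Delta^m$ or $\sum_{\theta\in\Theta_l}\mu(\theta)\,R_{\cdot,\,j^{*\theta}(\x^*)}$ is parallel to $\one$, and the absence of a dominant action for the leader is exactly what excludes both except on a negligible event: it forbids any fixed pure leader strategy from being universally optimal, and it keeps the columns of $R$ spread out enough that a $\mu$-weighted combination of the randomly selected columns $\{R_{\cdot,\,j^{*\theta}(\x^*)}\}_{\theta\in\Theta_l}$ is constant across leader actions only when that selection lands in a measure-zero configuration of the follower matrices. Converting "measure-zero configuration" into the quantitative bound $O(1/|\Theta|)$ — most naturally by reusing the stochastic-geometry estimates behind Theorem~\ref{thm:non_intersecting_BR_regions} — is the technical crux.

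A union bound over the at-most-two failure events, each of probability $O(1/|\Theta|)$, then gives that Assumption~\ref{assumption:learnable_subgroup} fails with probability $O(1/|\Theta|)$, which is the claim.
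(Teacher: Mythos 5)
There are two genuine gaps in your plan, and together they mean the proposal does not yet prove the corollary. First, your claim that on the event $E$ of Theorem~\ref{thm:non_intersecting_BR_regions} the best-response-overlap graph at $\x^*$ must have $r\ge 2$ components is not justified and is false as stated: connectivity of that graph only requires pairwise-overlapping best-response sets along a chain, not a single action that is a best response for \emph{every} type, so a connected graph does not contradict $E$. For instance, with $n=2$ one type pinned at $\x^*$ (best-response set $\{j_0,j_1\}$) together with some types whose unique best response is $j_0$ and others whose unique best response is $j_1$ keeps the graph connected while no action is a common best response; your counting remark about at most $m-1$ pinned types does not rule this out. Hence the disjointness half of Assumption~\ref{assumption:learnable_subgroup} is not established by your argument. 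Second, you explicitly leave the other half — that the separating sub-group satisfies $\BSE(\Theta')\ne\x^*$ with probability $1-O\left(\frac{1}{|\Theta|}\right)$ — as ``the technical crux'': the local-linearity/gradient-parallel-to-$\one$ sketch and the appeal to ``measure-zero configurations'' are qualitative, and the role of the no-dominant-action hypothesis is asserted rather than derived, so the final union bound has nothing quantitative to union over.

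For contrast, the paper avoids both issues by a different case analysis on $\x^*$ rather than a component decomposition. If $\x^*$ is not a vertex of $\Delta^m$, then (by general position and linearity of the leader objective) $\x^*$ lies on the indifference hyperplane $H_{i,j}^{\theta^*}$ of some pinned type $\theta^*$; using Theorem~\ref{thm:non_intersecting_BR_regions} one extracts, with probability $1-O\left(\frac{1}{|\Theta|}\right)$, a sub-group $\Theta'$ whose best responses avoid those of $\theta^*$, and since $\theta^*\notin\Theta'$ the supporting hyperplane $H_{i,j}^{\theta^*}$ is absent from $\Theta'$'s arrangement, so $\BSE(\Theta')\ne\x^*$ with no further probabilistic estimate needed. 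If $\x^*$ is a vertex $\x^{i^*}$, the no-dominant-action hypothesis supplies a follower action $j'$ and leader action $i'$ with $R_{i^*,j'}\le R_{i',j'}$, and a simple bound shows that the event that some follower action is a best response at some vertex for \emph{no} type has probability at most $mn\left(\frac{n-1}{n}\right)^{|\Theta|}=O\left(e^{-|\Theta|}\right)$; on its complement a sub-group with $\texttt{BR}(\Theta')=\{j'\}$ exists and a small perturbation of $\x^{i^*}$ strictly improves, giving $\BSE(\Theta')\ne\x^*$. If you want to salvage your route, you would need a probabilistic argument that the overlap graph is disconnected (which in general requires more than event $E$) and a quantitative replacement for the ``negligible event'' step, which is precisely where the paper instead leans on the pinned-type and vertex arguments.
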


See Appendix~\ref{subappend:learning_random_games} for the detailed proof of Corollary~\ref{cor:learning_is_effective_with_high_probability}.
Corollary~\ref{cor:learning_is_effective_with_high_probability} demonstrates that for a fixed $n$ and $m$, learning is likely to be effective for games with many follower types.
We demonstrate this empirically in Table~\ref{table:assumption_justify} where we sample random Bayesian Stackelberg games drawn uniformly at random with a uniform prior.
Table~\ref{table:assumption_justify} clearly shows that as the number of follower types increases, the probability for a game to permit effective learning increases, as expected.
However, it also suggests that the likelihood of learning being effective should increase as the number of follower actions increase.
It is unclear if this is an artifact of the way in which we sample the random games or if this is a more general phenomena.
While we demonstrate that this pattern holds for the case where we sample games from a multivariate normal distribution (see Table~\ref{table:assumption_justify_2} in Appendix~\ref{subsection:effective_learing_additional_simulation_results}), we leave to future work the generality of this phenomenon.

The effect of the number of \emph{leader} actions appears to be more ambiguous.
Intuitively, it seems reasonable that as the number of leader actions increases, learning should be less likely to be effective since the leader has a larger action space from which to construct optimal static strategies.
However, the simulation results in Table~\ref{table:assumption_justify} are inconclusive as to the overall effect.

\begin{table}
    \begin{minipage}{0.24\linewidth}\resizebox{\textwidth}{!}{
    \begin{tabular}{|c|c|c|c|} 
 \hline
   $|\Theta|=2$ & $n=5$  & $n=10$  & $n=15$\\ 
 \hline
 $m=5$  & 40/100  & 56/100& 73/100\\
 \hline
 $m=10$  & 33/100  & 67/100& 66/100\\
 \hline
 $m = 15$  & 40/100   & 62/100& 68/100 \\
 \hline
\end{tabular}}\end{minipage}
    \begin{minipage}{0.24\linewidth}\resizebox{\textwidth}{!}{
    \begin{tabular}{|c|c|c|c|} 
 \hline
   $|\Theta|=3$ & $n=5$  & $n=10$  & $n=15$\\ 
 \hline
 $m=5$  & 53/100  & 85/100& 91/100\\
 \hline
 $m=10$  & 57/100  & 83/100 & 85/100\\
 \hline
 $m = 15$  & 52/100   & 79/100 & 89/100\\
 \hline
\end{tabular}}\end{minipage}
    \begin{minipage}{0.24\linewidth}\resizebox{\textwidth}{!}{
    \begin{tabular}{|c|c|c|c|} 
 \hline
   $|\Theta|=4$ & $n=5$  & $n=10$  & $n=15$\\ 
 \hline
 $m=5$  & 65/100  & 90/100& 95/100\\
 \hline
 $m=10$  & 70/100  & 85/100 & 93/100\\
 \hline
 $m = 15$  & 57/100   & 89/100 & 95/100\\
 \hline
\end{tabular}}\end{minipage}
    \begin{minipage}{0.24\linewidth}\resizebox{\textwidth}{!}{
    \begin{tabular}{|c|c|c|c|} 
 \hline
   $|\Theta|=5$ & $n=5$  & $n=10$  & $n=15$\\ 
 \hline
 $m=5$  & 71/100  & 94/100& 98/100\\
 \hline
 $m=10$  & 79/100  & 93/100& 99/100\\
 \hline
 $m = 15$  & 60/100   &  92/100 & 98/100\\
 \hline
\end{tabular}}\end{minipage}
\caption{Probability of Assumption~\ref{assumption:learnable_subgroup} being satisfied under $100$ uniform random bayesian Stackelberg game instances. To generate each instance, we sample every game parameter  $R_{i,j}$ and $C^\theta_{i,j}$  $\forall i \in [m], j\in [n], \theta \in \Theta$ independently from the uniform distribution over $[0,1]$. The prior distribution $\bmu$ is uniform over $\Theta$. Each row represents the number of leader actions, and the column represents the number of follower actions.}
\label{table:assumption_justify}
\end{table}

\section{Learning Versus Communication}\label{sec:utility-compare}

In the preceding section, we demonstrated that learning can be effective for general games, especially when there are numerous possible types and sufficient rounds to facilitate learning.
Furthermore, it is clear that a dynamic policy will always perform at least as well as any static policy in a Stackelberg game, even in scenarios where learning is not feasible.
However, a limitation of the Bayesian Stackelberg framework is that it does not allow for direct communication between the leader and the follower beyond the leader committing to a strategy.
For example, the leader cannot explicitly ask the follower to report their type, as is typically allowed in broader mechanism design settings.
This raises a natural question: is a dynamic strategy simply a substitute for directly eliciting the follower's type?
In other words, is communication between the leader and follower inherently more powerful than learning through repeated interactions?

Although the No Learning Theorem holds for the pricing game even when the leader and follower can communicate, indicating that learning in a dynamic setting generally cannot yield higher utility than direct communication with a static policy (e.g., in pricing games), we demonstrate in this section that the gap is minimal—up to an $O\left(\frac{1}{\sqrt{T}}\right)$ difference.
Specifically, we show that communication is not strictly more advantageous, and that the Dynamic Stackelberg Equilibrium (DSE) can achieve utility that is approximately optimal compared to the static setting with communication.
It is also worth noting that when learning is ineffective in a dynamic setting, it implies that direct communication also offers limited improvement to the leader's utility, again up to an $O\left(\frac{1}{\sqrt{T}}\right)$ factor.
Before presenting our main result for this section, we introduce the concept of the optimal static leader utility in a Bayesian Stackelberg game that allows for communication.






\subsection{Randomized Menu Equilibrium (\texttt{RME})}
By invoking the \emph{revelation principle} of \citet{myerson1982optimal} for general principal-agent problems,  the leader's optimal static strategy is to offer an {\it incentive compatible} (IC) \emph{menu} of randomized strategies. Such a menu can be  described by  $\langle \bm{p}, \bx \rangle =\{p_{\theta,j},\bx_{\theta, j}\}_{j\in [n], \theta \in \Theta}$. After committing to this menu, the leader asks the follower to report his type $\theta$. She then draws  $\bx_{\theta, j}$ with probability $p_{\theta, j}$ and plays $\bx_{\theta, j}$, which will induce follower type $\theta$ to best respond with action $j\in [n]$.
The computation of the optimal menu, which we coin the Randomized Menu Equilibrium ($\RME$),  has been studied recently in Stackelberg games \citep{gan2022optimal} and contract design \citep{castiglioni2022designing}.
\citet{gan2022optimal} shows that the $\RME$ can be computed in polynomial time.

As indicated, this is no longer a Bayesian Stackelberg game equilibrium since it requires both communication between the leader and the follower and that the leader must randomize over a menu of (already randomized) mixed strategies.
This implies the $\RME$ may not be applicable in many settings since: (1) leader-follower communication is not possible in many domains, such as security \citep{sinha2018stackelberg}; (2)  the randomized menu over mixed strategies may not be plausible as a commitment due to the infeasibility of verifying the two layers of randomness.  
Therefore, we treat it mainly as a strong theoretic ``optimality benchmark'' for us to compare against. The main result of this section shows that the $\DSE$ is guaranteed to achieve nearly identical utility, up to a small $O\left(\frac{1}{\sqrt{T}}\right)$ discrepancy. 
Note that we could, alternatively, consider the \textit{dynamic} $\RME$, i.e. the menu of dynamic policies, which would, by the revelation principle, be the strongest possible dynamic policy. 
However, we are concerned with identifying and comparing to the strongest \textit{static} policy that allows communication, so we restrict our attention to the static $\RME$ in this work.

Let   $U^{\DSE}$ and $U^{\RME}$ denote the leader's utility at $\DSE$ and $\RME$. Our result on the comparison between  $U^{\DSE}$ and $U^{\RME}$ hinges on a non-degeneracy assumption pertaining to a notion coined the \textit{inducibility gap}, denoted by $\delta$, a concept adopted in previous work \citep{wuinverse,gan2023robust}. 
\begin{definition}[Inducibility Gap]\label{def:gap}
The inducibility gap of a Stackelberg game with follower types $\Theta$ is the largest $\delta$ such that there exists an IC randomized menu $\{\bx_{\theta,j}, p_{\theta,j}\}_{\theta\in\Theta, j\in[n]}$ where for any follower type $\theta$: $\sum_j p_{\theta,j} V^\theta(\bx_{\theta,j}, j)  \ge \sum_j p_{\theta',j} \max_{j'} V^\theta(\bx_{\theta',j}, j') + \delta, \forall \,\theta' \ne \theta.$
\end{definition}


A positive inducibility gap simply means that every follower type $\theta\in\Theta$ can be \emph{strictly} incentivized to report truthfully by some randomized menu. This is a non-degeneracy assumption since any Bayesian Stackelberg game trivially has $\delta \geq 0$ because playing any fixed mixed strategy $\bx$, irrespective of follower types, already (weakly) incentivizes truthful reports from every follower type. A randomized menu with $\delta$ inducibility gap is called $\delta$-strictly IC.


\begin{theorem}\label{thm:infinte_T}
For any Stackelberg game  with inducibility gap $\delta$ and  $T\geq \Omega(\log^2_n |\Theta|) $, we have $\frac{U^{\DSE}}{T} \geq U^{\RME} - O\left(\sqrt{\frac{\log |\Theta|}{T\delta^2}}\right)$. Moreover, there exist  instances where  $\frac{U^{\DSE}}{T} \leq U^{\RME} - \Omega\left(\frac{1}{T}\right)$.
\end{theorem}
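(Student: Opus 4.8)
The plan is to prove the two halves of the statement separately. For the near-optimality bound I would exhibit a concrete \texttt{DSP} (a learn-then-exploit policy) that simulates the optimal randomized menu. Let $\{\bx_{\theta,j},p_{\theta,j}\}_{\theta\in\Theta,j\in[n]}$ be the $\delta$-strictly-IC menu realizing $U^{\RME}$; its existence with inducibility gap $\delta$ is exactly the hypothesis (Definition~\ref{def:gap}). Split the horizon into an \emph{elicitation phase} of length $k=\lceil\log_n|\Theta|\rceil$ and an \emph{exploitation phase} of length $T-k$. In the elicitation phase the leader posts a fixed sequence $\bx^1,\dots,\bx^k$ and reads the follower's $k$ responses as a base-$n$ string, to which a decoding map from $[n]^k$ onto $\Theta$ (possible since $n^k\ge|\Theta|$) assigns a type $\hat\theta$. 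In the exploitation phase the leader plays the $\hat\theta$-entry of the menu in derandomized form: $\bx_{\hat\theta,j}$ in a $p_{\hat\theta,j}$-fraction of the $T-k$ rounds; since $n$ is fixed the rounding wastes only $O(1)$ rounds, i.e.\ $O(1/T)$ per round. Once $\hat\theta$ is decoded the exploitation phase is a predetermined strategy sequence, so within it the follower merely best-responds round by round; when he is of type $\theta$ and the leader posts $\bx_{\theta,j}$ he responds with $j$ (by the menu's construction and leader-favorable tie-breaking), and the leader collects $\sum_j p_{\theta,j}U(\bx_{\theta,j},j)$ per exploitation round, which averages to $U^{\RME}$ over $\theta\sim\bmu$.

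The heart of the argument is incentive compatibility of the elicitation phase. A type-$\theta$ follower who plays a response string decoding to $\theta'\neq\theta$ receives, in every one of the $T-k$ exploitation rounds, at most $\sum_j p_{\theta',j}\max_{j'}V^\theta(\bx_{\theta',j},j')$, which by Definition~\ref{def:gap} is at least $\delta$ below his truthful per-round payoff — a loss of at least $\delta(T-k)$. Against this, the total he can gain by shaping his elicitation behavior is at most $k$, because all stage utilities lie in $[0,1]$ and so his elicitation payoff lies in $[0,k]$ no matter what. Hence once $\delta(T-k)>k$ — which holds for $T\ge\Omega(\log_n^2|\Theta|)$, with the hidden constant absorbing the dependence on $1/\delta$ — every type reports truthfully, and $\frac{U^{\DSE}}{T}\ge\frac{T-k}{T}U^{\RME}-O(1/T)\ge U^{\RME}-O(\log_n|\Theta|/T)$. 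A short calculation (using $\delta\le 1$ and $T\ge\Omega(\log_n^2|\Theta|)$) shows $O(\log_n|\Theta|/T)\le O\!\big(\sqrt{\log|\Theta|/(T\delta^2)}\big)$, giving the stated bound; alternatively one may take $k$ of order $\sqrt{T\log|\Theta|}/\delta$ when this exceeds $\log_n|\Theta|$.

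For the tightness claim I would exhibit a fixed two-type instance, allowing type-dependent leader matrices $R^{\theta_1},R^{\theta_2}$ (the common-$R$ case follows by the paper's trivial reduction). Let $R^{\theta_\ell}$ have a strictly dominant row $i_\ell$ with $i_1\neq i_2$, arranged so that follower type $\ell$'s best response to the pure strategy $i_\ell$ is the column maximizing row $i_\ell$ of $R^{\theta_\ell}$. Then (i) $\SSE_\ell$ equals the global maximum of $R^{\theta_\ell}$, so the per-round leader payoff against type $\ell$ never exceeds $\SSE_\ell$ under any policy, whence $U^{\DSE}\le T\sum_\ell\mu_\ell\SSE_\ell$; (ii) tuning the remaining entries of $C^{\theta_1},C^{\theta_2}$, the static menu that serves type $\ell$ with $i_\ell$ is strictly IC (so $\delta>0$) and attains $U^{\RME}=\sum_\ell\mu_\ell\SSE_\ell$, which is therefore optimal; and (iii) $\max_j R^{\theta_2}_{i_1,j}<\SSE_2$. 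Now suppose some dynamic policy has $U^{\DSE}>T U^{\RME}-c$ for a small constant $c$. Because the per-round payoff against type $\theta_1$ caps at $\SSE_1$ (its global max), the round-one strategy must be within $O(c)$ of the pure strategy $i_1$; but then its round-one payoff against $\theta_2$ is at most $\max_j R^{\theta_2}_{i_1,j}+O(c)$, which by (iii) is $\SSE_2-\Omega(1)$, and this shortfall cannot be offset since every later round against $\theta_2$ also caps at $\SSE_2$. For $c$ small this contradicts (i) and (ii), so $U^{\DSE}\le T U^{\RME}-\Omega(1)$, i.e.\ $\frac{U^{\DSE}}{T}\le U^{\RME}-\Omega(1/T)$.

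The genuinely delicate step is the elicitation analysis for the positive direction: one must simultaneously ensure that every type \emph{can} realize a distinguishing codeword (this dictates $k\ge\log_n|\Theta|$ and hence the $T\ge\Omega(\log_n^2|\Theta|)$ hypothesis), that every type \emph{prefers} to (the $\delta(T-k)$ exploitation penalty must dominate the $\le k$ elicitation slack, which pins down how large $T$ must be relative to $1/\delta$), and that nothing leaks either to the follower's freedom to deviate within the exploitation phase — neutralized by making that phase a predetermined sequence — or to the derandomization of the menu lottery — controlled by the $O(1/T)$ rounding bound. The lower-bound construction is comparatively routine once one notices that forcing each per-type game to have $\SSE_\ell=\max R^{\theta_\ell}$ makes $T\sum_\ell\mu_\ell\SSE_\ell$ a hard ceiling on $U^{\DSE}$, so that the type-independence of the leader's first move alone costs $\Omega(1)$; the only care needed there is to keep the menu strictly IC so that $\delta>0$ and the theorem's hypothesis is met.
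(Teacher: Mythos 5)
Your learn-then-exploit skeleton for the positive direction is the same as the paper's (an elicitation phase of $\lceil\log_n|\Theta|\rceil$ rounds encoding the type in base $n$, followed by a derandomized simulation of a menu, with truthfulness enforced because the exploitation-phase penalty for misreporting outweighs the at-most-$\bar t$ slack from the elicitation rounds). However, there is a genuine gap at the very first step: you take ``the $\delta$-strictly-IC menu realizing $U^{\RME}$'' as given by Definition~\ref{def:gap}. The inducibility gap only asserts that \emph{some} IC menu is $\delta$-strictly IC; the \emph{optimal} menu (the $\RME$) is in general only weakly IC, with binding incentive constraints. Your entire incentive argument — that a misreporting type loses at least $\delta$ per exploitation round — therefore has no basis when applied to the $\RME$ itself: if the simulated menu is only weakly IC, a follower loses nothing in the exploitation phase by decoding to a wrong type and can strictly gain (up to $\bar t$) in the elicitation phase, so your constructed \texttt{DSP} is not incentive compatible and the bound does not follow. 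The paper closes exactly this hole with Lemma~\ref{lem:constructing}: it mixes the $\RME$ with the $\delta$-strictly-IC menu with weight $\sqrt{\log|\Theta|/(T\delta^2)}$, obtaining a menu that is $O\bigl(\sqrt{\log|\Theta|/T}\bigr)$-strictly IC at a utility cost of $O\bigl(\sqrt{\log|\Theta|/(T\delta^2)}\bigr)$ — which is precisely where the theorem's error rate comes from (your claimed $O(\log_n|\Theta|/T)$ rate is an artifact of the false premise). Relatedly, your derandomization step also needs the strictness margin: rounding the menu probabilities to multiples of $1/(T-\bar t)$ perturbs the follower's comparison between truthful and deviating reports, and a weakly IC menu can become non-IC after this perturbation; the paper absorbs both the elicitation slack and this rounding error (via the Alth\"ofer-type sparsification in Lemma~\ref{lem:approximate_menu}) into the constructed strictness, and the sparsification also keeps the error independent of $n$, whereas your ``$n$ is fixed'' rounding bound quietly introduces $n$-dependence.

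Your lower-bound construction is essentially sound and takes a somewhat different route from the paper's. The paper uses a $2\times 2$ instance in which each follower type's response is completely determined by his type, so the first round caps at $1/2$ while the $\RME$ earns $1$ per round; notably that instance has inducibility gap $\delta=0$. Your construction, with type-dependent dominant rows $i_1\ne i_2$, $\SSE_\ell$ equal to the global maximum of $R^{\theta_\ell}$, and $\max_j R^{\theta_2}_{i_1,j}<\SSE_2$, can indeed be instantiated (e.g.\ $2\times 2$ with $C^{\theta_\ell}$ rewarding the ``matching'' column) so that the menu is strictly IC with $\delta>0$, which is arguably a nicer witness given the theorem's hypothesis; the argument also simplifies, since the type-independence of the round-one strategy directly forces a first-round loss of $\Omega(1)$ against at least one type, without the contradiction detour. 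So the second half is fine; the first half needs the mixing lemma to be repaired.
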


For a proof of Theorem~\ref{thm:infinte_T} see Appendix~\ref{sec:appendix_efficacy}.
Theorem \ref{thm:infinte_T} shows that $\DSE$ achieves equivalent utility to the \RME, up to a $O\left(\sqrt{\frac{1}{T}}\right)$ gap.
It turns out that one can easily find examples in which $\RME$ will be significantly worse --- specifically, $\Omega(1)$ worse --- than the average $\DSE$ utility, even for policies that strictly rely on learning and not commitment.
Consider the following example.

\begin{example}
    \label{example:communication_not_effective}
{\it     Define a Bayesian Stackelberg game with the utility matrices as represented in Table~\ref{table:communication_not_effective} where the probability of each follower type is $\frac{1}{2}$. }
    \begin{table}
    \begin{minipage}{0.31\linewidth}
        \centering
        \begin{tabular}{|c|c|c|} \hline
          $R$  &  $j_0$ & $j_1$  \\ \hline
          $i_0$   & 1 & 0 \\ \hline
          $i_1$   & 0 & 1 \\ \hline
        \end{tabular} 
    \end{minipage}
    \begin{minipage}{0.31\linewidth}
        \centering
        \begin{tabular}{|c|c|c|} \hline
          $C_0$  &  $j_0$ & $j_1$   \\ \hline
          $i_0$   & 0.5 & 0\\ \hline
          $i_1$   & 1 & 0\\ \hline
        \end{tabular} 
    \end{minipage}
    \begin{minipage}{0.31\linewidth}
        \centering
        \begin{tabular}{|c|c|c|} \hline
          $C_1$  &  $j_0$ & $j_1$   \\ \hline
          $i_0$   & 0 & 1\\ \hline
          $i_1$   & 0 & 0.5\\ \hline
        \end{tabular} 
    \end{minipage}
    \caption{Utility Matrices for Example~\ref{example:communication_not_effective}. A Bayesian Stackelberg game instance with two actions for the leader and follower. The leader's utility matrix is the first subtable $R$; the follower has two possible types, with utility matrix $C^0, C^1$ respectively, each having equal probability $0.5$.}
    \label{table:communication_not_effective}
    \end{table}
\end{example}

In Example~\ref{example:communication_not_effective}, the set of $\RME$'s consists of the following menus $\{1, (1,0)\}$ and $\{1, (0,1)\}$, i.e. the leader plays either action $0$ or $1$ with probability $1$.
This optimal static strategy with communication achieves a per round expected utility of $\frac{1}{2}$ or a total utility of $\frac{1}{2}T$ over $T$ rounds.
However, consider the following dynamic strategy without communication:
\begin{align}\label{eq:dynamic_policy}
    \x^1 = (1, 0);  \quad \x^t = \begin{cases} (0, 1) \, \, \text{if $\exists \, t'< t$ and $j_{t'} = j_1$} \\ (1, 0) \, \, \text{otherwise}\end{cases} \text{ for } t>1.
\end{align}
It is straightforward to verify that for type $0$ it is always a best response to play $j_0$ and for type $1$ it is always a best response to play $j_1$.
This leads to an expected first round utility of $\frac{1}{2}$, and in all future rounds, the expected utility is $1$, a total expected utility for the $T$ rounds of $T - \frac{1}{2}$.
Therefore, there is a gap of $\frac{1}{2}(T - 1)$ between the $\RME$ and this dynamic strategy.

Importantly, this increase in utility is driven by learning, not commitment.
Stated differently, if the leader knew the follower's type, the optimal dynamic policy would be a static policy.
In this example, the ability to commit does not increase the leader's utility under full knowledge.
Therefore, it is learning that is driving the increase in utility relative to the $\RME$.

\section{Computing the \texttt{DSE} Exactly and Approximately} \label{sec:algo}

In preceding sections we have demonstrated that in general there exists a dynamic policy which learns effectively for dynamic Bayesian Stackelberg games and that this learning is not merely a substitute for communication.
In this section, we develop algorithms to compute an effective dynamic policy.
However, when $T=1$, the $\DSE$ degenerates to the $\BSE$, which is known to be NP-hard \citep{conitzer2006computing}. 
Moreover, with $T>1$, even writing down a dynamic policy takes space exponential in $T$ as it has to specify a strategy for each possible sequence  of follower actions.
Thus, our focus is on developing practical algorithms for computing the $\DSE$ for small games.
Towards that end, the main result of this section is a Mixed Integer Linear Program ($\MILP$) formulation for computing the  $\DSE$. 
This $\MILP$ has $O(n^T m|\Theta|)$ continuous variables and $O(T|\Theta|)$ integer variables and thus  takes  time exponential in $T, |\Theta|$ to solve. 
The exponential time dependence  on $T, |\Theta|$ is expected since the size of the policy description is $\Omega\left(n^T\right)$ and the exponential-in-$|\Theta|$ time is due to the NP-hardness even when $T=1$.
However, deriving such a $\MILP$ is nontrivial since a naive formulation of the problem is non-linear.
The key technical challenge is deriving a $\MILP$ and demonstrating that, though the program is not strictly equivalent to a program that computes the $\DSE$, any solution of the $\MILP$ can be efficiently converted to a solution for the program that defines the $\DSE$.

The program that maximizes the leader's total utility subject to the constraint that $\bj^{\theta}_{T}$ is the optimal response sequence for follower type $\theta$ is as follows: 
\begin{subequations}
    \label{eq:U^DySS}
    \begin{align}
        &\text{maximize} \quad  \sum_{t \in [T]} \sum_{\theta \in \Theta} \Big[\mu(\theta) \, U(\bx^t_{\bj_{t-1}^\theta}, j^\theta_t) \Big] \\
        &\text{subject to} \quad  \\
        & \sum_{t \in [T]} V^\theta(\bx^t_{\bj_{t-1}^\theta}, j^\theta_t) 
        \geq \sum_{t \in [T]} V^\theta(\bx^t_{\hat{\bj}_{t-1}}, \hat{j}_t), 
       \forall \theta \in \Theta, \, \hat{\bj}_T \in [n]^T, \bj_T^\theta \in [n]^T, \forall \theta \in \Theta. 
    \end{align}
\end{subequations}
Note that in Program \eqref{eq:U^DySS}, the decision variables $\bj_{t-1}^\theta$ appear in the \emph{indices} of other decision variables, $\bx^t_{\bj_{t-1} } $, which means that  Program~\eqref{eq:U^DySS} cannot be solved by   standard solvers.
However, we can construct a $\MILP$ (see \texttt{MILP}~\eqref{eq:DySS_MILP} in Appendix~\ref{append:them1-detail}) that is solvable.

\begin{theorem}\label{thm:program_equivalent}
Program~\eqref{eq:U^DySS} and $\MILP$ \eqref{eq:DySS_MILP} have the same optimal objective value and, moreover, an optimal solution of either can be efficiently recovered from an optimal solution of the other.
\end{theorem}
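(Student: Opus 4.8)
The plan is to prove the two programs have the same optimal value by exhibiting, in each direction, an efficiently computable map that sends a feasible solution of one program to a feasible solution of the other with the \emph{same} objective value; applying these maps to optima then yields equality of optima and the claimed two-way recovery. The reason one cannot hope for a literal equivalence is the ``indexing by a variable'' pathology flagged after Program~\eqref{eq:U^DySS}: the follower's realized history $\bj^\theta_{t-1}$ selects \emph{which} strategy variable $\bx^t_{\bj^\theta_{t-1}}$ enters both the objective and the follower's best-response constraints. The construction underlying $\MILP$~\eqref{eq:DySS_MILP} removes this by (i) making the policy tree explicit, i.e. carrying one strategy $\bx^t_{\bj_{t-1}}\in\Delta^m$ for \emph{every} history $\bj_{t-1}\in[n]^{t-1}$, (ii) encoding each follower type $\theta$'s response $j^\theta_t$ at round $t$ with $O(1)$ integer variables per $(\theta,t)$, which determines $\theta$'s realized path $\bj^\theta_T$ and the associated path-selection indicators, and (iii) introducing auxiliary continuous variables standing for the products of those indicators with the policy coordinates. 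Since $U$ and every $V^\theta$ are linear in $\bx$, the only nonlinearity left after this substitution is binary-times-continuous, which is handled by standard big-$M$/McCormick inequalities; these are \emph{exact} precisely because the selector variables are forced integral.

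First I would establish $\mathrm{OPT}(\MILP)\ge\mathrm{OPT}(\text{Program})$. Take any feasible solution of Program~\eqref{eq:U^DySS}, that is a $\DSP$ $\pi$ written out as a full tree $\{\bx^t_{\bj_{t-1}}\}$ together with the response sequences $\{\bj^\theta_T\}_{\theta\in\Theta}$ it induces. Copy the tree into the $\MILP$'s policy variables, set the integer selectors to encode $\bj^\theta_T$ for each $\theta$, and set every auxiliary variable to the literal product it represents. Then all McCormick constraints hold with equality; the $\MILP$'s incentive constraints, which assert that the encoded path attains $\max_{\hat\bj_T\in[n]^T}\sum_t V^\theta(\bx^t_{\hat\bj_{t-1}},\hat j_t)$, are exactly the constraints of Program~\eqref{eq:U^DySS}; and the $\MILP$'s linearized objective evaluates to $\sum_t\sum_\theta\mu(\theta)U(\bx^t_{\bj^\theta_{t-1}},j^\theta_t)$, the Program objective. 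So the lifted point is $\MILP$-feasible with the same value.

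Next I would establish the reverse inequality together with recovery, which is where the actual work lies. From an optimal $\MILP$ solution, read off the policy tree $\pi=\{\bx^t_{\bj_{t-1}}\}$ and, from the integer selectors, the candidate sequences $\{\bj^\theta_T\}$. Because the selectors are integral, the McCormick constraints pin every auxiliary variable to the product it represents, on \emph{every} node of the tree (including histories no type ever reaches), so the $\MILP$ gains nothing from slack in the McCormick envelope; hence its objective is genuinely $\sum_t\sum_\theta\mu(\theta)U(\bx^t_{\bj^\theta_{t-1}},j^\theta_t)$ and its incentive constraints genuinely say that $\bj^\theta_T$ maximizes type $\theta$'s cumulative follower utility against $\pi$ over all $n^T$ deviation paths --- i.e. $\bj^\theta_T$ is a best-response path for type $\theta$. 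Therefore $(\pi,\{\bj^\theta_T\})$ is feasible for Program~\eqref{eq:U^DySS} with matching objective, which gives $\mathrm{OPT}(\text{Program})\ge\mathrm{OPT}(\MILP)$ and hence equality; all the maps are variable read-offs, so both recoveries are efficient.

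The main obstacle I anticipate is the treatment of ties in the reverse direction: a follower type may have several best-response paths against $\pi$, and a genuine $\DSP$ resolves such ties in the leader's favor (the convention of Section~\ref{subsed:stackelberg_games}), whereas an arbitrary $\MILP$ optimum need only pick \emph{some} maximizing path. I would argue this is harmless --- since the $\MILP$ is \emph{maximizing} the leader's objective, at an optimum it already selects, among tied paths, one best for the leader, so the recovered $(\pi,\{\bj^\theta_T\})$ coincides with the follower's leader-favorable play; and if one insists on a fixed tie-break rule, a one-step post-processing that re-routes each type to its leader-best tied path can only weakly increase the objective and preserves feasibility. The remaining items are routine: checking that the $O(T|\Theta|)$-integer encoding of the response paths still lets the path indicators, incentive inequalities, and objective be written linearly, and that the big-$M$ constants are large enough to be non-binding when a selector is ``on'' and tight when it is ``off''.
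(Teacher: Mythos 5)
Your overall route is the same as the paper's: make the history-indexed policy tree explicit, encode each type's response path with one-hot binaries plus a big-$M$ best-response constraint, linearize the resulting products with auxiliary continuous variables, lift a feasible solution of Program~\eqref{eq:U^DySS} into the \MILP{} value-preservingly, and read the policy and paths back off an optimal \MILP{} solution (the paper in fact routes through an intermediate \MIP{}~\eqref{eq:Dyss_MIP}, but that is organizational). Your worry about tie-breaking is also moot in both formulations, since the follower paths are decision variables chosen to maximize the leader's objective subject to the best-response constraints, which is exactly the leader-favorable tie-break.

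However, the key step of your reverse direction does not apply to the \MILP{} actually in the statement. You claim that, because the selectors are integral, ``the McCormick constraints pin every auxiliary variable to the product it represents.'' That would be true if the formulation contained the full McCormick envelope, but \MILP{}~\eqref{eq:DySS_MILP} deliberately imposes only the upper bounds $w^{t,\theta}_{\bj_t,i}\le x^t_{\bj_{t-1},i}$ and $w^{t,\theta}_{\bj_t,i}\le z^{t,\theta}_{\bj_t}$ in \eqref{constr:third}, with no constraint of the form $w\ge x+z-1$ (only the $z$-variables get the lower bound in \eqref{constr:first_y}). Hence even at integral $\by$ a feasible --- indeed an optimal --- solution may have $w^{t,\theta}_{\bj_t,i}$ strictly below $x^t_{\bj_{t-1},i}\prod_{t'\le t}y^\theta_{t',j_{t'}}$, e.g.\ wherever $R_{i,j_t}=0$, so the two programs are \emph{not} in value-preserving correspondence at the level of feasible points; the paper states this explicitly and this is precisely why the theorem is phrased about optimal solutions. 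The paper closes the gap with an optimality argument that your proposal is missing: since the objective maximizes $\sum\mu(\theta)R_{i,j_t}w^{t,\theta}_{\bj_t,i}$ with $R_{i,j}\in[0,1]$, any optimal solution must have $w$ at its upper bound $x$ whenever $z=1$ (otherwise raising $w$ stays feasible and cannot decrease the objective), so the \MILP{} objective coincides with the true leader utility of the recovered $(\bx,\by)$. Your argument can be repaired without changing the formulation --- note that $w\le x\cdot\prod_{t'}y$ always holds, so the \MILP{} value never exceeds the Program value of the recovered pair, which together with your forward lift gives equality of optima --- but note that both this fix and the paper's argument use nonnegativity of $R$, a point absent from your write-up because the (incorrect) ``pinning'' claim would have made it unnecessary.
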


We defer the proof details to Appendix \ref{append:them1-detail}. Though $\MILP$ \eqref{eq:DySS_MILP}  has $ n T |\Theta|$ binary integer variables $\by$, its running time does \emph{not} depend on $n$ exponentially. This is because any row of $\by^{\theta}$ (i.e. $\by^{\theta}_{t,\cdot}$) has only a single non-zero entry, thus there are in total $O(n^{T|\Theta|})$ many feasible $\by$. Given a feasible $\by$, the program will be an LP with the size of variables bounded by a polynomial in $m$ and $n$ (given constants $T, |\Theta|$). This argument leads to the following corollary of Theorem \ref{thm:program_equivalent}. 
\begin{corollary}
The exact $\DSE$ can be computed by a $\MILP$ with $O(T|\Theta|mn^T)$ continuous variables and $T|\Theta|n$ integer variables; and it can be computed in $\poly(m,n)$ time with constants $T$ and $|\Theta|$.
\end{corollary}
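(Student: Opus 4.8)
The plan is to obtain the corollary as a bookkeeping consequence of Theorem~\ref{thm:program_equivalent}: since that theorem says $\MILP$~\eqref{eq:DySS_MILP} has the same optimum as Program~\eqref{eq:U^DySS} — whose optimal solution is precisely the $\DSE$ — and that an optimum of one can be efficiently converted to an optimum of the other, it suffices to (i) read off the variable counts of $\MILP$~\eqref{eq:DySS_MILP} and (ii) bound the time needed to solve it when $T$ and $|\Theta|$ are treated as constants.

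For the variable count, the continuous variables of $\MILP$~\eqref{eq:DySS_MILP} are the leader's mixed strategies $\bx^t_{\bj_{t-1}} \in \Delta^m$, one for each round $t\in[T]$ and each history prefix $\bj_{t-1}\in[n]^{t-1}$, together with the auxiliary variables introduced to linearize the per-type best-response value terms. The number of (round, history) pairs is $\sum_{t=1}^{T} n^{t-1} = O(n^{T})$, each strategy contributes $m$ coordinates, the linearization contributes at most a factor of $|\Theta|$ (one unrolled copy per follower type), and an extra factor of $T$ is a safe over-count, giving the claimed $O(T|\Theta| m n^{T})$. The integer variables are exactly the indicators $\by^\theta_{t,j}$ for $\theta\in\Theta$, $t\in[T]$, $j\in[n]$, i.e.\ $T|\Theta|n$ of them.

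For the running time with $T,|\Theta|$ fixed, the key structural fact — already noted after Theorem~\ref{thm:program_equivalent} — is that the constraints of $\MILP$~\eqref{eq:DySS_MILP} force each row $\by^\theta_{t,\cdot}$ to be a standard basis vector, so there are at most $n^{T|\Theta|}$ feasible integer assignments. I would enumerate all of them; once $\by$ is fixed, every product of an integer variable with a continuous strategy variable becomes affine, so the residual problem is a linear program whose numbers of variables and constraints are $\poly(m,n)$ (with $T,|\Theta|$ constant), hence solvable in polynomial time by standard LP methods. Taking the best branch gives an optimum of $\MILP$~\eqref{eq:DySS_MILP} in time $n^{T|\Theta|}\cdot\poly(m,n)=\poly(m,n)$, and Theorem~\ref{thm:program_equivalent} converts it in polynomial time into the $\DSE$.

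The main obstacle is that both the precise variable-count bounds and the ``fixing $\by$ linearizes everything'' step rely on the explicit form of $\MILP$~\eqref{eq:DySS_MILP} from the appendix; the delicate points are (a) verifying that the single-nonzero-per-row structure of $\by^\theta$ is genuinely \emph{enforced} by the constraints (not merely attained at the optimum), so that the enumeration over $n^{T|\Theta|}$ assignments is exhaustive and valid, and (b) confirming that the recovery map of Theorem~\ref{thm:program_equivalent} from an $\MILP$ optimum to a $\DSE$ runs in polynomial time. Everything else is routine counting.
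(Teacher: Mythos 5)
Your proposal is correct and follows essentially the same route as the paper: the paper likewise reads the variable counts directly off $\MILP$~\eqref{eq:DySS_MILP} and observes that each row $\by^\theta_{t,\cdot}$ is forced to be one-hot by the constraints $\sum_j y^\theta_{t,j}=1$ with $\by$ binary, so there are only $O(n^{T|\Theta|})$ feasible integer assignments and each fixes the program to a $\poly(m,n)$-size LP when $T$ and $|\Theta|$ are constants. Your two flagged concerns are handled exactly as you hoped: the single-nonzero-per-row structure is genuinely enforced by those constraints (not just attained at optimality), and the recovery map of Theorem~\ref{thm:program_equivalent} simply reads off $(\bx,\by,\ba)$ from the $\MILP$ optimum, which is trivially polynomial time.
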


\subsection{Computing the \texttt{DSE}  Approximately}
In this subsection, we present two approximation algorithms for approximating the $\DSE$. 
The first \textsc{Markovian}  approach is a Markovian leader policy, where leader strategy $\x^t$ only depends on the time step $t$ and the follower response $j_{t-1}$.
We can denote the markovian leader policy in round $t$ given $j_{t-1}$ is $\x^t_{j_{t-1}} = \pi(j_{t-1}):[n] \times [T] \rightarrow \Delta^m$.\endnote{At the starting round $t=1$, there does not exist a historical follower response, so the leader plays a starting strategy. We denote it as $\x^1_{j_0}$ for notational consistency.} The specific $\MILP$ can be found in Appendix \ref{sec:approximate_programs}.


\begin{corollary}\label{thm:Markovian_policy}
    The optimal dynamic $\Markov$  policy can be computed by a $\MILP$ with a $O(T|\Theta|mn^2)$ number of continuous variables and $T|\Theta|n$ integer variables.
\end{corollary}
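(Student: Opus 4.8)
The plan is to mirror the construction behind Theorem~\ref{thm:program_equivalent} and $\MILP$~\eqref{eq:DySS_MILP}, but to exploit the $\Markov$ restriction to collapse the leader's decision variables. In the general $\DSE$ formulation a leader strategy $\bx^t_{\bj_{t-1}}$ must be specified for every history $\bj_{t-1}\in[n]^{t-1}$, which is the source of the $n^T$ blow-up. When the policy is Markovian I would instead introduce a single strategy variable $\bx^t_{j}\in\Delta^m$ for each pair $(t,j)\in[T]\times[n]$ (plus the lone start strategy $\bx^1_{j_0}$), so there are only $O(Tn)$ such simplex vectors and hence $O(Tnm)$ scalar entries. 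The objective $\sum_{t\in[T]}\sum_{\theta\in\Theta}\mu(\theta)\,U(\bx^t_{j^\theta_{t-1}},j^\theta_t)$ and the incentive-compatibility constraints of Program~\eqref{eq:U^DySS} would be rewritten with $\bx^t_{\bj_{t-1}}$ replaced by $\bx^t_{j_{t-1}}$; the deviation sequences $\hat{\bj}_T\in[n]^T$ still range over all $n^T$ options, but this affects only the constraint count, not the variable count asserted by the corollary.

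The second step is to linearize exactly as in the main proof. The obstruction is the familiar one: the index $j^\theta_{t-1}$ of the variable $\bx^t_{j^\theta_{t-1}}$ is itself a decision variable, determined by type $\theta$'s equilibrium path. I would encode that path with binary variables $\by^\theta_{t,j}\in\{0,1\}$ for $j\in[n]$, $t\in[T]$, subject to $\sum_{j}\by^\theta_{t,j}=1$ for each $(\theta,t)$ --- exactly $T|\Theta|n$ integer variables --- and then linearize the products $\by^\theta_{t,j}\cdot\bx^t_j$ together with the associated utility terms via auxiliary continuous variables and big-$M$ constraints in the standard way. Because each reward term $U(\bx^t_{j^\theta_{t-1}},j^\theta_t)$ couples the state $j^\theta_{t-1}$ (one of $n$ values), the action $j^\theta_t$ (one of $n$ values), the type $\theta$, the round $t$, and the $m$-dimensional strategy, the linearization introduces $O(T|\Theta|n^2 m)$ continuous auxiliary variables, which dominates the $O(Tnm)$ leader-strategy entries and matches the claimed bound.

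The third step is correctness, which is inherited from Theorem~\ref{thm:program_equivalent}: the restricted optimization over Markovian policies has the same optimal value as the resulting $\MILP$, and an optimal Markovian $\DSP$ is recovered from an optimal $(\by,\bx)$ by reading off $\pi(j_{t-1})=\bx^t_{j_{t-1}}$, with ties broken in the leader's favor exactly as in the static case. One then checks that the recovered object is a valid Markovian policy (immediate, since it is a function of $(t,j_{t-1})$ only) and that the $\by$ variables genuinely encode best responses (the content of the IC constraints, carried over unchanged).

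The step I expect to be the main obstacle is the incentive-compatibility encoding under the Markovian restriction. Unlike the full-history case, where a follower deviation simply moves play into an already-specified, disjoint subtree, a deviation here changes the Markov state $j_{t-1}$ and hence all subsequent leader strategies, so the assertion ``$\bj^\theta_T$ is a best response for type $\theta$'' really asserts optimality of a path in the layered $n\times T$ transition graph. Handling this cleanly requires either (i) writing the IC constraint as a quantification over all $n^T$ deviation paths --- correct but yielding exponentially many constraints, which is acceptable since the corollary bounds only the number of variables and, for fixed $\by$, the program remains a polynomial-size LP (mirroring the remark after Theorem~\ref{thm:program_equivalent}) --- or (ii) introducing value-to-go variables $W^\theta_{t,j}$ with Bellman-type constraints $W^\theta_{t,j}=\max_{j'}\{V^\theta(\bx^t_j,j')+W^\theta_{t+1,j'}\}$ and their own big-$M$ linearization. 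I would take route (i) so that the final variable counts remain $O(T|\Theta|mn^2)$ continuous and $T|\Theta|n$ integer, as stated.
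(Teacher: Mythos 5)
Your proposal is correct and follows essentially the same route as the paper: the paper's Appendix program \eqref{eq:U^Markov} is exactly your Markovian restriction of Program~\eqref{eq:U^DySS} with leader variables indexed by $(t,j_{t-1})$ and IC constraints quantified over all $n^T$ deviation paths, and the variable counts come from applying the same $\by$/auxiliary-product linearization as in Theorem~\ref{thm:program_equivalent}, now coupling only the consecutive pair $(j_{t-1},j_t)$ and hence giving $O(T|\Theta|mn^2)$ continuous and $T|\Theta|n$ integer variables. Your "route (i)" for incentive compatibility is precisely what the paper does, so no gap remains.
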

Comparing the $\DSE$, our $\Markov$ policy reduces the exponential number of \textit{continuous} variables to a polynomial size. 
However, despite this improvement, the runtime of the $\Markov$ policy still exhibits exponential growth relative to $T$, making it suboptimal for addressing dynamic leader-follower interactions within a large $T$. 
Therefore, we introduce the following second heuristic approach, \textsc{First}-$k$, a simplified variant of $\DSE$ where the leader's dynamic policy only depends on the initial $k$-rounds of interactions.
Following the initial $k$-rounds, the leader plays a static strategy for the remaining $T-k$ rounds. We defer the specific $\MILP$ to Appendix \ref{sec:approximate_programs}.

\begin{corollary}\label{thm:First_k_policy}
    The optimal dynamic \textsc{First}-$k$  policy can be computed by a $\MILP$ with a $O\big(\hat{k}|\Theta|mn^{\hat{k}}\big)$ number of continuous variables and $|\Theta|n\hat{k}$ integer variables, where $\hat{k} = \min(k+1, T)$.
\end{corollary}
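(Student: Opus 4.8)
The plan is to mirror the construction behind $\MILP$~\eqref{eq:DySS_MILP} from Theorem~\ref{thm:program_equivalent} and Corollary~\ref{thm:Markovian_policy}, but restrict the branching structure of the leader policy to depend only on the first $k$ rounds of follower responses. First I would write down the analogue of Program~\eqref{eq:U^DySS} for the \textsc{First}-$k$ class: the leader commits to a strategy $\bx^t_{\bj_{t-1}^\theta}$ for each $t \le k$ exactly as in the full $\DSE$, but for $t > k$ the leader plays a \emph{single} static strategy $\bx^{\mathrm{stat}}_{\bj_k^\theta}$ that depends only on the length-$k$ prefix $\bj_k^\theta$ (when $T \le k$ this degenerates to the full $\DSE$, which is why the bound is stated in terms of $\hat k = \min(k+1,T)$). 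The incentive-compatibility constraints are the same as in \eqref{eq:U^DySS}, namely that the cumulative follower utility along the truthful response sequence $\bj_T^\theta$ weakly dominates that along any deviation $\hat\bj_T$; crucially, because the post-$k$ strategy is a fixed mixed strategy, one can collapse the last $T-k$ rounds into a single $(T-k)$-fold repeated term, so the effective decision problem only has branching depth $\hat k$.

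Next I would apply exactly the same linearization trick used to pass from Program~\eqref{eq:U^DySS} to $\MILP$~\eqref{eq:DySS_MILP}: introduce binary indicator variables $\by^\theta_{t,j}$ for $t \in [\hat k]$ and $j \in [n]$ encoding which action follower type $\theta$ plays at round $t$ (with the row-stochastic constraint $\sum_j y^\theta_{t,j} = 1$), replace the index-valued decision variables $\bj_{t-1}^\theta$ by selections governed by $\by$, and use big-$M$ constraints to enforce that the selected strategy/utility entries are consistent with $\by$. The objective and the IC constraints then become linear in the continuous strategy variables once $\by$ is fixed. By the same argument as in Theorem~\ref{thm:program_equivalent}, any optimal solution of the resulting $\MILP$ can be efficiently converted to an optimal \textsc{First}-$k$ policy and vice versa.

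Finally I would count variables. The continuous variables are the leader strategies: one vector in $\Delta^m$ for each (type, history-prefix) pair, and since history prefixes in the first $k$ rounds form a tree of size $O(n^{\hat k})$, this gives $O(\hat k\,|\Theta|\,m\,n^{\hat k})$ continuous variables (the strategy for rounds $t>k$ contributes only one additional vector per length-$k$ leaf, absorbed into the same bound), plus $O(\hat k\,|\Theta|\,n)$ auxiliary continuous variables from the linearization. The integer variables are the $\by^\theta_{t,j}$, giving $|\Theta|\,n\,\hat k$ of them. This matches the claimed counts. The main obstacle — and the only nontrivial step — is verifying that collapsing the $T-k$ post-learning rounds into a single repeated static strategy does not weaken the IC constraints or the equivalence with the linearized $\MILP$; once that reduction is justified, the rest is a direct specialization of the already-established Theorem~\ref{thm:program_equivalent} machinery, so I would not belabor the big-$M$ bookkeeping.
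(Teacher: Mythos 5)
Your proposal is correct and takes essentially the same route as the paper: the paper's Program~\eqref{eq:U^fixed1} is exactly your \textsc{First}-$k$ analogue of Program~\eqref{eq:U^DySS}, with the post-learning rounds collapsed into a single round whose leader and follower utilities are weighted by $(T-k)$, after which the MILP and the stated variable counts follow by the same linearization as Theorem~\ref{thm:program_equivalent}. One minor bookkeeping remark: the leader-strategy variables $\bx^t_{\bj_{t-1}}$ are shared across types, and the $|\Theta|$ factor in the continuous-variable count actually comes from the linearization variables $w^{t,\theta}_{\bj_t,i}$ (indexed by full length-$t$ histories, not just by $(t,j)$ as you suggest), but this does not affect the claimed $O\big(\hat{k}|\Theta|mn^{\hat{k}}\big)$ bound.
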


The running time of \textsc{First}-$k$ only grows exponentially in $k$ and the number of follower types $|\Theta|$.
The exponential complexity in $|\Theta|$ is unavoidable due to the inherent hardness of Bayesian Stackelberg games even when $T=1$ \citep{conitzer2006computing}.
Unlike $\DSE$ and $\Markov$ policies, however, the runtime of \textsc{First}-$k$ is independent of $T$. 
As we demonstrate in the following experimental section, \textsc{First}-$k$ appears to scale well with respect to $T$ and achieves higher utility compared to $\Markov$ with a decreased running time.

\section{Experiments}\label{sec:experiment}
To examine the efficacy of the $\DSE$ and the efficiency of our proposed algorithm, we perform several experiments using Gurobi 9.5.1 solver on a machine with Ubuntu 20.04.5 LTS operating system, 2 $\times$ 18 cores 3.0 GHz processors, and 256GB RAM.

\textbf{Game of Chicken.} The \emph{game of chicken} models    situations in which two players desire a shared resource but  conflict arises when both players choose to use the resource simultaneously. 
We consider a variant of this game with a leader-follower structure and two uniformly distributed follower types   $C^0,C^1$ as given in Figure~\ref{fig:game_of_chicken}.
The first row/column action represents ``giving up the resource'' whereas the second row/column action presents ``using the resource''. Type  $C^0$ is the type as in the classic game of chicken whereas $C^1$ is an ``aggressive'' type who strongly prefers using the resource.

\begin{figure}[tbh]
    \centering
    \includegraphics[width=0.75\linewidth]{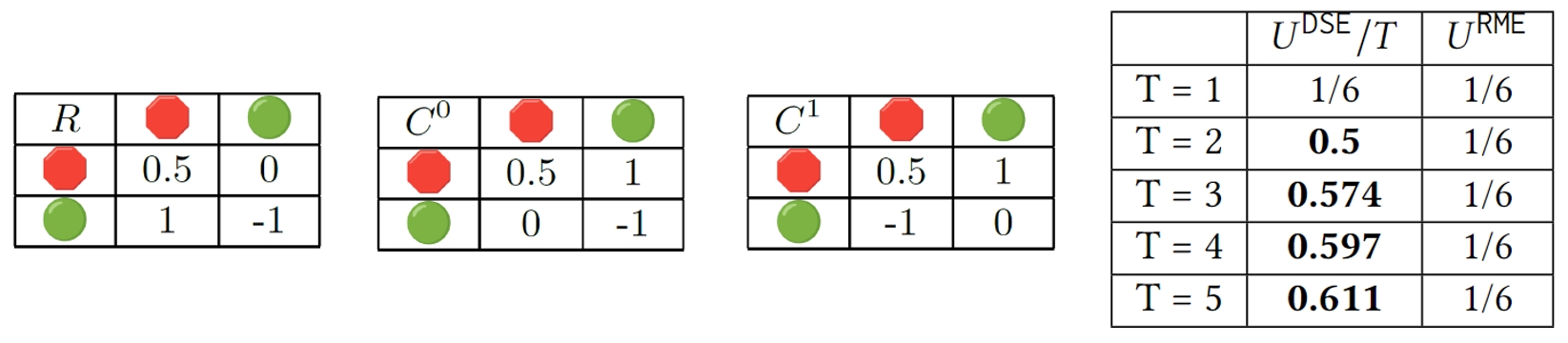}
    \caption{Utility Matrices for the Game of Chicken. A leader-follower variant of the game of chicken with two follower types. The average utility per round for both the $\DSE$ and $\RME$ is shown for $T = 1, ..., 5$.}
    \label{fig:game_of_chicken}
\end{figure}

\textbf{Stackelberg Security Game (SSG).} We consider a specific SSG, where there is a defender (leader) trying to protect 2 targets with 1 resource from the attack of an attacker (follower). 
Specifically, there are two uniformly distributed follower types whose utility information is given by Figure~\ref{fig:stackelberg_security}.
Each row represents the leader's action (i.e. protecting the target) and each column represents the follower's action (i.e. attacking the target). 
Follower type $V^0$ prefers target $t_0$ while $C^1$ prefers target $t_1$.  
\begin{figure}[tbh]
\centering
\includegraphics[width=0.75\linewidth]{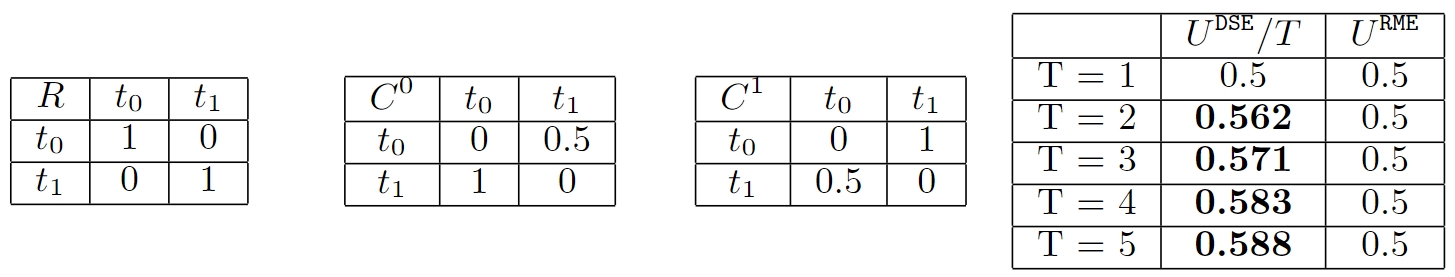}
\caption{A Stackelberg Security Game with two follower types. The average utility per round for both the $\DSE$ and $\RME$ is shown for $T = 1, ..., 5$. Utility Matrices for a Stackelberg Security Game.} 
\label{fig:stackelberg_security}
\end{figure}

\textbf{Battle of the Sexes.}
In this game, two players receive higher utilities when they take the same action, though one of them enjoys this action more than the other. 
We consider a variant of this game with a Stackelberg game structure and two uniformly distributed follower with payoff matrics given by Figure~\ref{fig:battle_of_the_sexes}.
The $C^0$ type is the standard player in the battle of the sexes game, but $C^1$ is a type who ``stubbornly'' insists on the \emph{Basketball} action even though they would enjoy it more if the leader also picks this action. 
\begin{figure}[tbh]
\centering
\includegraphics[width=0.8\linewidth]{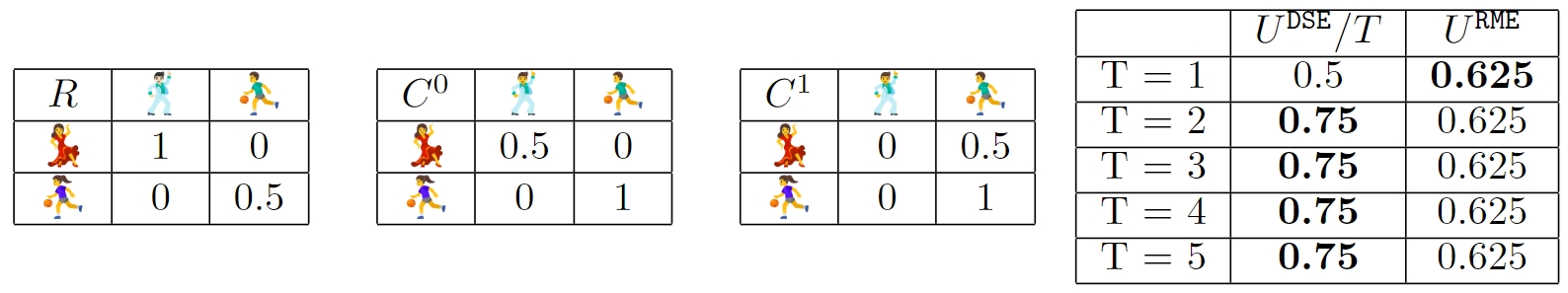}
\caption{Utility Matrices for a Modified Battle of the Sexes Game. A Stackelberg variant of the Battle of the Sexes game with two follower types. The average utility per round for both the $\DSE$ and $\RME$ is shown for $T = 1, ..., 5$.} 
\label{fig:battle_of_the_sexes}
\end{figure}

\textbf{Discussion on Structured Games.} From the experimental results, we can see that the utility improvement of $\DSE$ compared to $\RME$ is evident in all games. 
Note that when $T = 1$, the optimal $\DSE$ is exactly the $\BSE$, which is always less than or equal to $\RME$. 
We also highlight the increasing per round $U^{\DSE}$. 
As shown, though the $U^{\DSE}$ might be below than $U^{\RME}$ in the beginning rounds, it catches up with $U^{\RME}$ as $T$ increases, consistent with Theorem \ref{thm:infinte_T}. 
Interestingly, the $U^\DSE$ converged within two rounds in the battle of sexes game while in the SSG and game of chicken, $U^\DSE$ did not converge within five rounds. 
The convergence of $U^\DSE$ is an interesting open question for future research. 
\textbf{Experimental Results on Randomized Games.} We additionally conducted experiments on random game instances in which each agent's utility is uniformly drawn from $[0,1]$.
As in the previous experiments, we compare the average leader utility between $\DSE$ and the optimal static policy.
In addition, to demonstrate the performance of our approximation algorithms, we also compare the runtime and average leader utility of solving the optimal $\Markov$ policy and \textsc{First}-$k$ policy, versus solving the $\DSE$.

\begin{table}[tb]
\centering
\begin{small}
\begin{tabular}{|c|c|c|c||c|c|c|c|} 
     \hline
     & \multicolumn{3}{c||}{Runtime} & \multicolumn{4}{c|}{Average Utility} \\
     \hline
     & $\First$ & $\Markov$ & $\DSE$ & $\First$ & $\Markov$  & $\DSE$ & $\RME$ \\
     \hline
     T = 1 & $0.11 \pm 0.13$ & $\mathbf{0.1} \pm 0.12$ & $0.11 \pm 0.13$ & $0.9 \pm .05$ & $0.9 \pm .05$& $0.9 \pm .05$ &$\mathbf{0.93} \pm .04$\\
     \hline
     T = 2 & $2.21\pm 1.7$ & $2.34\pm 1.4$ & $\mathbf{2.19} \pm 1.5$ & $0.93 \pm .03$ & $0.93 \pm .03$& $0.93 \pm .03$ &$0.93 \pm .04$\\
     \hline
     T = 3 & $90 \pm 13$ & $\mathbf{8.4} \pm 4.5$ & $91 \pm 13$ & $\mathbf{0.95} \pm .03$ & $0.94 \pm .03$ & $\mathbf{0.95} \pm .03$   & $0.93 \pm .04$\\
     \hline
     T = 4 &$3662 \pm 865$ & $\mathbf{73} \pm 51$  & $3683 \pm 924$ & $\mathbf{0.96} \pm .02$ &$0.95 \pm .02$ &$\mathbf{0.96} \pm .02$  &$0.94 \pm .03$  \\
     \hline
     T = 5 & $4645 \pm 867$ & $\mathbf{2181} \pm 1929$  & N/A & $\mathbf{0.96} \pm .02$ & $0.95 \pm .02$ & N/A & $0.94 \pm .03$ \\
     \hline
     T = 6 & $\mathbf{4614}$  $\pm 564$  & \makecell{$12858$ \\ $\pm  95791$} & N/A & $\mathbf{0.96} \pm .02$ & $0.95 \pm .02$  & N/A & $0.94 \pm .03$ \\
     \hline
     T = 7 & $ \mathbf{4480}$  $\pm 831$  & N/A & N/A & $\mathbf{0.96} \pm .02$ & N/A  & N/A & $0.94 \pm .03$ \\
     \hline
\end{tabular}
\end{small}
\caption{Experimental Results on Randomized Games. Running time (columns 2-4 with the unit: second) and average utility (columns 5-8) for random game instances with $m=10, n=5, |\Theta|=2$, $k=3$. ``N/A'' implies the algorithm did not return a solution within 3 hours.} 
\label{table:m10n5k2_table}
\end{table}
The results from T=$1,2,3$ have been averaged over 50 random instances and the standard deviation is reported in the table. 
As for $T \ge 4$, the runtime is high, and the advantage of \textsc{First}-$k$ and $\Markov$  policy is significant (note the runtime of $\DSE$ when $T=4$ is already much higher than the $\Markov$ method when $T=5$), so we only test the first 10 random instances out of the 50 random instances, which explains why the $\RME$ performs differently when $T\ge 4$. 
In addition, the runtime of $\Markov$ when $T=6$ is already much higher than the \textsc{First}-$k$ method while achieving a lower utility. 
The advantage of $\DSE$ over $\RME$ on the average leader utility is also consistent with our observation in previous structured games. 
Finally, we note that \textsc{First}-$k$ policies achieve near-optimal utility with significantly improved computational efficiency compared to computing the optimal dynamic policy. 
In appendix \ref{appendix_additionalExp}, we report some additional experimental results on both structured and randomized games.
\section{Conclusion}
In this paper, we have shown that, contrary to the intuition suggested by the No Learning Theorem for dynamic pricing, it is generally feasible for the leader to learn effectively in fully strategic settings.
Specifically, the leader can improve her utility through information acquisition in repeated play against a strategic opponent, and this learning is not merely a substitute for communication.

We have also developed a novel mixed-integer linear program to solve for the optimal $\DSE$, albeit with exponential time and space complexity.
To address this computational challenge, we provided two heuristics that perform well and significantly improve computational scalability, though they remain exponential in the number of follower types given that the $\BSE$ is NP-hard to compute.
Taken together, these results suggest that weakening the follower's strategic space to obtain positive learning outcomes, as has commonly been done in the literature, may be unnecessary outside of dynamic pricing settings.

Moving forward, several open questions remain.
The most immediate is whether there exists a simple characterization of a necessary and sufficient condition to ensure effective learning.
While we provide a sufficient condition, it is easy to construct examples, such as Example~\ref{example:mod_pricing_game}, where our sufficient condition is not satisfied, yet learning is still effective.
Additionally, we show that for random Bayesian Stackelberg games, learning is effective with high probability relative to the $\BSE$.
We also demonstrate that learning is at least as powerful as communication (Theorem~\ref{thm:infinte_T}), and by example, we show that learning can be arbitrarily more powerful than communication.
However, we do not have an equivalent result to Corollary~\ref{cor:learning_is_effective_with_high_probability} for the relative performance of the $\DSE$ versus the $\RME$.
Is it also true that for random games the $\DSE$ achieves strictly higher leader utility compared to the $\RME$?
Based on our experimental results, we believe this is very likely, but proving it remains an open question.

\bibliographystyle{ACM-Reference-Format}
\bibliography{refer}

\appendix

\section{Discussion of the No-Learning Theorem in Dynamic Pricing}\label{appendix_sec:no_learning_pricing}
The key technique for proving the optimality of static constant pricing is an elegant reduction from dynamic pricing interactions to a single-round feasible auction mechanism \citep{myerson1981optimal}\footnote{Details about feasible auction mechanisms can be found in Section 3 of \cite{myerson1981optimal}}.
Specifically, for any pricing policy $\pi$, we construct an auction mechanism $M_\pi$ that works as follows. First, the seller asks the buyer to report his value $v$ and simulates $\bj^*$ based on the reported $v$, denoted as:
\begin{align}\label{eq:simulation_jstar}
        \bj^*(v) = \argmax_{\bj \in \{0, 1\}^T} \sum_{t\in [T]} j_t \cdot \big(v - \pi(j_1,\cdots, j_{t-1})\big).
\end{align}
Then the seller randomly chooses a round $t \in [T]$ with probability $1/T$ and allocates the item to the buyer with price $\pi\big(j^*_1(v),\cdots, j^*_{t-1}(v)\big)$ if $j^*_t(v) = 1$. 
\begin{proposition}[Proposition A.1 \citep{vanunts2019optimal}]\label{prop:no-learning-pricing}
The auction mechanism $M_\pi$ is feasible and its expected revenue is equivalent to the dynamic pricing policy $\pi$'s average expected revenue over $T$. 
\end{proposition}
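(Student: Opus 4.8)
The plan is to verify both assertions — that $M_\pi$ is a feasible single-round mechanism and that its expected revenue equals the per-round average revenue of $\pi$ — by unwinding the construction of $M_\pi$ together with the observation that the repeated pricing environment is deterministic. Since $\pi$ is a deterministic policy, a value-$v$ buyer's play in the $T$-round dynamic pricing game is effectively non-adaptive: every buyer strategy induces a fixed accept/reject sequence $\bj \in \{0,1\}^T$ with cumulative utility $g(\bj,v) := \sum_{t\in[T]} j_t\big(v - \pi(j_1,\dots,j_{t-1})\big)$, so the strategic buyer best-responds with $\bj^*(v) = \argmax_{\bj} g(\bj,v)$ from \eqref{eq:simulation_jstar} (fixing the seller-favorable tie-break), and $\pi$'s total revenue against this buyer is $\sum_{t\in[T]} j^*_t(v)\,\pi\big(j^*_1(v),\dots,j^*_{t-1}(v)\big)$, whose per-round average is $\tfrac{1}{T}$ of that quantity.

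For feasibility I would check incentive compatibility, individual rationality, and feasibility of the allocation probabilities in the sense of \citet{myerson1981optimal}. The crux is that when a buyer of true value $v$ reports $v'$ to $M_\pi$, the report affects the outcome only through the simulated sequence $\bj^*(v')$ and its induced prices $p_t(v') := \pi\big(j^*_1(v'),\dots,j^*_{t-1}(v')\big)$; averaging over the uniformly random round, his expected utility equals $\tfrac{1}{T}\sum_{t} j^*_t(v')\big(v - p_t(v')\big) = \tfrac{1}{T}\,g(\bj^*(v'),v) \le \tfrac{1}{T}\max_{\bj} g(\bj,v) = \tfrac{1}{T}\,g(\bj^*(v),v)$, which is exactly his truthful payoff, so truthful reporting is optimal for every $v$ (IC). Individual rationality then holds because the all-reject sequence yields utility $0$, hence the truthful payoff $\tfrac{1}{T}g(\bj^*(v),v)$ is nonnegative; and the allocation rule is feasible because a buyer reporting $v$ is served with probability $\tfrac{1}{T}\sum_t j^*_t(v) \in [0,1]$, paying only upon allocation.

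Revenue equivalence is then immediate: the expected revenue of $M_\pi$ is $\Ex_{v\sim\bmu}\big[\tfrac{1}{T}\sum_t j^*_t(v)\,p_t(v)\big] = \tfrac{1}{T}\,\Ex_{v\sim\bmu}\big[\sum_t j^*_t(v)\,\pi(j^*_1(v),\dots,j^*_{t-1}(v))\big]$, which is precisely the per-round (i.e. over-$T$) average of $\pi$'s expected revenue against a strategic buyer.

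I expect the only real subtlety to be the reduction of adaptive buyer behavior to the fixed sequence $\bj^*(v)$ — which hinges on $\pi$ being deterministic so that the buyer is effectively choosing a sequence in $\{0,1\}^T$ up front — together with making the $\argmax$ tie-break consistent across the dynamic game and the simulation inside $M_\pi$ so that both report the same outcome and the same revenue. It also helps to state explicitly which notion of ``feasible'' from \citet{myerson1981optimal} is invoked (IC, IR, and valid allocation probabilities), since the claimed equivalence is asserted relative to that class of single-round mechanisms; everything else is a short definitional calculation.
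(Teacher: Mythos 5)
Your proposal is correct and follows essentially the same route as the paper's proof: define the allocation and payment rules of $M_\pi$, verify feasibility (IC via the fact that reporting $v'$ just substitutes the feasible sequence $\bj^*(v')$ into the buyer's objective and hence cannot beat the argmax $\bj^*(v)$; IR via the all-reject sequence; allocation probability in $[0,1]$), and read off revenue equivalence as $P_M(v)$ being exactly $\tfrac{1}{T}$ of $\pi$'s total revenue. Your explicit remarks on the deterministic-policy reduction of adaptive play to a fixed sequence and on consistent tie-breaking are sound clarifications of points the paper leaves implicit, but do not constitute a different argument.
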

\begin{proof}{Proof.}
To begin with, we formally define the auction mechanism $M_\pi$'s allocation probability $Q_{M}$ and payment $P_M$ with respect to a buyer report $v$.
\begin{align*}
        Q_{M}(v) &= \sum_{t=1}^T \frac{j^*_t(v)}{T}  \\
        P_M(v) &= \sum_{t=1}^T \frac{j^*_t(v) \pi\big(j^*_1(v),\cdots, j^*_{t-1}(v)\big)}{T}
\end{align*}
To prove $M_\pi$ is a feasible auction mechanism, we prove it satisfies incentive compatibility, individual rationality, and $Q_{M}(v) \le 1$. 

First, because $j^*_t(v) \in \{0, 1\}$, we have that $Q_{M}(v) \le 1$. 

Next, we prove the individual rationality of the mechanism, i.e. the buyer's expected utility $v \cdot Q_{M}(v) - P_M(v) \ge 0$. By definition, 
\begin{align}
    v \cdot Q_{M}(v) - P_M(v) =\frac{1}{T} \sum_{t=1}^T j^*_t(v)\big(v - \pi\big(j^*_1(v),\cdots, j^*_{t-1}(v)\big)\big).
\end{align}
In other words, we need to prove $\sum_{t=1}^T j^*_t(v)\big(v - \pi\big(j^*_1(v),\cdots, j^*_{t-1}(v)\big)\big)$, which is exactly the buyer's optimal utility, is greater than or equal to $0$. From the definition of $\bj^*(v)$ in equation  \eqref{eq:simulation_jstar}, this is trivially satisfied, since $0$ is the buyer's utility when setting $j_t = 0, \, \forall t$. 

Finally, we prove the incentive compatibility of the mechanism
\begin{equation}
v \cdot Q_{M}(v) - P_M(v) \ge v \cdot Q_{M}(u) - P_M(u).
\end{equation}
It is equivalent to prove
\begin{align}
    \sum_{t=1}^T j^*_t(v)\big(v - \pi\big(j^*_1(v),\cdots, j^*_{t-1}(v)\big)\big) \ge \sum_{t=1}^T j^*_t(u)\big(v - \pi\big(j^*_1(u),\cdots, j^*_{t-1}(u)\big)\big), 
\end{align}
which is correct by the definition of $\bj^*(v)$ from \eqref{eq:simulation_jstar}.  As a result, we have proved that the mechanism $M_{\pi}$ is a feasible mechanism. 

Last but not least, note that $M_{\pi}$'s expected revenue is the expected payment,
\begin{equation}\label{eq:expected_revenue}
P_M(v) = \frac{1}{T}\sum_{t=1}^T j^*_t(v) \pi\big(j^*_1(v),\cdots, j^*_{t-1}(v)\big)
\end{equation}
where $\sum_{t=1}^T j^*_t(v) \pi\big(j^*_1(v),\cdots, j^*_{t-1}(v)\big)$ is exactly the seller's expected revenue when running dynamic policy $\pi$ for $T$ rounds, proving the proposition.  
\end{proof}

As a result, we have $M_\pi$ as a feasible mechanism for any dynamic pricing policy $\pi$ and the seller's revenue must be bounded by the Myerson revenue \citep{myerson1981optimal}:
\begin{equation*}\label{eq:myerson_price}
   \frac{1}{T} \sum_{t\in[T]}j^*_t(v)\pi\big(j^*_1(v), \cdots, j^*_{t-1}(v)\big) \leq MyersonRev, \forall \pi
\end{equation*}

\noindent Note that $\sum_{t\in[T]}j^*_t(v)\pi\big(j^*_1(v), \cdots, j^*_{t-1}(v)\big)$ is exactly the seller's total revenue in $T$ rounds when she uses dynamic pricing policy $\pi$. Hence, we have shown that \textit{any} dynamic pricing policy's revenue is upper bounded by the revenue of running constant Myerson price for $T$ rounds.

\section{Omitted Details for Section~\ref{sec:whyDSElearns}}
\label{append:detail_learning_is_likely}
In this appendix, we provide the proofs of the results stated in Section~\ref{sec:whyDSElearns}.

\subsection{Proof Details for Section~\ref{subsec:sufficient_condition_learning_random_games}}
\label{subappend:sufficient_condition_learning_random_games}
In this section, we provide the proof of Theorem~\ref{thm:effective_learning_thm}.

\begin{proof}{Proof of Theorem~\ref{thm:effective_learning_thm}.}
We prove this by construction.
Specifically, we construct a dynamic leader policy that achieves higher leader utility than the \texttt{BSE} strategy.  
We do this by constructing a policy that is a local deviation from the \texttt{BSE}. 
We denote the \texttt{BSE} leader strategy as $\x^*$ for $\{R, \Theta, \{C^\theta\}_{\theta \in \Theta}, \bmu\}$. 
By assumption~\ref{assumption:learnable_subgroup}, there exists a sub-group $\Theta' \subset \Theta$ such that $\texttt{BR}(\Theta') \cap \texttt{BR}(\Theta \setminus \Theta') = \emptyset$ and $\texttt{BSE}(\Theta') \ne \x^*$.
Denote $\texttt{BSE}(\Theta')$ as $\hat{\x}$.
We consider the dynamic leader policy, $\pi$, defined as follows:
\begin{align*}
        \x^1 =\cdots = \x^{T-1} = \x^*;  
        \x^T = \begin{cases}
            \hat{\x} \quad \text{if $j_1,\cdots,j^{T-1} \in \texttt{BR}(\Theta')$;} \\
            \x^* \,\,\, \text{otherwise.}
        \end{cases}
\end{align*}
To ensure that all types in $\Theta'$ respond with action in $\texttt{BR}(\Theta')$ for rounds $1, ..., T - 1$ and all types in $\Theta \setminus \Theta'$ do not respond with action in $\texttt{BR}(\Theta')$ for any round, we need the following set of constraints to be satisfied:
\begin{small}
\begin{align}
        &\sum_{\theta \in \Theta'} \mu(\theta) U(\hat{\x}, j^{*\theta}(\hat{\x})) > \sum_{\theta \in \Theta'} \mu(\theta) U(\x^*, j^{*\theta}(\x)) \label{eq:first_constraint_C_types}\\
        &   (T-1) \cdot V^\theta(\x^*, j^{*\theta}(\x^*)) +  V^\theta(\hat{\x}, j^{*\theta}(\hat{\x})) \ge   (T-1)\cdot
        \max_{j \notin \texttt{BR}(\Theta')}V^\theta(\x^*, j) + V^\theta(\x^*, j^{*\theta}(\x^*)), \forall \theta \in \Theta' \label{eq:second_constraint_C_types}\\
        & T \cdot V^\theta(\x^*, j^{*\theta}(\x^*)) \ge    (T-1) \cdot
        \max_{j \in \texttt{BR}(\Theta')} V^\theta(\x^*, j) + V^\theta(\hat{\x}, j^{*\theta}(\hat{\x})), \forall \theta \in \Theta\setminus\Theta' \label{eq:third_constraint_C_types}
\end{align}
\end{small}
Constraint~\ref{eq:first_constraint_C_types} ensures that the leader improves her utility by switching from $\x^*$ to $\hat{\x}$ in the final round of play.
This constraint is satisfied by the assumption that $\texttt{BSE}(\Theta') \ne \texttt{BSE}$.

Constraint~\eqref{eq:second_constraint_C_types} ensures that all types in $\Theta'$ find it optimal to respond with $j^{*'}$ for rounds $1, ..., T - 1$.
To see that this holds, first note that the following must be true by the definition of  $\texttt{BR}(\Theta')$ 
\begin{equation*}
    V^\theta(\x^*, j^{*\theta}(\x)) > \max_{j \notin \texttt{BR}(\Theta')}V^\theta(\x^*, j), \quad \forall \theta \in \Theta'.
\end{equation*}
Then, equation~\eqref{eq:second_constraint_C_types} will be satisfied if the following holds:
\begin{equation*}
    (T-1)\cdot (V^\theta(\x^*, j^{*\theta}(\x^*)) - \max_{j \notin \texttt{BR}(\Theta')}V^\theta(\x^*, j))\ge 1
\end{equation*}

Similarly, constraint~\eqref{eq:third_constraint_C_types} ensures that all types not in $\Theta'$ do not find it optimal to deviate and pretend to be in $\Theta'$.
Note that by Assumption~\ref{assumption:learnable_subgroup} for all $j \in \texttt{BR}(\Theta')$ we have $j \notin \texttt{BR}(\Theta\setminus \Theta')$,
\begin{equation*}
    V^\theta(\x^*, j^{*\theta}(\x^*)) > \max_{j \in \texttt{BR}(\Theta')} V^\theta(\x^*, j)
\end{equation*}
Similarly, equation~\eqref{eq:third_constraint_C_types} will be satisfied if the following holds:
\begin{equation*}
    (T-2)\cdot (V^\theta(\x^*, j^{*\theta}(\x^*)) - \max_{j \notin \texttt{BR}(\Theta')}V^\theta(\x^*, j))\ge 1
\end{equation*}

Therefore, set $T^*$ such that:
\begin{small}
\begin{align*}
    T^* = \max\Bigg(\frac{1}{V^\theta(\x^*, j^{*\theta}(\x^*)) - \max_{j \notin \texttt{BR}(\Theta')}V^\theta(\x^*, j)} + 1, \frac{1}{V^\theta(\x^*, j^{*\theta}(\x^*)) - \max_{j \notin \texttt{BR}(\Theta')}V^\theta(\x^*, j)} + 2\Bigg),
\end{align*}
\end{small}
proves the theorem.%
\end{proof}

\subsection{Proof Details for Section~\ref{subsec:learning_random_games}}
\label{subappend:learning_random_games}
In this section, we will provide the details of the proofs of Theorem~\ref{thm:non_intersecting_BR_regions} and Corollary~\ref{cor:learning_is_effective_with_high_probability}.
To do so, we will define some additional notation, and in doing this we will follow the conventions of \citet{hugRandomConicalTessellations2016}.
A key step will be establishing Corollary~\ref{corollary:expected_size_of_schlafli_cones}, and the proof of this corollary is closely related, with some important modifications, to the proof of Corollary 4.2 in \citet{hugRandomConicalTessellations2016}.
Therefore, Corollary~\ref{corollary:expected_size_of_schlafli_cones} should be viewed as a corollary to Corollary 4.2 in \citet{hugRandomConicalTessellations2016}.
However, we allow for a generalized notion of the quermassintegral in our derivation, which is essential to prove Corollary~\ref{corollary:expected_size_of_schlafli_cones}, making our result a strict generalization of Corollary 4.2 in \citet{hugRandomConicalTessellations2016}.

Consider the $m$-dimensional space, $\mathbb{R}^m$ and the $m$-dimensional unit sphere denoted by $\mathbb{S}^{m-1}$.
Let $\mathcal{D}^m$ be the set of closed, convex cones in $\mathbb{R}^m$, and let $\mathcal{PC}^m$ be the set of polyhedral cones in $\mathbb{R}^m$ where $\mathcal{PC}^m \subset \mathcal{D}^m$.
For all $D \in \mathcal{D}^d$, the set $K = D \cap \mathbb{S}^{m-1}$ is a convex body in $\mathbb{S}^{m-1}$.
We denote the set of convex bodies in $\mathbb{S}^{m-1}$ as $\mathcal{K}_\mathcal{S}$. We also denote by $G(m,k)$ the Grassmanian of $k$-dimensional linear subspaces of $\mathbb{R}^m$.


\begin{definition}[Generalized Spherical $(j,\phi)$-Quermassintegral]
    Let $\phi^*(\cdot)$ be a normalized measure over $G(m,m-j)$ where $j \in \{0,...,m\}$. Then the \emph{generalized spherical $j$-quermassintegral} for $K \in \mathcal{K}_\mathcal{S}$ is:
    \begin{equation}
        U_{j,\phi^*} = \frac{1}{2} \int_{G(m,m-j)} \chi(K \cap L) \phi(dL).
    \end{equation}
    Where $\chi(\cdot)$ is the Euler characteristic.
\end{definition}

Note that if $\phi^*=\nu_{m-j}$, the normalized Haar measure on $G(m, m-j)$, then the generalized spherical $(j,\nu_{m-j})$-quermassintegral is identical to the standard spherical $j$-quermassintegral.
If $K$ is a convex body in $\mathbb{S}^{m-1}$ that is not a great sub-sphere, $\chi(K \cap L) = \mathbbm{1}(K \cap L \ne \emptyset)$ for almost all $L \in G(m,m-j)$, see p. 40 of \citet{schneiderStochasticIntegralGeometry2008}.
Therefore if $\phi$ is an absolutely continuous measure with respect to $\nu_{m-j}$, $2U_{j,\phi}(K)$ is the probability measure of a random $(m-j)$-dimensional linear subspace, sampled according to $\phi$, intersecting with the set $K$.

Consider the set of hyperplanes $H_1, H_2, ..., H_n \in G(m,m-1)$. 
We will follow \citet{hugRandomConicalTessellations2016} and say that they are in \emph{general position} if any $k \le m$ of them have an intersection of dimension $m-k$.


If the hyperlanes $H_1, H_2, ..., H_n \in G(m,m-1)$ are in general position, they induce a conical tessalation \citep{hugRandomConicalTessellations2016} of $\mathbb{R}^m$ into $m$-dimensional polyhedral cones, denoted by $\mathcal{T}$.
We will write $\mathcal{F}(H_1, H_2, ..., H_n)$ to denote the set of polyhedral cones in $\mathcal{T}$.
For $C\in\mathcal{F}(H_1, H_2, ..., H_n)$, the convex sets $C \cap \mathbb{S}^{m-1}$ form a spherical tessalation of $\mathbb{S}^{m-1}$.
If we denote arbitrarily by $H^-$ one of the two half-spaces bounded by the hyperplane, then each of the $m$-dimensional cones defined by the tessalation $\mathcal{T}$ induced by $H_1, H_2, ..., H_n$ are of the form:
\begin{equation}
    \bigcap_{i=0}^n \epsilon_i H_i^-, \epsilon_i = \pm 1
\end{equation}
for some set of $\epsilon_i$.
We call these cones the \emph{Schl{\"a}fli cones}, following \citet{hugRandomConicalTessellations2016}.

\begin{lemma}[Lemma 8.2.1 in \citet{schneiderStochasticIntegralGeometry2008}]
    For a set of hyperplanes $H_1, H_2, ..., H_n \in G(m,m-1)$ in general position, there are $C(n,m)$ distinct Schl{\"a}fli Cones where
    \begin{equation}
        C(n,m) = 2 \sum_{i=0}^{m-1} {n-1 \choose i}.
    \end{equation}
\end{lemma}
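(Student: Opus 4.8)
The natural approach is induction on the number $n$ of hyperplanes, adding one hyperplane at a time and tracking how the count of full-dimensional (Schl{\"a}fli) cones changes. Write $C(n,m)$ for this count for $n$ hyperplanes in general position in $\mathbb{R}^m$; since the argument below only ever uses general position and never the specific hyperplanes, the count is the same for all such arrangements, so the notation is justified a posteriori. The base cases are immediate: a single hyperplane through the origin divides $\mathbb{R}^m$ into its two bounding half-spaces, so $C(1,m)=2$; and in $\mathbb{R}^1$ every linear hyperplane equals $\{0\}$, so $n$ of them tessellate $\mathbb{R}$ into $\mathbb{R}_{\le 0}$ and $\mathbb{R}_{\ge 0}$, giving $C(n,1)=2$. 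Both agree with $2\sum_{i=0}^{m-1}\binom{n-1}{i}$ (using $\binom{0}{i}=0$ for $i\ge 1$). An alternative non-inductive route is via sign-vector realizability (Cover's theorem) or the characteristic polynomial of the arrangement (Zaslavsky), but the inductive argument is the cleanest.

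For the inductive step, fix $n\ge 2$, $m\ge 2$, and hyperplanes $H_1,\dots,H_n\in G(m,m-1)$ in general position; let $\mathcal{T}_{n-1}$ be the conical tessellation induced by $H_1,\dots,H_{n-1}$, which by induction has $C(n-1,m)$ full-dimensional cones. Add $H_n$, and let $H_n^+, H_n^-$ denote the two closed half-spaces it bounds. For each $C\in\mathcal{F}(H_1,\dots,H_{n-1})$ exactly one of two things occurs: either $\mathrm{int}(C)\cap H_n=\emptyset$, in which case $\mathrm{int}(C)$ is a connected set disjoint from $H_n$, hence $C$ lies in one of $H_n^\pm$ and remains a full-dimensional cone of $\mathcal{T}_n$; or $\mathrm{int}(C)\cap H_n\ne\emptyset$, in which case $\mathrm{int}(C)$ meets both open half-spaces and $C$ is split into the two full-dimensional cones $C\cap H_n^+$ and $C\cap H_n^-$ of $\mathcal{T}_n$. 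Hence $C(n,m)=C(n-1,m)+N$, where $N$ is the number of cones of $\mathcal{F}(H_1,\dots,H_{n-1})$ whose interior meets $H_n$.

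The heart of the argument is to show $N=C(n-1,m-1)$. The key observation is that $C\mapsto C\cap H_n$ is a bijection between the straddling cones and the full-dimensional cones of the conical tessellation of $H_n\cong\mathbb{R}^{m-1}$ induced by the $n-1$ hyperplanes $H_i\cap H_n$, $i=1,\dots,n-1$: a point $x\in H_n$ avoiding all of $H_1,\dots,H_{n-1}$ lies in $\mathrm{int}(C)$ for a unique cone $C$ of $\mathcal{T}_{n-1}$, which necessarily straddles $H_n$, and one checks that $C\cap H_n$ is exactly the cell of $\{H_i\cap H_n\}$ containing $x$; conversely every straddling $C$ yields such an $(m-1)$-dimensional cell. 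General position is used here: it guarantees that the $H_i\cap H_n$ are $n-1$ distinct $(m-2)$-dimensional subspaces of $H_n$ that are again in general position inside $H_n$, since any $k\le m-1$ of them meet in $H_{i_1}\cap\dots\cap H_{i_k}\cap H_n$, of dimension $m-(k+1)=(m-1)-k$ because $H_{i_1},\dots,H_{i_k},H_n$ are $k+1\le m$ hyperplanes in general position. Applying the induction hypothesis in dimension $m-1$ gives $N=C(n-1,m-1)$.

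Combining, we get the Pascal-type recurrence $C(n,m)=C(n-1,m)+C(n-1,m-1)$, and it remains to verify that $2\sum_{i=0}^{m-1}\binom{n-1}{i}$ satisfies it: using $\binom{n-1}{i}=\binom{n-2}{i}+\binom{n-2}{i-1}$ one gets $\sum_{i=0}^{m-1}\binom{n-1}{i}=\sum_{i=0}^{m-1}\binom{n-2}{i}+\sum_{i=0}^{m-2}\binom{n-2}{i}$, which is precisely the recurrence, and the base cases match, closing the induction. The main obstacle is not any single computation but the careful bookkeeping around general position, namely confirming that the restricted arrangement $\{H_i\cap H_n\}$ inherits general position in $H_n$ and that the straddle/non-straddle dichotomy together with the cone-to-restriction bijection is both exhaustive and injective; the rest is routine.
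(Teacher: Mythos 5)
Your proposal is correct. Note, however, that the paper does not prove this lemma at all: it is imported verbatim as Lemma 8.2.1 of \citet{schneiderStochasticIntegralGeometry2008} (the count itself is classical, going back to Schl\"afli and appearing in Cover's function-counting theorem), so there is no in-paper argument to compare against. What you have supplied is essentially the standard proof underlying the cited result: the deletion/restriction induction giving the Pascal-type recurrence $C(n,m)=C(n-1,m)+C(n-1,m-1)$, with the number of split cones identified, via sign vectors, with the full-dimensional cells of the arrangement $\{H_i\cap H_n\}_{i<n}$ inside $H_n$, whose general position you correctly verify from that of $H_1,\dots,H_n$. Two small points of care, neither fatal: (i) in the base case $C(n,1)=2$ (and in the restriction step when $m-1=1$) the "hyperplanes'' of $\mathbb{R}^1$ all coincide with $\{0\}$, so general position holds only in the degenerate sense you indicate; the cell count is $2$ regardless, so the recurrence is unaffected, but it is cleaner to either stop the induction at $m=2$ or phrase the restricted count directly as the number of realizable sign vectors. (ii) The injectivity of $C\mapsto C\cap H_n$ deserves one explicit line: a relative-interior point of a full-dimensional cell of the restricted arrangement lies off every $H_i$, $i<n$ (since $H_n\not\subset H_i$ by general position), hence determines a unique sign vector and thus a unique straddling cone. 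With these bookkeeping details filled in, your argument is a complete and self-contained substitute for the citation.
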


We will denote by $\mathcal{B}(T)$ the $\sigma$-algebra of Borel sets for a given topological space $T$, typically either $\mathbb{S}^{m-1}$ or $G(m,m-1)$.
Let $\phi'$ be a probability measure over $\mathcal{B}(\mathbb{S}^{m-1})$ that is even and assigns measure zero to each $(m-2)$-dimensional great sub-sphere of $\mathbb{S}^{m-1}$.
Then let $\x_1, \x_2, ..., \x_n$ be random points on $\mathbb{S}^{m-1}$ distributed according to $\phi'$.
The corresponding random hyperplanes, $\{\boldsymbol{y}\in\mathbb{R}^m \mid \x_i\cdot \boldsymbol{y} = 0\}_{i\in [n]}$, will be in general position with probability $1$.
We will denote these random hyperplanes as $\mathcal{H}_1, \mathcal{H}_2, ..., \mathcal{H}_n$, and the induced probability measure over these hyperplanes, we will denote $\phi$.

\begin{definition}[Random Schl{\"a}fli Cones]
    Let $\phi$ be a probability measure as indicated above. Let $n\in \mathbb{N}$ and let $\mathcal{H}_1, \mathcal{H}_2, ..., \mathcal{H}_n \in G(m, m-1)$ be independent random hyperplanes distributed according to $\phi$. The random \emph{$(\phi, n)$-Schl{\"a}fli cone $\mathcal{S}_n$} is the cone chosen uniformly at random from the polyhedral cones induced by $\mathcal{H}_1, \mathcal{H}_2, ..., \mathcal{H}_n$.
\end{definition}

With a slight abuse of notation, denote by $\mathbf{H}_n = H_1, H_2, ..., H_n$.
We follow \citet{hugRandomConicalTessellations2016} (see p. 406) and formally define $S_n$ as the polyhedral cone with distribution given by:
    \begin{equation}
        \label{eq:random_cone}
        \begin{aligned}
            \mathbb{P}(\mathcal{S}_n \in B) = \int_{G(m,m-1)^n}\frac{1}{C(n,m)}  \cdot \sum_{C \in \mathcal{F}(\mathbf{H}_n)} \mathbbm{1}_B(C) \phi^n(d(\mathbf{H}_n))
        \end{aligned}  
    \end{equation} 

Now, let $L \in G(m, k)$ be in general position with respect to $H_1, H_2, ..., H_n \in G(m,m-1)$.
Then $L \cap H_1, ..., L\cap H_n$ are $(k-1)$-dimensional subspaces of $L$.
Moreover, they are in general position in $L$.
Therefore, there are $C(n,k)$ $k$-dimensional Schl{\"a}fli Cones cones in the tessalation $\mathcal{T}_L$, i.e. the tessalation induced in $L$ by $H_1, H_2, ..., H_n$.

Again, if we let $\mathcal{H}_1, \mathcal{H}_2, ..., \mathcal{H}_n \in G(m,m-1)$ be distributed according to $\phi$, and let $\mathcal{L} \in G(m,k)$ be distributed according to $\phi^*$ (satisfying the same conditions as $\phi$), then we choose a random $k$-dimensional cone in $\mathcal{T}_\mathcal{L}$.
Note that $\mathcal{H}_1, \mathcal{H}_2, ..., \mathcal{H}_n$ and $\mathcal{L}$ are in general position with probability $1$.
This random cone, denoted $\mathcal{C}_n^k$ is precisely the polyhedral cone given by the following:
\begin{small}
    \begin{equation}
        \label{eq:random_intersect_cone}
        \begin{aligned}    
            P(\mathcal{C}_n^k \in B) = \int_{G(m,m-1)} \int_{G(m,k)}\frac{1}{C(n,k)} 
             \cdot \sum_{C \in \mathcal{F}_d(\mathbf{H}_n)}  1(C\cap L \ne \{0\}) \mathbbm{1}_B(C) \phi^*(dL) \phi^n(d(\mathbf{H}_n))
        \end{aligned}
    \end{equation}
\end{small}

\begin{corollary}[Corollary to Corollary 4.2 in \citet{hugRandomConicalTessellations2016}]
    \label{corollary:expected_size_of_schlafli_cones}
    Let $\phi$ be a probability measure as indicated above, and let $n\in \mathbb{N}$ and $\mathcal{H}_1, \mathcal{H}_2, ..., \mathcal{H}_n \in G(m, m-1)$ be independent random hyperplanes distributed according to $\phi$.
    Then the probability that a random hyperplane, $\mathcal{L}$, distributed according to $\phi^*$ intersects with a random \emph{$(\phi, n)$-Schl{\"a}fli cone $\mathcal{S}_n$} is:
    \begin{equation}
        E(U_{k, \phi^*}(\mathcal{S}_n)) = \frac{C(n,m-k)}{2C(n,m)}
    \end{equation}
\end{corollary}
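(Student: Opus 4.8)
The plan is to reduce the statement to a single combinatorial identity about conical tessellations, following the argument for Corollary 4.2 in \citet{hugRandomConicalTessellations2016} but carrying the arbitrary even measure $\phi^*$ through the computation unchanged; the point is that the combinatorial count is measure-free, and only the final integration step sees $\phi^*$.

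First I would unfold the definition of the generalized quermassintegral. The random Schl{\"a}fli cone $\mathcal{S}_n$ is an $m$-dimensional polyhedral cone bounded by $n\geq 1$ hyperplanes, hence not a linear subspace, so the spherical body $\mathcal{S}_n \cap \mathbb{S}^{m-1}$ is not a great subsphere; by the fact recorded on p.~40 of \citet{schneiderStochasticIntegralGeometry2008}, and using that $\phi^*$ charges no linear subspace, $\chi(\mathcal{S}_n \cap L) = \mathbbm{1}(\mathcal{S}_n \cap L \ne \{0\})$ for $\phi^*$-almost every $L \in G(m,m-k)$. Hence, almost surely,
\[
    U_{k,\phi^*}(\mathcal{S}_n) \;=\; \tfrac12 \int_{G(m,m-k)} \mathbbm{1}\bigl(\mathcal{S}_n \cap L \ne \{0\}\bigr)\, \phi^*(dL).
\]
Taking expectation over $\mathcal{S}_n$ using the explicit distribution \eqref{eq:random_cone}, and applying Fubini (licit since every integrand is an indicator and all outer measures are probability measures) to interchange the expectation over $\mathbf{H}_n$ with the integral over $L$, gives
\[
    E\bigl(U_{k,\phi^*}(\mathcal{S}_n)\bigr) \;=\; \frac{1}{2\,C(n,m)} \int_{G(m,m-k)} E_{\mathbf{H}_n}\!\Bigl[\, \sum_{C \in \mathcal{F}(\mathbf{H}_n)} \mathbbm{1}\bigl(C \cap L \ne \{0\}\bigr) \Bigr]\, \phi^*(dL).
\]

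The crux is to show that the inner expectation equals $C(n,m-k)$ for $\phi^*$-a.e.\ $L$. Fix $L \in G(m,m-k)$ in general position with respect to $H_1,\dots,H_n$ (which holds for $\phi^*\otimes\phi^n$-almost every pair, by the non-degeneracy hypotheses on $\phi,\phi^*$). Then $L\cap H_1,\dots,L\cap H_n$ are hyperplanes of $L$ in general position, so they tessellate $L$ into exactly $C(n,m-k)$ full-dimensional Schl{\"a}fli cones. Each such $(m-k)$-dimensional cell has relative interior disjoint from every $H_i$, hence sits inside the interior of a unique $m$-dimensional cone $C \in \mathcal{F}(\mathbf{H}_n)$, and distinct cells land in distinct $C$'s; conversely, up to the $\phi^*$-null set of configurations where $L$ is tangent to some $C$, a cone $C$ satisfies $C \cap L \ne \{0\}$ exactly when $L$ meets its interior, i.e.\ exactly when $C$ receives one of the cells of $\mathcal{T}_L$. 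This bijection yields $\sum_{C}\mathbbm{1}(C\cap L \ne \{0\}) = C(n,m-k)$ almost surely. Substituting and using that $\phi^*$ is a probability measure gives $E(U_{k,\phi^*}(\mathcal{S}_n)) = C(n,m-k)/(2\,C(n,m))$.

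I expect the main obstacle to be the combinatorial step: verifying that intersecting the hyperplane arrangement with $L$ preserves general position — this is precisely where the evenness and ``measure zero on subspaces'' hypotheses on $\phi$ and $\phi^*$ enter, as in \citet{hugRandomConicalTessellations2016} — and pinning down the bijection between the $(m-k)$-dimensional Schl{\"a}fli cones of $\mathcal{T}_L$ and the full-dimensional cones $C$ meeting $L$ nontrivially, which requires discarding the tangential configurations as $\phi^*$-null. Everything after that is bookkeeping parallel to the Haar-measure proof.
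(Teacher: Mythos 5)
Your proof is correct, and it reaches the identity by a slightly different route than the paper. The paper's argument (following Corollary 4.2 of \citet{hugRandomConicalTessellations2016}) goes through the auxiliary random cone $\mathcal{C}_n^k$: it rewrites the inner integral in the law \eqref{eq:random_intersect_cone} as $\sum_{C}\mathbbm{1}_B(C)\,2U_{m-k,\phi^*}(C)$, obtains the general identity $\mathbb{E}[g(\mathcal{C}_n^k)] = \tfrac{2C(n,m)}{C(n,k)}\,\mathbb{E}[g(\mathcal{S}_n)\,U_{m-k,\phi^*}(\mathcal{S}_n)]$ for nonnegative measurable $g$, and specializes to the constant function to extract the expectation; the normalization $C(n,k)$ there is exactly the fact that \eqref{eq:random_intersect_cone} is the law of a uniformly chosen cell of $\mathcal{T}_{\mathcal{L}}$. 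You instead compute $\mathbb{E}\bigl(U_{k,\phi^*}(\mathcal{S}_n)\bigr)$ directly from \eqref{eq:random_cone} via Fubini and reduce everything to the almost-sure count $\sum_{C\in\mathcal{F}(\mathbf{H}_n)}\mathbbm{1}\bigl(C\cap L \ne \{0\}\bigr) = C(n,m-k)$, which you then prove by exhibiting the bijection between full-dimensional cells of $\mathcal{T}_L$ and the cones of $\mathcal{F}(\mathbf{H}_n)$ meeting $L$ nontrivially (modulo the null set of tangential configurations). So the two arguments rest on the same two ingredients --- the interpretation of $2U_{k,\phi^*}$ as an intersection probability and the Schl\"afli count in the section $L$ --- but your version bypasses $\mathcal{C}_n^k$ altogether and makes explicit the counting step that the paper absorbs into the normalization of \eqref{eq:random_intersect_cone}; this makes your proof more self-contained and elementary, while the paper's formulation additionally yields the weighted-distribution identity relating $\mathcal{C}_n^k$ and $\mathcal{S}_n$, which is of independent use. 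One cosmetic caution: keep the dimension bookkeeping straight, since the statement's $\mathcal{L}$ is an $(m-k)$-dimensional subspace (not a hyperplane unless $k=1$), and your count $C(n,m-k)$ correctly reflects that.
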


\begin{proof}{Proof.}
    Consider the inner integral to equation~\ref{eq:random_intersect_cone}:
    \begin{small}
    \begin{align*}
        &\int_{G(m,k)}\sum_{C \in \mathcal{F}_d(\mathbf{H}_n)}\mathbbm{1}(C\cap L \ne \{0\})1_B(C) \phi^*(dL) \\
        & \quad = \sum_{C \in \mathcal{F}_d(\mathbf{H}_n)}\mathbbm{1}_B(C)\int_{G(m,k)}1(C\cap L \ne \{0\})\phi^*(dL) \\
        & \quad = \sum_{C \in \mathcal{F}_d(\mathbf{H}_n)}\mathbbm{1}_B(C) 2U_{m-k, \phi^*}(C)
    \end{align*}
    \end{small}
    Therefore,
    \begin{small}
    \begin{align}
        \label{eq:proof_prob_equation}
        \begin{aligned}
            P(\mathcal{C}_n^k \in B) &= \frac{2}{C(n,k)}\int_{G(m,m-1)}\sum_{C \in \mathcal{F}_d(\mathbf{H}_n)}  \cdot \mathbbm{1}_B(C) U_{m-k, \phi^*}(C)\phi^n(d(\mathbf{H}_n))  
        \end{aligned}
    \end{align}
    \end{small}

    From equations~\ref{eq:proof_prob_equation} and \ref{eq:random_cone}, for a non-negative and measurable function $g$ on $\mathcal{PC}^m$, the following holds:
    \begin{equation}
        \mathbb{E}[g(\mathcal{C}_n^k)] = \frac{2 C(n,m)}{C(n,k)}\mathbb{E}[g(\mathcal{S}_n)U_{m-k, \phi^*}(\mathcal{S}_n)]
    \end{equation}
    If we choose $g$ to be the identity function, then
    \begin{equation}
        \mathbb{E}[U_{m-k, \phi^*}(\mathcal{S}_n)] = \frac{C(n,k)}{2 C(n,m)}
    \end{equation}
    as desired.%
\end{proof}

We will use the following corollary and lemma in our proof of Theorem~\ref{thm:non_intersecting_BR_regions}.

\begin{corollary}
    \label{cor:prob_of_intersection}
    For a set of random half-spaces $\{\mathcal{H}_1^{+}, \mathcal{H}_2^{+}, ..., \mathcal{H}_n^{+}\}$ distributed according to $\phi$ that satisfies Assumption~\ref{assum:generic_condition}, the probability that a hyperplane sampled from $\phi$ intersects the intersection of the half-spaces is:
    \begin{equation}
        \mathbb{E}\left[\chi\left(\bigcap_{i \in [n]}\mathcal{H}_i^+ \cap \mathcal{H}'\right)\right] = \frac{\sum_{i=0}^{m-2}{n-1 \choose i}}{\sum_{i=0}^{m-1}{n-1 \choose i}}
    \end{equation}
\end{corollary}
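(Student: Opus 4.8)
The plan is to express the left-hand side as the expectation of the generalized spherical $1$-quermassintegral of the random cone $\bigcap_{i\in[n]}\mathcal H_i^+$, and then to recognize that, conditioned on being a genuine (non-degenerate) cone, this random cone is distributed exactly as the random $(\phi,n)$-Schl\"afli cone $\mathcal S_n$ of \eqref{eq:random_cone}, so that Corollary~\ref{corollary:expected_size_of_schlafli_cones} finishes the job. The first step is the elementary identity that for a fixed polyhedral cone $K\in\mathcal{PC}^m$ whose trace $K\cap\mathbb S^{m-1}$ is not a great subsphere, the definition of the generalized spherical $j$-quermassintegral at $j=1$, combined with the fact that $\chi(K\cap L)=\mathbbm 1(K\cap L\ne\emptyset)$ for $\phi$-almost every $L\in G(m,m-1)$ (p.~40 of \citet{schneiderStochasticIntegralGeometry2008}), gives $\mathbb E_{\mathcal H'\sim\phi}[\chi(K\cap\mathcal H')]=2\,U_{1,\phi}(K)$. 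Under Assumption~\ref{assum:generic_condition} the hyperplanes $\mathcal H_1,\dots,\mathcal H_n$ are in general position almost surely, so whenever $\bigcap_i\mathcal H_i^+\ne\{0\}$ it is a full-dimensional cell of the induced conical tessellation and its spherical trace is not a great subsphere; applying the identity with $K=\bigcap_i\mathcal H_i^+$, conditioning on $\mathcal H_1,\dots,\mathcal H_n$, and using Fubini, the left-hand side of the corollary equals $\mathbb E\big[2\,U_{1,\phi}(\bigcap_{i\in[n]}\mathcal H_i^+)\big]$.

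The heart of the argument, and the step I expect to be the main obstacle, is identifying the distribution of $\bigcap_i\mathcal H_i^+$ with that of $\mathcal S_n$. Each $\mathcal H_i^+$ is $\{\x:\langle\x,v_i\rangle\ge 0\}$ for a normal $v_i$ with the \emph{even} law $\phi'$ of Assumption~\ref{assum:generic_condition}. Introducing independent Rademacher signs $\sigma_i$ and replacing $v_i$ by $\sigma_iv_i$ leaves the joint law of $(\mathcal H_1^+,\dots,\mathcal H_n^+)$ unchanged, so conditionally on the unoriented arrangement the cone $\bigcap_i\mathcal H_i^+$ is the cell selected by a uniformly random sign vector $\sigma\in\{\pm1\}^n$. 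A short separation/general-position argument then gives a Wendel-type dichotomy: for a general-position arrangement, $\bigcap_i\{\x:\sigma_i\langle\x,v_i\rangle\ge0\}$ is either full-dimensional or collapses to $\{0\}$ (it is nontrivial iff the reoriented normals $\sigma_iv_i$ lie in a common open half-space), and exactly $C(n,m)$ of the $2^n$ sign patterns are of the first kind, each selecting a distinct cell. Hence, conditioned on $\bigcap_i\mathcal H_i^+\ne\{0\}$, the cone is uniform over the $C(n,m)$ cells, i.e.\ it has the law of $\mathcal S_n$, and the degenerate event contributes nothing to $U_{1,\phi}$; the quantity in the corollary is exactly this non-degenerate (conditional) expectation.

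Finally, Corollary~\ref{corollary:expected_size_of_schlafli_cones} with $k=1$ yields $\mathbb E[U_{1,\phi}(\mathcal S_n)]=\frac{C(n,m-1)}{2C(n,m)}$, so $\mathbb E[2\,U_{1,\phi}(\mathcal S_n)]=\frac{C(n,m-1)}{C(n,m)}$; substituting $C(n,k)=2\sum_{i=0}^{k-1}\binom{n-1}{i}$ gives $\frac{\sum_{i=0}^{m-2}\binom{n-1}{i}}{\sum_{i=0}^{m-1}\binom{n-1}{i}}$, as claimed. The only delicate point is the second step — showing that the all-``$+$'' intersection, once non-degenerate, is genuinely uniform over the cells, and that general position forces the ``non-degenerate or $\{0\}$'' dichotomy — and this is precisely where the evenness and the measure-zero-on-linear-subspaces parts of Assumption~\ref{assum:generic_condition} are used.
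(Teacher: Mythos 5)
Your proof is correct and ultimately lands where the paper does — an application of Corollary~\ref{corollary:expected_size_of_schlafli_cones} with $\phi^*=\phi$ and $k=1$ — but you supply a step the paper's one-line proof leaves entirely implicit: the identification of the ``all-plus'' cone $\bigcap_{i\in[n]}\mathcal{H}_i^{+}$ with the random $(\phi,n)$-Schl\"afli cone $\mathcal{S}_n$. Your sign-flip argument (evenness of $\phi'$ makes the orientation signs i.i.d.\ uniform and independent of the unoriented arrangement, the Wendel-type dichotomy in general position gives ``full-dimensional cell or $\{0\}$'', and exactly $C(n,m)$ of the $2^n$ sign patterns select distinct cells) is exactly the Cover--Efron/Schl\"afli duality and is sound; it is the genuine content behind the paper's ``direct consequence'' claim. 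You also correctly flag the one real subtlety: because the degenerate event $\bigcap_i\mathcal{H}_i^{+}=\{0\}$ has probability $1-C(n,m)/2^n>0$ when $n>m$ and contributes $\chi=0$, the \emph{unconditional} expectation is $C(n,m-1)/2^n$, whereas the stated formula $C(n,m-1)/C(n,m)=\sum_{i=0}^{m-2}\binom{n-1}{i}\big/\sum_{i=0}^{m-1}\binom{n-1}{i}$ is the expectation conditioned on nondegeneracy. Your conditional reading is the right one for how the corollary is used in the proof of Theorem~\ref{thm:non_intersecting_BR_regions} (there one only cares about the event that the positive cone is nontrivial, and the conditional value is the larger quantity, so it is the correct upper bound), so the discrepancy is with the paper's literal phrasing rather than with your argument. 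In short: same route as the paper, with the missing distributional identification and the degeneracy caveat made explicit.
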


\begin{proof}{Proof.}
    The corollary is a direct consequence of Corollary~\ref{corollary:expected_size_of_schlafli_cones} with $\phi^* =\phi$.%
\end{proof}

\begin{lemma}\label{lemma:upper_bound_sum_choose}
    Let $N,k \in \mathbb{N}$, with $k < \frac{N+1}{2}$. Then
    \begin{equation}
        \sum_{i=0}^{k} {N \choose i} \le {N \choose k}\frac{N - k + 1}{N - 2k + 1} 
    \end{equation}
\end{lemma}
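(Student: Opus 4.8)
The plan is to exploit the rapid decay of the binomial coefficients $\binom{N}{i}$ as $i$ decreases below $k$: under the hypothesis $k < \tfrac{N+1}{2}$ the ratio of consecutive coefficients stays bounded by a constant strictly less than one, so the partial sum is dominated by a convergent geometric series whose sum is exactly $\binom{N}{k}\cdot\frac{N-k+1}{N-2k+1}$.

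First I would record the elementary identity for the ratio of consecutive binomial coefficients: for $1 \le i \le k$,
\[
\frac{\binom{N}{i-1}}{\binom{N}{i}} = \frac{i}{N-i+1}.
\]
Since $i \le k$ forces $N-i+1 \ge N-k+1$, each such ratio is at most $q := \dfrac{k}{N-k+1}$. The hypothesis $k < \tfrac{N+1}{2}$ is precisely $2k < N+1$, i.e.\ $k < N-k+1$, so $q < 1$ (and, incidentally, $N-2k+1 > 0$, so the claimed right-hand side is well defined and positive).

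Next I would telescope this bound. For each $0 \le j \le k$,
\[
\frac{\binom{N}{k-j}}{\binom{N}{k}} = \prod_{i=k-j+1}^{k}\frac{\binom{N}{i-1}}{\binom{N}{i}} \le q^{\,j}.
\]
Summing over $j$ and then enlarging the finite geometric sum to an infinite one,
\[
\sum_{i=0}^{k}\binom{N}{i} = \binom{N}{k}\sum_{j=0}^{k}\frac{\binom{N}{k-j}}{\binom{N}{k}} \le \binom{N}{k}\sum_{j=0}^{\infty} q^{\,j} = \binom{N}{k}\cdot\frac{1}{1-q}.
\]
Finally I would simplify $\dfrac{1}{1-q} = \dfrac{N-k+1}{(N-k+1)-k} = \dfrac{N-k+1}{N-2k+1}$, which gives the stated inequality.

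There is no real obstacle here: the argument is a routine binomial-tail estimate. The only points that need care are verifying that the hypothesis $k<\tfrac{N+1}{2}$ is exactly what makes the geometric ratio $q$ strictly less than one (so the series converges and the denominator is positive), and setting up the telescoping product over the correct index range $i = k-j+1,\dots,k$.
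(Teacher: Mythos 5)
Your proof is correct and follows essentially the same route as the paper: bound each term $\binom{N}{i}/\binom{N}{k}$ by a power of the ratio $q = \frac{k}{N-k+1}$, note $q<1$ exactly because $k<\frac{N+1}{2}$, and sum the resulting geometric series to obtain $\frac{1}{1-q}=\frac{N-k+1}{N-2k+1}$. Your write-up simply makes the telescoping of consecutive-coefficient ratios explicit, which the paper leaves implicit.
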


\begin{proof}{Proof.}
    Consider the sum:
    \begin{small}
    \begin{align}
        \frac{\sum_{i=0}^{k} {N \choose i}}{{N \choose k}} &= 1 + \frac{k}{N-k+1} + \frac{k(k-1)}{(N-k+1)(N-k+2)}  + \cdots \\
         &\le 1 + \frac{k}{N-k+1} + \left(\frac{k}{N-k+1}\right)^2 + \cdots \\
        & = \frac{N - k + 1}{N - 2k + 1} 
    \end{align}
    \end{small}
    Note that the final equality is due to $k < \frac{N+1}{2}$ implying that $\frac{k}{N-k+1} < 1$.%
\end{proof}

We are now ready to prove the main technical result, Theorem~\ref{thm:non_intersecting_BR_regions}.

\begin{proof}{Proof of Theorem~\ref{thm:non_intersecting_BR_regions}.}
    Note that by Boole's Inequality
    \begin{small}
    \begin{align*}
        P\left(\bigcup_{i \in [n]}\left(\bigcap_{\theta \in\Theta} \texttt{BR}_\theta(i)\right) \ne \emptyset\right) 
        & \le \sum_{i=1}^n P\left(\bigcap_{\theta \in\Theta} \texttt{BR}_\theta(i) \ne \emptyset\right).
    \end{align*} 
    \end{small}

    Let us consider then the probability $P\left(\bigcap_{\theta \in\Theta} \texttt{BR}_\theta(i) = \emptyset\right)$.
    It is easy to see that $\bigcap_{\theta \in\Theta} \texttt{BR}_\theta(i) = \bigcap_{\substack{ \theta \in \Theta \\ j \in [n]\setminus\{i\}}}H_{i,j}^{\theta,+} \cap \Delta^m$.
    Let $\Theta' \subseteq \Theta$ and let $j$ be such that the distribution $\phi$ of $\mathcal{H}_{i,j}^{\theta'}$ is even and assigns a measure zero to linear subspaces of $\mathbb{R}^m$.
    Then, $\bigcap_{\theta \in\Theta} \texttt{BR}_\theta(i) \subseteq \bigcap_{\theta', \in \Theta'}H_{i,j}^{\theta',+}$.
    Moreover, given Assumption~\ref{assum:generic_condition} and Corollary~\ref{cor:prob_of_intersection}, the expected probability that another random plane, $\mathcal{H}$, distributed according to $\phi$ will intersect $\bigcap_{\theta' \in \Theta'}H_{i,j}^{\theta',+}$ is given by:
    \begin{equation*}
        \mathbb{E}\left[\chi\left(\bigcap_{\theta' \in \Theta'}\mathcal{H}_{i,j}^{\theta',+} \cap \mathcal{H}\right)\right] =\frac{\sum_{i=0}^{m-2}{|\Theta'|-1 \choose i}}{\sum_{i=0}^{m-1}{|\Theta'|-1 \choose i}} \\
    \end{equation*}
    For notational simplicity, we will denote $k = |\Theta'|$. Then, simplifying
    \begin{align*}
        \mathbb{E}\left[\chi\left(\bigcap_{\theta' \in \Theta'}\mathcal{H}_{i,j}^{\theta',+} \cap \mathcal{H}\right)\right] =\frac{\sum_{i=0}^{m-2}{k-1 \choose i}}{{k-1 \choose m-1} +\sum_{i=0}^{m-2}{k-1 \choose i}}
    \end{align*}
    Note that by Lemma~\ref{lemma:upper_bound_sum_choose} and given that $\frac{k}{2} > m-2$ we can define the function $p(k)$ as an upper bound on the above expectation as follows:
    \begin{align}
        p(k) = \frac{{k-1 \choose m-2}\frac{k-1 - (m-3)}{k - 1 - (2m - 5)}}{{k-1 \choose m-1} + {k-1 \choose m-2}\frac{k-1 - (m-3)}{k - 1 - (2m - 5)}}\ge \frac{\sum_{i=0}^{m-2}{k-1 \choose i}}{{k-1 \choose m-1} +\sum_{i=0}^{m-2}{k-1 \choose i}} = \mathbb{E}\left[\chi\left(\bigcap_{\theta' \in \Theta'}\mathcal{H}_{i,j}^{\theta',+} \cap \mathcal{H}\right)\right]
    \end{align}
    It is trivial to see that $p(k)$ is decreasing in $k$ for $k>0$.
    Therefore, for a $\underline{k}$ such that $p(\underline{k}) \le \overline{p}$, for any $k \ge \underline{k}$, $\overline{p} \ge \mathbb{E}\left[2U_{1,\phi}\left(\bigcap_{\theta \in \Theta'}H_{i,j}^{\theta,+}\right)\right]$.
    For a given $p$ we can solve for the $k$ such that $p(k) = p$ as follows:
    \begin{equation*}
        p{k-1 \choose m-1} = (1-p){k-1 \choose m-2}\frac{k-1 - (m-3)}{k - 1 - (2m - 5)}.
    \end{equation*}
    By re-arranging and canceling:
    
    \begin{align*}
        (k - m + 1)(k - 2m + 4) = \frac{1-p}{p}(m-1)(k-m+2).
    \end{align*}
    Let us denote $q = \frac{1-p}{p}$.
    Then, we can rewrite this as a quadratic equation:
    \begin{align*}
        0 = k^2 + k(5 - 3m - q(m-1)) + (2+q)m^2  -(6+3q)m + (4+2q)
    \end{align*}
    This can be solved by the quadratic formula (choosing the positive root) as follows:
    \begin{small}
    \begin{align*}
        k &=\frac{1}{2} \bigg(-((q+1)(1-m)+4-2m) +\sqrt{m^2(q^2+2q+1) + m(-2q^2-4q-6) + q^2+2q+9} \bigg) \\
        &= \frac{1}{2} \bigg(((q+1)(m+1)-4+2m) +\sqrt{(m-1)^2(q+1)^2 - 4m+8}\bigg)
    \end{align*}
    \end{small}
    Note that $m\ge2$ implies $ - 4m+8\le 0$. Therefore if we choose $\underline{k}$ such that
    \begin{align*}
        \underline{k} &\ge \frac{1}{2}\bigg(((q+1)(m+1)-4+2m) +  \sqrt{(m-1)^2(q+1)^2} \bigg) \\
        &\ge \frac{1}{2} \bigg(((q+1)(m+1)-4+2m) +   \sqrt{(m-1)^2(q+1)^2 - 4m+8}\bigg)
    \end{align*}
    Therefore, for $\underline{q} = \frac{1-\overline{p}}{\overline{p}}$ set $\underline{k} = m(\underline{q}+2) = m\left(\frac{1}{\overline{p}} + 1\right)$, then for all $k \ge \underline{k}$: 
    \begin{equation}
        \label{eq:expectation_bound}
        \overline{p} \ge p(k) \ge \mathbb{E}\left[\chi\left(\bigcap_{\theta' \in \Theta'}\mathcal{H}_{i,j}^{\theta',+} \cap \mathcal{H}\right)\right].
    \end{equation}

    
    Equation~\ref{eq:expectation_bound} is a bound on the \emph{expected} probability that a random hyperplane from $G(m,m-1)$ drawn according to $\phi$ intersects the set $\bigcap_{\theta \in \Theta'}H_{i,j}^{\theta,+}$.
    The probability can be bounded directly by using the Markov inequality:
    \begin{equation}
        P\left(\chi\left(\bigcap_{\theta' \in \Theta'}\mathcal{H}_{i,j}^{\theta',+} \cap \mathcal{H}\right) > d \underline{p} \right) \le \frac{1}{d}.
    \end{equation}

    Now consider the $(|\Theta| - |\Theta'|)$-hyperplanes generated by the set of types $\Theta \setminus \Theta'$.
    Consider a random hyperplane in this set $H_{i,j}^{\theta}$ for some $\theta \in \Theta \setminus \Theta'$.
    With probability greater than $1 - \frac{1}{d}$, the chance that $H_{i,j}^{\theta}$ intersects $\bigcap_{\theta' \in \Theta'}H_{i,j}^{\theta',+}$ is less than $d\overline{p}$.
    If $H_{i,j}^{\theta}$ does not intersect $\bigcap_{\theta' \in \Theta'}H_{i,j}^{\theta',+}$, then there is a $\frac{1}{2}$ chance that $\bigcap_{\theta' \in K}H_{i,j}^{\theta',+} \cap H_{i,j}^{\theta,+} = \emptyset$, by the assumption that $\phi$ is even.

    Denote by $J$ the condition such that $\chi\left(\bigcap_{\theta' \in \Theta'}\mathcal{H}_{i,j}^{\theta',+} \cap \mathcal{H}\right) \le d \underline{p}$
    \begin{small}
    \begin{align}
        &P\left(\bigcap_{\theta \in\Theta} \texttt{BR}_\theta(i) \ne \emptyset\right) \le P\left(\bigcap_{\theta' \in \Theta'}\mathcal{H}_{i,j}^{\theta',+} \cap \bigcap_{\theta \in \Theta \setminus \Theta}\mathcal{H}_{i,j}^{\theta,+} \ne \emptyset\right)\\
        &\begin{aligned}
            \label{eq:total_probability_bound}
            =P&\left(\bigcap_{\theta' \in \Theta'}\mathcal{H}_{i,j}^{\theta',+} \cap \bigcap_{\theta \in \Theta \setminus \Theta}\mathcal{H}{i,j}^{\theta,+} \ne \emptyset \mid J\right)P\left(J \right) 
             +P\left(\bigcap_{\theta' \in \Theta'}\mathcal{H}_{i,j}^{\theta',+} \cap \bigcap_{\theta \in \Theta \setminus \Theta}\mathcal{H}{i,j}^{\theta,+} \ne \emptyset \mid \neg J\right)P\left(\neg J \right)
        \end{aligned} \\
        &\le P\left(\bigcap_{\theta' \in \Theta'}\mathcal{H}_{i,j}^{\theta',+} \cap \bigcap_{\theta \in \Theta \setminus \Theta}\mathcal{H}_{i,j}^{\theta,+} \ne \emptyset \mid J\right) + \frac{1}{d} \\
        &=P\left(\bigcap_{\theta \in \Theta \setminus \Theta'} \bigcap_{\theta \in \Theta'}\mathcal{H}_{i,j}^{\theta',+} \cap \mathcal{H}_{i,j}^{\theta,+} \ne \emptyset \mid J\right) + \frac{1}{d} \\
        \label{eq:intersection_of_empty_sets}
        &\le P\left(\bigcap_{\theta \in \Theta \setminus \Theta'} \left(\bigcap_{\theta' \in \Theta'}\mathcal{H}_{i,j}^{\theta',+} \cap \mathcal{H}_{i,j}^{\theta,+} \ne \emptyset\right) \mid J\right) + \frac{1}{d} \\
        \label{eq:independence_of_H}
        &= \prod_{\theta \in \Theta \setminus \Theta'}P\left(\bigcap_{\theta' \in \Theta'}\mathcal{H}_{i,j}^{\theta',+} \cap \mathcal{H}_{i,j}^{\theta,+} \ne \emptyset \mid J\right) + \frac{1}{d}.
    \end{align}
    \end{small}
    Note that equation~\ref{eq:total_probability_bound} is due to the law of total probability, equation~\ref{eq:intersection_of_empty_sets} is true since if the member of an intersection of sets is empty then the intersection will also be empty, and equation~\ref{eq:independence_of_H} is due to the independence across types in Assumption~\ref{assum:generic_condition}.
    Consider the first term in equation~\ref{eq:independence_of_H}.
    Denote by $K = \chi\left(\bigcap_{\theta' \in \Theta'}H_{i,j}^{\theta',+} \cap H_{i,j}^{\theta}\right)$
    By the previous discussion, again using the law of total probability and noting that the Euler characteristic $K = \chi\left(\bigcap_{\theta' \in \Theta'}H_{i,j}^{\theta',+} \cap H_{i,j}^{\theta}\right)$ will be $1$ only if $\bigcap_{\theta' \in \Theta'}H_{i,j}^{\theta',+} \cap H_{i,j}^{\theta,+} \ne \emptyset$:
    \begin{small}
    \begin{align*}
        &\prod_{\theta \in \Theta \setminus \Theta'}P\left(\bigcap_{\theta' \in \Theta'}\mathcal{H}_{i,j}^{\theta',+} \cap \mathcal{H}_{i,j}^{\theta,+} \ne \emptyset \mid J\right) 
        \\ &  =\prod_{\theta \in \Theta \setminus \Theta'} \Biggl(P\left(K\mid J\right) 
         + P\left(\neg K\mid J\right) P\left(\bigcap_{\theta' \in \Theta'}\mathcal{H}_{i,j}^{\theta',+} \cap \mathcal{H}_{i,j}^{\theta,+} \ne \emptyset \mid J, \neg K\right)\Biggl) \\
        &  \le \prod_{\theta \in \Theta \setminus \Theta'}\left(d\overline{p} + (1-d\overline{p})\left(\frac{1}{2}\right)\right) \\
        & = \left(\frac{1+d\overline{p}}{2}\right)^{|\Theta| - |\Theta'|}.
    \end{align*}
    \end{small}
    Therefore
    \begin{align*}
        P\left(\bigcap_{\theta \in\Theta} \texttt{BR}_\theta'(i) \ne \emptyset\right) \le \left(\frac{1+d\overline{p}}{2}\right)^{|\Theta| - |\Theta'|} + \frac{1}{d}.
    \end{align*}
    Set $d = \frac{1}{2\overline{p}}$ and choose a set $\Theta'$ such that $|\Theta'| = |\Theta| - \left\lceil\log_{\frac{3}{4}}\frac{1}{|\Theta|}\right\rceil$. Given that $\overline{p} = \frac{m}{|\Theta'| + m}$ then
    \begin{small}
    \begin{align*}
        \left(\frac{1+d\overline{p}}{2}\right)^{|\Theta| - |\Theta'|} + \frac{1}{d} &= \left(\frac{3}{4}\right)^{\left\lceil\log_{\frac{3}{4}}\frac{1}{|\Theta|}\right\rceil} + 2\frac{m}{|\Theta| - \left\lceil\log_{\frac{3}{4}}\frac{1}{|\Theta|}\right\rceil + m} \\
        & \le \left(\frac{3}{4}\right)^{\log_{\frac{3}{4}}\frac{1}{|\Theta|}} + 2\frac{m}{|\Theta| - \left\lceil\log_{\frac{3}{4}}\frac{1}{|\Theta|}\right\rceil + m} \\
        &= \frac{1}{|\Theta|} + 2\frac{m}{|\Theta| - \left\lceil\log_{\frac{3}{4}}\frac{1}{|\Theta|}\right\rceil + m}
    \end{align*}
    \end{small}

    Therefore
    \begin{align*}
        P\Bigg(&\bigcup_{i \in [n]}\left(\bigcap_{\theta \in\Theta} \texttt{BR}_\theta(i)\right) \ne \emptyset \Bigg) \le \sum_{i=1}^n P\left(\bigcap_{\theta \in\Theta} \texttt{BR}_\theta'(i) \ne \emptyset\right) \\
        &\le n\left(\frac{1}{|\Theta|} + 2\frac{m}{|\Theta| - \left\lceil\log_{\frac{3}{4}}\frac{1}{|\Theta|}\right\rceil + m}\right) = O\left(\frac{1}{|\Theta|}\right)
    \end{align*}
    for a fixed $n$ and $m$.%
\end{proof}

Using Theorem~\ref{thm:non_intersecting_BR_regions}, we can prove Corollary~\ref{cor:learning_is_effective_with_high_probability}.

\begin{proof}{Proof of Corollary~\ref{cor:learning_is_effective_with_high_probability}.}
    First, note that by Assumption~\ref{assum:generic_condition} the set of hyperplanes $H_{i,j}^\theta$ for all $i,j \in [n]$ and $\theta \in \Theta$ are in general position.
    Additionally, if the hyperplanes that define the boundary of the simplex $\Delta^m$, $\{\x \mid x_i = 0\}_{i\in [m]}$, are added to the set, they are still in general position with probability one.
    There are two cases that must be considered either the $\texttt{BSE}(\Theta)$, denoted $\x^*$, is at the vertex of the simplex $\Delta^m$, i.e. there exists some $i \in [m]$ such that $x^*_i = 1$, or $\x^*$ is not at the vertex of the simplex.
    We will first consider the case where $\x^*$ is not at the vertex of the simplex.
    In this case, $\x^*$ must be at the intersection of exactly $m$ distinct hyperplanes, where at least one of those hyperplanes is $H_{i,j}^{\theta^*}$ for some $\theta^* \in \Theta$ and $i,j \in [n]$.
    This is due to $\x^*$ not being at the vertex of the simplex and the leader's objective function being linear.

    By Assumption~\ref{assum:generic_condition} and Theorem~\ref{thm:non_intersecting_BR_regions}, with probability greater than $1 - O\left(\frac{1}{|\Theta|}\right)$ there exists a $\Theta' \subset \Theta$ such that $\texttt{BR}(\Theta') \cap \texttt{BR}(\{\theta^*\}) = \emptyset$.
    Note that this implies that $\theta^* \notin \Theta'$.
    Therefore, $\texttt{BSE}(\Theta') \ne \x^*$ since the hyperplane $H_{i,j}^{\theta^*}$ is not in the set of indifference curves for $\Theta'$.

    Now, we consider the case where $\x^*$ is a vertex point of $\Delta^m$.
    We will denote by $\x^i$ the strategy $x_{i} = 1$ and $x_{j} = 0$ for all $j \in [m] \setminus \{i'\}$.
    Therefore, there exists an $i^* \in [m]$ such that $\x^* = \x^{i^*}$.
    First, note that there are $m$ vertices, and each vertex is the intersection of exactly $m$ hyperplanes.
    Moreover, with probability one, each follower type $\theta \in \Theta$ will have exactly one best response, due to the set of hyperplanes $H_{i,j}^\theta$ for all $i,j \in [n]$ and $\{\x \mid x_i = 0\}_{i\in [m]}$ being in general position.
    In this case, by the assumption that leader action $i^* \in [m]$ is not a dominant strategy, there exists some follower action $j' \in [n]$ and some other leader action $i' \in [m]$ such that $R_{i^*,j'} \le R_{i',j'}$.
    Therefore, if there is some subset of types $\Theta' \in \Theta$ such that $\texttt{BR}(\Theta') = \{j'\}$, then $\texttt{BSE}(\Theta') \ne \x^{i^*}$ since the strategy $(1 - \epsilon)\x^{i^*} + \epsilon\x^{j'}$ is a strict improvement for a sufficiently small $\epsilon$, given that the indifference curves and boundaries of the simplex are in general position and $\boldsymbol{\mu}$ has full support.

    Then it suffices to show that the probability that there exists a vertex of the simplex such that not all follower actions are a best response for at least one follower type is sufficiently small.
    By the symmetry of Assumption~\ref{assum:generic_condition} for a random leader strategy $\x$, the probability that any given $j \in [n]$ is a best response for type $\theta$ is uniform over $[n]$.
    Therefore, define $E_{i,j}^\theta$ to be the event where follower action $j \in [n]$ is a best response to leader strategy $\x^i$ for $\theta \in \Theta$ and $i \in [m]$.
    Then, we must bound the probability that for at least one $i$ and $j$ $E_{i,j}^\theta$ does not happen for all $\theta \in \Theta$:
    \begin{align*}
        p &= P\left(\bigcup_{i \in [m]}\bigcup_{j \in [n]}\bigcap_{\theta \in \Theta} \neg E_{i,j}^\theta\right) 
         \le mnP\left(\bigcap_{\theta \in \Theta} \neg E_{i,j}^\theta\right) 
        = mn \left(\frac{n-1}{n}\right)^{|\Theta|} 
         \le mn e^{-\frac{|\Theta|}{n}} = O\left(e^{-|\Theta|}\right)
    \end{align*}
    Since $O\left(\frac{1}{|\Theta|}\right) \ge O\left(e^{-|\Theta|}\right)$, the probability that Assumption~\ref{assumption:learnable_subgroup} is not satisfied is $O\left(\frac{1}{|\Theta|}\right)$.%
\end{proof}

\subsection{Additional Simulation Results}
\label{subsection:effective_learing_additional_simulation_results}

Table~\ref{table:assumption_justify_2} shows the proportion of random games that permit effective learning when all game parameters for $R$ and $C$ are drawn IID from the standard normal distribution.
The insights are identical to Table~\ref{table:assumption_justify}.

\begin{table}[tbh]
    \begin{minipage}{0.24\linewidth}\resizebox{\textwidth}{!}{
    \begin{tabular}{|c|c|c|c|} 
     \hline
       $|\Theta|=2$ & $n=5$  & $n=10$  & $n=15$\\ 
     \hline
     $m=5$  & 48/100  & 51/100& 72/100\\
     \hline
     $m=10$  & 27/100  & 50/100& 62/100\\
     \hline
     $m = 15$  & 25/100   & 48/100& 50/100 \\
     \hline
    \end{tabular}}
    \end{minipage}
    \begin{minipage}{0.24\linewidth}\resizebox{\textwidth}{!}{
    \begin{tabular}{|c|c|c|c|} 
     \hline
       $|\Theta|=3$ & $n=5$  & $n=10$  & $n=15$\\ 
     \hline
     $m=5$  & 66/100  & 83/100& 93/100\\
     \hline
     $m=10$  & 48/100  & 75/100 & 87/100\\
     \hline
     $m = 15$  & 40/100   & 70/100 & 84/100\\
     \hline
    \end{tabular}}
    \end{minipage}
    \begin{minipage}{0.24\linewidth}\resizebox{\textwidth}{!}{
        \begin{tabular}{|c|c|c|c|} 
     \hline
       $|\Theta|=4$ & $n=5$  & $n=10$  & $n=15$\\ 
     \hline
     $m=5$  & 70/100  & 90/100& 100/100\\
     \hline
     $m=10$  & 62/100  & 86/100 & 93/100\\
     \hline
     $m = 15$  & 48/100   & 86/100 & 92/100\\
     \hline
    \end{tabular}}
    \end{minipage}
    \begin{minipage}{0.24\linewidth}\resizebox{\textwidth}{!}{
        \begin{tabular}{|c|c|c|c|} 
     \hline
       $|\Theta|=5$ & $n=5$  & $n=10$  & $n=15$\\ 
     \hline
     $m=5$  & 74/100  & 94/100& 99/100\\
     \hline
     $m=10$  & 59/100  & 87/100& 96/100\\
     \hline
     $m = 15$  & 60/100   &  87/100 & 95/100\\
     \hline
    \end{tabular}}
    \end{minipage}
    \caption{Probability of Assumption~\ref{assumption:learnable_subgroup} Being Satisfied Under $100$ Normal Random Bayesian Stackelberg Game Instances. Probability of assumption \ref{assumption:learnable_subgroup} being satisfied under $100$ random bayesian Stackelberg game instances. To generate each instance, we sample every game parameter  $R_{i,j}$ and $C^\theta_{i,j}$  $\forall i \in [m], j\in [n], \theta \in \Theta$ independently from the standard normal distribution. The prior distribution $\bmu$ is uniform over $\Theta$. Each row represents the number of leader actions, and the column represents the number of follower actions.
    }
    \label{table:assumption_justify_2}
    \end{table}


\section{Omitted Details in the Proof of Theorem \ref{thm:infinte_T}}\label{sec:appendix_efficacy}

\begin{proof}{Proof of Theorem~\ref{thm:infinte_T}.}
The high-level proof idea is as follows. Given any $\RME$ $\{p^*_{\theta,j},\bx^*_{\theta, j}\}_{j\in [n], \theta \in \Theta}$, we want to construct a dynamic policy (not necessarily the optimal $\DSE$) that can ``simulate'' the given $\RME$. Unfortunately, this construction has two main challenges that we have to overcome.  
First, the $\RME$ offers a menu of randomized strategies $\{p^*_{\theta, j}, \bx^*_{\theta, j}\}_{j \in [n]}$ for each type $\theta$. 
Because of incentive compatibility, the follower reports type $\theta$ truthfully and the leader plays a randomly selected strategy. 
However, the dynamic policy plays a single strategy at each round. As a result, the dynamic policy must use the initial rounds to learn the follower's type $\theta$ and then simulate the corresponding randomized strategy on the $\RME$.  
Another challenge comes from the simulation of randomization over strategies with deterministically chosen strategies. 
An intuitive idea is to play $\bx^*_{\theta,j}$ for the number of rounds that is proportional to $p^*_{\theta,j}$. Unfortunately, this requirement cannot be exactly fulfilled  unless the number of rounds can be divided \textit{evenly} for every $j$ according to $p^*_{\theta,j}$; Otherwise, the follower's incentives may be distorted. 
An overview of the high-level idea is represented in the Figure~\ref{fig:sequece_of_leader_strategies}.
    \begin{figure}[tbh]
    \centering
    \includegraphics[width=0.7\linewidth]{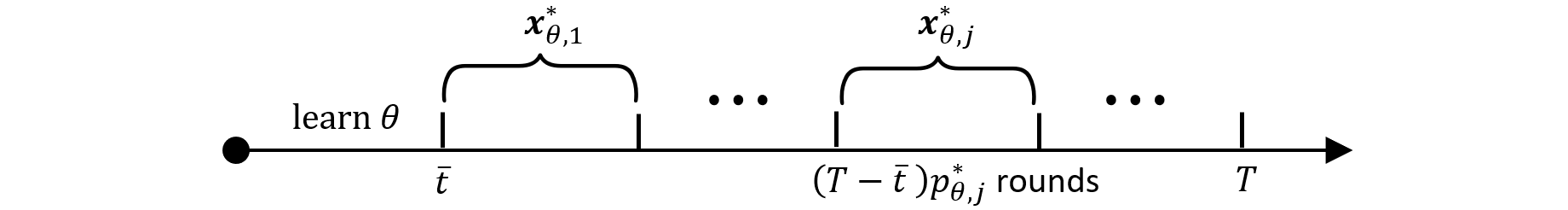}
    \caption{Illustration of the Sequence of Leader Strategies. The leader first plays a strategy for $\overline{t}$-rounds to elicit the followers type.
    Then the leader plays each strategy in the randomized set of strategies for approximately $(T - \overline{t})p^*_{\theta,j}$-rounds.}
    \label{fig:sequece_of_leader_strategies}
    \end{figure} 
    We show how to address these challenges in three steps. 

    \textbf{Step 1: Eliciting follower type $\theta$.} This phase takes $\bar{t}  = \lceil\log_n(|\Theta|)\rceil$ rounds. Construct function $H: \Theta \rightarrow [n]^{\bar{t}}$ such that $H(\theta)$ equals precisely the length-$\bar{t}$ \emph{bit sequence} of the $n$-nary representation of $\theta$ (interpreting $\theta$ as an integer equaling at most $|\Theta|$). By definition, each $\theta$ has a unique $H(\theta)$ value. The leader strategies in these first $\bar{t}$ rounds can be arbitrary. The main challenge is deferred to the second phase below, during which we will carefully design the strategies to incentivize each follower type $\theta$ to respond exactly with action sequence $H(\theta)$ in this first phase. 
    
    \textbf{Step 2: Constructing an approximately-optimal strictly-IC randomized menu.} There are two reasons why we need strict incentive compatibility when simulating the randomized menu. The first is to incentivize every follower $\theta$ to respond with $H(\theta)$ in the above Step 1.  To achieve this, we need to guarantee that, from $\theta$'s point of view, the strategy sequence in the remaining  $T-\bar{t}$ rounds for him is strictly better --- in fact at least additively $\bar{t}$ better --- than the strategy sequence for any other $\theta'$. The second reason is more intrinsic: we have to round all the probabilities in the randomized menu being played after $\bar{t}$ into multipliers of $1/(T - \bar{t})$, such that they can be \textit{precisely} realized by deterministic strategies. Because of this, we also need a strictly IC menu to make up for the incentive distortion during the rounding of probabilities. 
    In this step, we show that there always exists such a randomized menu that is $O\left(\sqrt{\frac{\log |\Theta|}{T\delta^2}}\right)$ optimal and $O\left(\sqrt{\frac{\log |\Theta|}{T}}\right)$-strictly IC (this is where the non-degenerate inducibility gap is needed). A high-level idea is to construct a new randomized menu that is both near-optimal and strictly IC by mixing the $\RME$ with the $\delta$-strictly IC menu.
    \begin{lemma}\label{lem:constructing}
    For Stackelberg games with inducibility gap $\delta$ and $T\geq \Omega(\bar{t}^2) $, there always exists a randomized menu that is $O\left(\sqrt{\frac{\log |\Theta|}{T\delta^2}}\right)$ optimal and $O\left(\sqrt{\frac{\log |\Theta|}{T}}\right)$-strictly IC. 
    \end{lemma}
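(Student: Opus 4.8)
The plan is to \emph{not} build the menu from scratch but to obtain it as a convex combination of the $\RME$ menu with a $\delta$-strictly IC menu, exploiting the fact that the leader's expected utility and every incentive-compatibility slack are affine functionals on the convex set of randomized menus. The guiding intuition is that the $\RME$ is optimal but only weakly IC, the $\delta$-strictly IC menu of Definition~\ref{def:gap} has a comfortable incentive margin but possibly poor utility, and a small tilt of the former towards the latter buys the incentive slack we need at a utility cost proportional to the tilt.

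First I would record the two relevant functionals. Represent a randomized menu as a map $\theta \mapsto q_\theta$, where $q_\theta$ is a finitely-supported distribution over mixed strategies in $\Delta^m$ (the $\{p_{\theta,j},\bx_{\theta,j}\}$ form is the special case where one indexes the support of $q_\theta$ by the best response each strategy induces for type $\theta$). For a fixed type $\theta$ the maps $\bx\mapsto V^\theta(\bx, j^{*\theta}(\bx))$ and $\bx\mapsto U(\bx, j^{*\theta}(\bx))$ depend only on $\bx$ (ties broken for the leader), not on the menu, so the leader's expected utility $U(M) = \sum_\theta \mu(\theta)\, \ex_{\bx\sim q_\theta}\!\big[U(\bx, j^{*\theta}(\bx))\big]$ is affine in $M$, and for each ordered pair $\theta\neq\theta'$ so is the slack
\[
  s_{\theta,\theta'}(M) \;=\; \ex_{\bx\sim q_\theta}\!\big[V^\theta(\bx, j^{*\theta}(\bx))\big] \;-\; \ex_{\bx\sim q_{\theta'}}\!\big[V^\theta(\bx, j^{*\theta}(\bx))\big],
\]
which is exactly the quantity appearing in Definition~\ref{def:gap}. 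In particular the set of IC menus $\{M : s_{\theta,\theta'}(M)\ge 0\ \forall\, \theta\neq\theta'\}$ is convex.

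Next I would mix. Let $M^{\mathrm{R}}$ be an $\RME$ menu, so $s_{\theta,\theta'}(M^{\mathrm{R}})\ge 0$ and $U(M^{\mathrm{R}})=U^{\RME}$, and let $M^{\mathrm{S}}$ be a $\delta$-strictly IC menu (which exists by the definition of the inducibility gap), so $s_{\theta,\theta'}(M^{\mathrm{S}})\ge \delta$. For $\lambda\in[0,1]$ put $M^{\lambda}=\lambda M^{\mathrm{R}}+(1-\lambda)M^{\mathrm{S}}$, a valid randomized menu. Affinity gives $s_{\theta,\theta'}(M^{\lambda})\ge (1-\lambda)\delta$ for all $\theta\neq\theta'$ --- i.e.\ $M^{\lambda}$ is $(1-\lambda)\delta$-strictly IC --- and, since all payoffs lie in $[0,1]$,
\[
  U(M^{\lambda}) \;=\; U^{\RME} - (1-\lambda)\big(U^{\RME}-U(M^{\mathrm{S}})\big)\;\ge\; U^{\RME}-(1-\lambda).
\]
Taking $1-\lambda = \min\big\{1,\; c\sqrt{\log|\Theta|/(T\delta^2)}\big\}$ for a suitable absolute constant $c$ then finishes the lemma: when the minimum is not $1$, $M^{\lambda}$ is $c\sqrt{\log|\Theta|/T}$-strictly IC and $U(M^{\lambda})\ge U^{\RME}-c\sqrt{\log|\Theta|/(T\delta^2)}$; when the minimum is $1$ the claimed approximation bound is vacuous (payoffs are in $[0,1]$) and we simply output $M^{\mathrm{S}}$ itself, which is $\delta$-strictly IC and hence in particular $\Omega(\sqrt{\log|\Theta|/T})$-strictly IC. The hypothesis $T\ge\Omega(\bar t^{\,2})$ is what guarantees that the resulting strict-IC margin $c\sqrt{\log|\Theta|/T}$ is small enough to later be absorbed as the incentive buffer for probability rounding and type elicitation, while still being $\Omega(\sqrt{\log|\Theta|/T})$.

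I expect the only genuine subtlety --- and the point I would treat most carefully --- to be the affinity claim in the second step: one must be sure that forming a convex combination of two menus cannot silently change a follower's best response in a way that breaks linearity of $U(\cdot)$ or $s_{\theta,\theta'}(\cdot)$. Working with the ``distribution over strategies'' description of a menu rather than insisting on the $n$-entry normal form resolves this, because $j^{*\theta}(\bx)$ is a property of the pair $(\bx,\theta)$ alone; mixing merely reweights a larger (still finite) support, and the indices $j$ are a bookkeeping device that can be restored afterwards by grouping strategies according to the response they induce. A minor additional point is the boundary case $1-\lambda>1$ (tiny $\delta$ or small $T$), handled above by reverting to $M^{\mathrm{S}}$ and noting the approximation bound is then trivially true.
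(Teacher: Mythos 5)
Your proposal is correct and takes essentially the same route as the paper: both form a convex combination of the $\RME$ with the $\delta$-strictly IC menu guaranteed by the inducibility gap, putting weight $\sqrt{\log|\Theta|/(T\delta^2)}$ on the latter, so that linearity of the follower utilities gives a strict-IC margin of $\delta\cdot\sqrt{\log|\Theta|/(T\delta^2)}=\sqrt{\log|\Theta|/T}$ at a utility loss of at most $\sqrt{\log|\Theta|/(T\delta^2)}$. The only differences are bookkeeping: you mix the two menus as distributions over strategies (union of supports) whereas the paper blends the strategies sharing each index $j$, and you additionally treat the degenerate case where the mixing weight would exceed one.
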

    



    \begin{proof}{Proof of Lemma~\ref{lem:constructing}}
    We prove the lemma by explicitly constructing a randomized menu $\langle \bm{p}, \bx \rangle$ and show that $\langle \bm{p}, \bx \rangle$ is $O\left(\sqrt{\frac{\log |\Theta|}{T}}\right)$-strictly IC and achieves $\left(1 - \sqrt{\frac{\log |\Theta|}{T\delta^2}}\right)U^{\RME}$. Specifically, denote the $\RME$ as $\langle \bm{p}^*, \bx^* \rangle$ and the $\delta$-strictly IC randomized menu as $\langle \bm{p}^\delta, \bx^\delta \rangle$,  we construct the new randomized menu $\langle \bm{p}, \bx \rangle$ for all $\theta \in \Theta$ and $j \in [n]$ as follows 
    \begin{small}
    \begin{align*}
    &p_{\theta,j} = \left(1-\sqrt{\frac{\log |\Theta|}{T\delta^2}}\right) p^*_{\theta,j} + \left(\sqrt{\frac{\log |\Theta|}{T\delta^2}}\right) p^\delta_{\theta, j} \\ 
    &\bx_{\theta,j} = \frac{\left(1-\sqrt{\frac{\log |\Theta|}{T\delta^2}}\right) p^*_{\theta,j}}{p_{\theta,j}} \bx^*_{\theta, j} + \frac{\left(\sqrt{\frac{\log |\Theta|}{T\delta^2}}\right) p^\delta_{\theta, j}}{p_{\theta,j}} \bx^\delta_{\theta, j}
    \end{align*}
    \end{small}
    For every follower type $\theta \in \Theta$ and $\theta' \in \Theta$ such that $\theta' \ne \theta$, by definition
    \begin{small}
    \begin{equation}
    \label{eq:IC}
    \begin{aligned}
    &\sum_j p^\delta_{\theta,j} V^\theta(\bx^\delta_{\theta,j}, j)  \ge \sum_j p^\delta_{\theta',j} \max_{j'} V^\theta(\bx^\delta_{\theta',j}, j') + \delta, \\
    &\sum_j p^*_{\theta,j} V^\theta(\bx^*_{\theta,j}, j)  \ge \sum_j p^*_{\theta',j} \max_{j'} V^\theta(\bx^*_{\theta',j}, j').
    \end{aligned}
    \end{equation}
    \end{small}
    As a result, we have for all $\theta' \ne \theta$
    \begin{small}
    \begin{align*}
        \sum_j &p_{\theta,j} V^\theta(\bx_{\theta,j}, j) =  \sum_j \left(1-\sqrt{\frac{\log |\Theta|}{T\delta^2}}\right) p^*_{\theta,j} V^\theta(\bx^*_{\theta,j}, j) 
         +\left(\sqrt{\frac{\log |\Theta|}{T\delta^2}}\right) p^\delta_{\theta, j}V^\theta(\bx^\delta_{\theta,j}, j) \\
        & \ge \sum_j \max_{j'} \left(1-\sqrt{\frac{\log |\Theta|}{T\delta^2}}\right) p^*_{\theta',j} V^\theta(\bx^*_{\theta',j}, j')  + \sum_j \max_{j'} \sqrt{\frac{\log |\Theta|}{T\delta^2}} p^\delta_{\theta',j} V^\theta(\bx^\delta_{\theta',j}, j') +  \delta\sqrt{\frac{\log |\Theta|}{T\delta^2}}  \\
        &\ge \sum_j \max_{j'} \Bigg(\left(1-\sqrt{\frac{\log |\Theta|}{T\delta^2}}\right) p^*_{\theta',j} V^\theta(\bx^*_{\theta',j}, j')  + \sqrt{\frac{\log |\Theta|}{T\delta^2}} p^\delta_{\theta',j} V^\theta(\bx^\delta_{\theta',j}, j')\Bigg) 
        + \delta\sqrt{\frac{\log |\Theta|}{T\delta^2}} \\
        &= \sum_j \max_{j'} p_{\theta',j} V^\theta(\bx_{\theta',j}, j') + \sqrt{\frac{\log |\Theta|}{T}}.
    \end{align*}\end{small}
    
    \noindent where the first inequality is by \eqref{eq:IC} and the second inequality is by merging two $\max$'s into one $\max$, proving the constructed mechanism $\langle \bm{p}, \bx \rangle$ is $O(\sqrt{\frac{\log |\Theta|}{T}})$-strictly IC. 

    Next, we write out the leader's expected utility for every type $\theta$
    \begin{small}
    \begin{align}\label{eq:final_obj}
        \sum_j &p_{\theta,j} U(\bx_{\theta,j}, j)  = \sum_j (1-\sqrt{\frac{\log |\Theta|}{T\delta^2}}) p^*_{\theta,j} U(\bx^*_{\theta,j}, j)  + (\sqrt{\frac{\log |\Theta|}{T\delta^2}}) p^\delta_{\theta, j}U(\bx^\delta_{\theta,j}, j) \\
        &\ge\sum_j (1-\sqrt{\frac{\log |\Theta|}{T\delta^2}}) p^*_{\theta,j} U(\bx^*_{\theta,j}, j)
    \end{align}\end{small}
    
    \noindent As a result, we have the expected utility of $\langle \bm{p}, \bx \rangle$ as $U^{\langle \bm{p}, \bx \rangle} \geq (1-\sqrt{\frac{\log |\Theta|}{T\delta^2}}) U^\RME$, proving the lemma.%
\end{proof} 

\textbf{Step 3: Existence of $(T-\bar{t})$-uniform, $O\left(\sqrt{\frac{\log |\Theta|}{T\delta^2}}\right)$ optimal, and IC  menu.} 

\begin{lemma}[\citet{althofer1994sparse}]\label{lem:approximate_menu}
    For any $\epsilon > 0$ and any $\{p_{\theta,j},\bx_{\theta, j}\}_{j\in [n]}$, there exists a $k$-uniform $\bar{\bm{p}}_\theta$ with $k=\left\lceil \frac{\log 2(|\Theta|+1)}{2\epsilon^2}\right\rceil$ such that 
    \begin{align*}
    \epsilon &\ge \left|\sum_j p_{\theta,j} U(\bx_{\theta,j}, j) - \sum_j \bar{p}_{\theta,j} U(\bx_{\theta,j}, j)\right| ,
    \end{align*}
    and for all $\theta'$
    \begin{align*}
    \epsilon &\ge \bigg|\sum_j p_{\theta,j} \max_{ j'} V^{\theta'}\big(\bx_{\theta,j}, j'\big) - \sum_j \bar{p}_{\theta,j} \max_{j'} V^{\theta'}\big(\bx_{\theta,j}, j'\big)\bigg|.
    \end{align*}
\end{lemma}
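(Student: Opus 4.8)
The plan is to use the classical sparsification-by-sampling argument (sometimes attributed to Maurey, and in this form to Althöfer). Fix the type $\theta$ and regard $\bm{p}_\theta = (p_{\theta,1},\dots,p_{\theta,n})$ as a probability distribution over the follower's actions $[n]$. Draw $k$ i.i.d.\ samples $j_1,\dots,j_k$ from this distribution, with $k = \left\lceil \log\bigl(2(|\Theta|+1)\bigr)/(2\epsilon^2)\right\rceil$, and set $\bar p_{\theta,j} = \frac{1}{k}\bigl|\{\ell \in [k] : j_\ell = j\}\bigr|$. By construction $\bar{\bm{p}}_\theta$ is $k$-uniform. The quantities that must be preserved are the single functional $g_0(j) := U(\bx_{\theta,j}, j)$ and, for each $\theta' \in \Theta$, the functional $g_{\theta'}(j) := \max_{j'} V^{\theta'}(\bx_{\theta,j}, j')$; note that, for each fixed $j$, the inner maximum is merely a number, so each $g_{\theta'}$ is a well-defined function $[n]\to[0,1]$, the range following from the WLOG normalization $R_{i,j}, C_{i,j}\in[0,1]$.

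For any bounded functional $g:[n]\to[0,1]$, the random variable $\sum_{j} \bar p_{\theta,j}\, g(j) = \frac{1}{k}\sum_{\ell=1}^{k} g(j_\ell)$ is an average of $k$ i.i.d.\ random variables taking values in $[0,1]$ whose common expectation is exactly $\sum_{j} p_{\theta,j}\, g(j)$. Hoeffding's inequality therefore gives
\[
\Pr\!\left(\left| \sum_{j} \bar p_{\theta,j}\, g(j) - \sum_{j} p_{\theta,j}\, g(j) \right| > \epsilon \right) \;\le\; 2 e^{-2k\epsilon^2}.
\]
Applying this to each of the $|\Theta|+1$ functionals $g_0, \{g_{\theta'}\}_{\theta'\in\Theta}$ and taking a union bound, the probability that \emph{some} functional deviates by more than $\epsilon$ is at most $2(|\Theta|+1)\,e^{-2k\epsilon^2}$, which is strictly below $1$ for the stated value of $k$. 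By the probabilistic method there exists a realization $j_1,\dots,j_k$ --- hence a $k$-uniform $\bar{\bm{p}}_\theta$ --- for which all $|\Theta|+1$ bounds hold simultaneously, and this $\bar{\bm{p}}_\theta$ is exactly the distribution claimed in the lemma.

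There is no genuine obstacle here: the argument is entirely standard once one identifies the correct finite family of linear functionals to control. The only points requiring a little care are (i) verifying that $g_{\theta'}$ is a bounded function of $j$ alone, the maximization over $j'$ being carried out pointwise, so that a scalar concentration inequality applies, and (ii) the bookkeeping of the union bound over exactly $|\Theta|+1$ events, which is what pins down the stated $k$. If the paper's $\log$ denotes $\log_2$ rather than the natural logarithm, the same choice of $k$ still suffices, merely with additional slack in the union bound.
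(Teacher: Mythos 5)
The paper does not prove this lemma at all: it is invoked as a black-box citation to \citet{althofer1994sparse}, so there is no in-paper argument to compare against. Your proposal supplies the standard proof of that cited result --- the Maurey/Alth\"ofer empirical-sampling argument: draw $k$ i.i.d.\ actions from $\bm{p}_\theta$, let $\bar{\bm{p}}_\theta$ be the empirical distribution (automatically $k$-uniform), observe that each of the $|\Theta|+1$ quantities to be preserved is the expectation of a fixed function $[n]\to[0,1]$ (the inner $\max_{j'}$ is computed pointwise in $j$, so boundedness follows from the normalization of payoffs), apply Hoeffding to each, and union bound. This is correct and is exactly how the cited lemma is proved in the literature, so your write-up in effect makes the paper self-contained on this point. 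One small remark: with $k=\lceil \log(2(|\Theta|+1))/(2\epsilon^2)\rceil$ the union bound gives failure probability at most $2(|\Theta|+1)e^{-2k\epsilon^2}\le 1$, with equality possible in the degenerate case where $\log(2(|\Theta|+1))/(2\epsilon^2)$ is an integer; your claim that it is \emph{strictly} below $1$ is therefore not automatic. This is cosmetic --- since the lemma's conclusion only demands the non-strict bound $\le\epsilon$, it suffices to bound the probability of a deviation strictly exceeding $\epsilon$, and any slack (e.g.\ taking $k$ one larger, or noting the Hoeffding bound is applied to the closed event $\{\,\ge\epsilon\,\}$ while failure requires the open event) closes it --- but it is worth a sentence if you include the argument.
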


By Lemma \ref{lem:approximate_menu}, there always exists an approximation of $\bm{p}_\theta$ denoted as $\bar{\bm{p}}_\theta$ for all $\theta$, which is a $(T-\bar{t})$-uniform distribution and can be precisely fulfilled in the $T - \bar{t}$ rounds. Next, we show that the new randomized menu $\langle \bar{\bm{p}}, \bx\rangle$ is IC and $O(\sqrt{\frac{\log |\Theta|}{T\delta^2}})$ optimal. 

Since the original  $\langle \bm{p}, \bx\rangle$ is $O(\sqrt{\frac{\log |\Theta|}{T}})$-strictly IC randomized menu, we have 
    \begin{align}
        \label{eq:one}
        \sum_j p_{\theta,j} V^\theta(\bx_{\theta,j}, j)  \ge &\sum_j p_{\theta',j} \max_{j'} V^\theta(\bx_{\theta',j}, j')  + O(\sqrt{\frac{\log |\Theta|}{T}}),
    \end{align}
for all $\theta$ and $\theta' \ne \theta$.
When we approximate $\bm{p}_\theta$ with a $(T-\bar{t})$-uniform distribution $\bar{\bm{p}}_\theta$, there exists an approximate error $\epsilon = \sqrt{\frac{\log 2(|\Theta| +1)}{2(T-\bar{t})}} = O(\sqrt{\frac{\log |\Theta|}{T}})$ for all players' utilities under the randomized strategy $\{\bar{p}_{\theta,j}, \bx_{\theta, j}\}_{j \in [n]}$. As a result, for the $(T-\bar{t})$-uniform distribution $\bar{\bm{p}}_\theta$, we have  following observations by Lemma \ref{lem:approximate_menu} for all $\theta$ and $\theta' \ne \theta$: 

\begin{equation}
    \label{eq:two}
    \begin{aligned}
        \sum_j \bar{p}_{\theta,j} V^\theta(\bx_{\theta,j}, j)   
         \ge \sum_j p_{\theta,j} V^\theta(\bx_{\theta,j}, j)  - \epsilon
    \end{aligned}
\end{equation}
and
\begin{equation}
\begin{aligned}
    \label{eq:three}
    \sum_j \bar{p}_{\theta',j} \max_{j'} V^\theta(\bx_{\theta',j}, j')
     \le \sum_j p_{\theta',j} \max_{j'} V^\theta(\bx_{\theta',j}, j') + \epsilon
\end{aligned}
\end{equation}
Therefore,  we can combine \eqref{eq:one}, \eqref{eq:two}, and \eqref{eq:three}, where $\epsilon = O\left(\sqrt{\frac{\log |\Theta|}{T}}\right)$, to get 
\begin{equation}
    \begin{aligned}
        \sum_j \bar{p}_{\theta,j} V^\theta(\bx_{\theta,j}, j)  \ge \sum_j p_{\theta,j} V^\theta(\bx_{\theta,j}, j)
          - O\left(\sqrt{\frac{\log |\Theta|}{T}}\right)
    \end{aligned}
\end{equation}
and
\begin{equation}
    \begin{aligned}
        \sum_j \bar{p}_{\theta',j} \max_{j'} V^\theta(\bx_{\theta',j}, j') \le \sum_j p_{\theta,j} V^\theta(\bx_{\theta,j}, j) 
         - O\left(\sqrt{\frac{\log |\Theta|}{T}}\right)    
    \end{aligned}
\end{equation}
and equivalently, 
\begin{small}
\begin{equation}\label{eq:conclusion}
    \sum_j \bar{p}_{\theta,j} V^\theta(\bx_{\theta,j}, j)  \ge \sum_j \bar{p}_{\theta',j} \max_{j'} V^\theta(\bx_{\theta',j}, j'), \forall \theta,
\end{equation}
\end{small}
proving the new randomized menu $\langle \bar{\bm{p}}, \bx\rangle$ is IC.

Finally, the leader's utility of the new randomized menu $\langle \bar{\bm{p}}, \bx\rangle$ is also bounded by Lemma \ref{lem:approximate_menu} for all $\theta \in \Theta$ as follows
\begin{equation}
    \begin{aligned}
        \label{eq:approximately-optimal}
        \sum_j \bar{p}_{\theta,j} U(\bx_{\theta,j}, j)  \ge \sum_j p_{\theta,j} U(\bx_{\theta,j}, j)  -\epsilon =\sum_j p_{\theta,j} U(\bx_{\theta,j}, j)  - O\left(\sqrt{\frac{\log |\Theta|}{T}}\right).
    \end{aligned}
\end{equation}
Since by construction, the original randomized menu $\langle \bm{p}, \bx\rangle$ is $O\left(\sqrt{\frac{\log |\Theta|}{T\delta^2}}\right)$ optimal, we have for all $\theta \in \Theta$
\begin{small}
\begin{equation}
    \label{eq:final}
    \sum_{j} p_{\theta,j} U(\bx_{\theta,j}, j) \ge \sum_{j} p^*_{\theta,j} U(\bx^*_{\theta,j}, j) - O\left(\sqrt{\frac{\log |\Theta|}{T\delta^2}}\right).
\end{equation}
\end{small} 
Combining \eqref{eq:approximately-optimal} and \eqref{eq:final}, we also have the new randomized menu $\langle \bar{\bm{p}}, \bx\rangle$ is $O\left(\sqrt{\frac{\log |\Theta|}{T\delta^2}}\right)$ optimal, proving Step 3. Since a randomized menu with $(T-\bar{t})$-uniform distribution can be precisely simulated by a dynamic policy whereas $U^\DSE$ denotes the leader utility under the optimal dynamic policy, we have $\frac{U^{\DSE}}{T} \ge u^{\langle \bar{\bm{p}}, \bx\rangle}_l$ and proved $\frac{U^{\DSE}}{T} \geq U^{\RME} - O\left(\sqrt{\frac{\log |\Theta|}{T\delta^2}}\right)$ in the Theorem. 

\textbf{Instance with  $\Omega(1/T)$ lower bound. } We remark that the upper and lower bound of the utility comparison above is off by a factor of $O(1/\sqrt{T})$. Closing this gap is an interesting open question. To prove the $\Omega(1/T)$ bound, we consider the following game instance 
\begin{table}[tbh]
\begin{center}
\begin{small}
\begin{subtable}{0.31\linewidth}
    \centering
    \begin{tabular}{|c|c|c|} \hline
      $R$  &  $j_0$ & $j_1$  \\ \hline
      $i_0$   & 1 & 0 \\ \hline
      $i_1$   & 0 & 1 \\ \hline
    \end{tabular} 
\end{subtable}
\begin{subtable}{0.33\linewidth}
    \centering
    \begin{tabular}{|c|c|c|} \hline
      $C^0$  &  $j_0$ & $j_1$   \\ \hline
      $i_0$   & 1 & 0  \\ \hline
      $i_1$   & 1& 0 \\ \hline
    \end{tabular} 
    \end{subtable}
    \begin{subtable}{0.33\linewidth}
    \centering
    \begin{tabular}{|c|c|c|} \hline
      $C^1$  &  $j_0$ & $j_1$   \\ \hline
      $i_0$   & 0  & 1  \\ \hline
      $i_1$   & 0 & 1 \\ \hline
    \end{tabular} 
    \end{subtable}
\end{small}
\end{center}
    \label{table:counter-example}
\end{table}

\noindent in which the leader has a prior distribution $(0.5, 0.5)$ over the two follower types $(C^0, C^1)$. It is straightforward that the $\RME$ is to offer $\bx^{C^0} = (1, 0)$ with probability $1$ for type $C^0$; $\bx^{C^1} = (0, 1)$ with probability $1$ for type $C^1$. As a result, $U^{\RME} = 1$. If the leader cannot offer a menu of strategies but applies a dynamic policy instead, the optimal policy would be to use the first round to learn the follower's type based on his response (either $j_0$ or $j_1$) and play pure strategy $i_0$ or $i_1$ accordingly. The optimal expected utility at the first round, however, is at most $0.5$. In this case, we have $U^{\DSE} = 0.5 + (T-1)$ while playing $\RME$ gives the leader a utility of $T$, if the leader and the follower interact for $T$ rounds. Therefore, we get $\frac{U^{\DSE}}{T} = U^\RME - \frac{1}{2T}$, proving the theorem.%
\end{proof}

\section{Omitted Details in Section \ref{sec:algo}}
\subsection{Proof of Theorem \ref{thm:program_equivalent}}\label{append:them1-detail}


Here we will show that an optimal solution to Program \eqref{eq:U^DySS} can be recovered from an optimal solution to the following \texttt{MILP}:

{\allowdisplaybreaks
\begin{subequations}
    \label{eq:DySS_MILP}
    \begin{align}
        &\text{maximize} \quad  \sum_{\theta} \mu(\theta) \sum_{t, \bj_t, i} R_{i, j_t} \, w^{t, \theta}_{\bj_t, i} \label{eq:DySS_MILP_obj} \\
        &\text{subject to} \nonumber\\
        & z^{t, \theta}_{\bj_t} \leq y^\theta_{t, j_t}, \quad z^{t, \theta}_{\bj_t} \leq z^{t-1, \theta}_{\bj_{t-1}}, \quad z^{t, \theta}_{\bj_t} \geq y^\theta_{t, j_t} + z^{t-1, \theta}_{\bj_{t-1}} - 1, \,\,\, \forall \bj_t \in [n]^t, \, t \in [T], \, \theta \in \Theta, \label{constr:first_y}  \\
        & w^{t, \theta}_{\bj_t, i} \leq x^t_{\bj_{t-1}, i}, \quad w^{t, \theta}_{\bj_t, i} \leq z^{t, \theta}_{\bj_t}, \quad  \forall \bj_t \in [n]^t, \, t \in [T], \, \theta \in \Theta, \, i \in [m], \label{constr:third} \\
        & \sum_i x^t_{\bj_{t-1}, i} = 1, \quad \forall t \in [T], \, \bj_{t-1} \in [n]^{t-1}, \\
        & \sum_j y^\theta_{t, j} = 1, \quad \forall t \in [T], \, \theta \in \Theta, \\
        & 0 \leq a_\theta - \sum_{t,i} x^t_{\bj_{t-1}, i} C^\theta_{i, j_t} \leq M(T - \sum_{t} y_{t, j_t}^\theta), \quad \forall \bj_t \in [n]^t, \, \theta \in \Theta, \\
        & 0 \leq \bz \leq 1, \quad 0 \leq \bm{w} \leq 1, \quad 0 \leq \bx \leq 1, \quad \by \in \{0, 1\}.\label{eq:DySS_MILP_end}
    \end{align} 
\end{subequations}}

\subsubsection{A Mixed Integer Program (\texttt{MIP})}\label{sec:appendix_MIP}

Our first step is to reduce Program \eqref{eq:U^DySS} to a Mixed Integer Program (\texttt{MIP}). While this \texttt{MIP} still cannot be solved by industry-standard optimization solvers like Gurobi  \citep{gurobi}, it provides an important step towards our final development of the \texttt{MILP}.   

As mentioned above, the key barrier for practically solving Program \eqref{eq:U^DySS} is that the follower's decision variables $\bj^{t-1}_\theta$ shows up in the \emph{indices} of the leader's decision variables $\bx^t_{\bj_{t-1} } $. Our main goal in this step is to remove this dependence.  To do so, we introduce a different variable representation for the follower's action space. Specifically, we use a binary matrix $ \by^\theta \in \{0, 1\}^{T \times n} $ to represent the  response sequence for any follower type $\theta \in \Theta$.   

Moreover,  the $t$'th row $\by^\theta_t \in \{0, 1\}^n$ is a \textit{one-hot} vector of length $n$, in which the index of the $1$ entry represents the action taken by the follower at time step $t$.
The key step of the reduction is to characterize the follower's optimal response history by the following constraint\footnote{We omit the feasible regions of variables (e.g. $t \in [T], i \in [m], \bj_T \in [n]^T, \theta \in \Theta$) in some programs for the sake of space when they are clear from the context.}
\begin{equation}\label{eq:optimal_follower_response_main}
    0 \leq a_\theta - \sum_{t, i} x^t_{\bj_{t-1},i} C^\theta_{i,j_t}  \leq M(T - \sum_{t=1}^T y_{t, j_t}^\theta), \, \forall \bvec{j}_T, \theta \,
\end{equation}
\noindent where $M$ is a very large constant and $\bm{a} \in \mathbb{R}^k$ is a set of newly introduced \emph{decision variables}.  Formally,  Program \eqref{eq:U^DySS} can be transformed into a $\MIP$ with an objective of $\sum_{\theta} \mu(\theta) \sum_{t, \bj_{t}, i} R_{i, j_t} \, x^t_{\bj_{t-1},i} \prod_{t'=1}^{t} y^\theta_{{t'}, j_{t'}}$. 

\begin{lemma}\label{lem:optimal_follower_response}
Program \eqref{eq:U^DySS} is equivalent to the following $\MIP$ \eqref{eq:Dyss_MIP}. 
\begin{subequations}
    \label{eq:Dyss_MIP}
    \begin{align}
        &\text{maximize} \quad  \sum_{\theta} \mu(\theta) \sum_{t, \bj_{t}, i} R_{i, j_t} \, x^t_{\bj_{t-1},i} \prod_{t'=1}^{t} y^\theta_{{t'}, j_{t'}} \label{eq:Dyss_MIP_obj} \\
        &\text{subject to} \quad  \nonumber \\
        & 0 \leq a_\theta - \sum_{t,i} x^t_{\bj_{t-1},i} C^\theta_{i,j_t} \leq M\left(T - \sum_{t \in [T]} y_{t, j_t}^\theta \right), \quad \forall \bj_T \in [n]^T, \, \theta \in \Theta, \\
        & \sum_i x^t_{\bj_{t-1},i} = 1, \quad \forall \bj_{t-1} \in [n]^{t-1}, \, t \in [T], \\
        & x^t_{\bj_{t-1},i} \in [0, 1], \quad \forall i \in [m], \, \bj_{t-1} \in [n]^{t-1}, \, t \in [T], \\
        & \sum_j y^\theta_{t, j} = 1, \quad \forall t \in [T], \, \theta \in \Theta, \\
        & y^\theta_{t, j} \in \{0, 1\}, \quad \forall j \in [n], \, t \in [T], \, \theta \in \Theta.\label{eq:Dyss_MIP_end}
    \end{align}
\end{subequations}
where $\bx, \by, \ba$ are decision variables and  $M$ is a  sufficiently large  constant.
\end{lemma}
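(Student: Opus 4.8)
The plan is to exhibit an explicit, objective-preserving correspondence between feasible solutions of Program~\eqref{eq:U^DySS} and feasible solutions of $\MIP$~\eqref{eq:Dyss_MIP}, from which the equality of optimal values and the mutual recoverability of solutions follow. First I would pin down the change of representation. For each type $\theta$, a response sequence $\bj^\theta_T=(j^\theta_1,\dots,j^\theta_T)\in[n]^T$ corresponds to the one-hot matrix $\by^\theta\in\{0,1\}^{T\times n}$ given by $y^\theta_{t,j}=\mathbbm{1}(j=j^\theta_t)$; the constraints $\sum_j y^\theta_{t,j}=1$ together with $y^\theta_{t,j}\in\{0,1\}$ are exactly what restricts $\by^\theta$ to such one-hot matrices, so this is a bijection between $[n]^T$ and the feasible $\by^\theta$'s. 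The elementary observation driving everything is that, for any query sequence $\bj_T\in[n]^T$, the quantity $\sum_{t=1}^{T} y^\theta_{t,j_t}$ counts the rounds at which $\bj_T$ agrees with $\bj^\theta_T$, hence equals $T$ iff $\bj_T=\bj^\theta_T$ and is at most $T-1$ otherwise. I would also record that, against a policy $\bx$, a follower of type $\theta$ who plays the sequence $\bj_T$ obtains total utility $\sum_{t,i}x^t_{\bj_{t-1},i}C^\theta_{i,j_t}$, which is precisely the expression appearing inside the big-$M$ constraint.

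Next I would analyze the big-$M$ constraint family. Since all payoffs lie in $[0,1]$, the quantity $a_\theta-\sum_{t,i}x^t_{\bj_{t-1},i}C^\theta_{i,j_t}$ is bounded by $T$ in absolute value, so any $M\ge T$ is ``sufficiently large''; for such $M$, when $\bj_T\ne\bj^\theta_T$ the upper inequality is vacuous (its right side is $\ge M$), while when $\bj_T=\bj^\theta_T$ the constraint collapses to $a_\theta=\sum_{t,i}x^t_{\bj^\theta_{t-1},i}C^\theta_{i,j^\theta_t}$. The lower inequality $a_\theta\ge\sum_{t,i}x^t_{\bj_{t-1},i}C^\theta_{i,j_t}$ holds for \emph{every} $\bj_T$. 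Eliminating $a_\theta$ via its forced value turns the whole family into $\sum_{t,i}x^t_{\bj^\theta_{t-1},i}C^\theta_{i,j^\theta_t}\ge\sum_{t,i}x^t_{\bj_{t-1},i}C^\theta_{i,j_t}$ for all $\bj_T\in[n]^T$ and all $\theta$, i.e., the statement that $\bj^\theta_T$ is an optimal response sequence against $\bx$ --- exactly the constraint of Program~\eqref{eq:U^DySS}; conversely, given such an optimal response sequence, setting $a_\theta$ equal to its utility satisfies all $\MIP$ constraints (the consistency of the lower and upper bounds on $a_\theta$ is exactly the best-response inequality). For the objectives, the product $\prod_{t'=1}^{t}y^\theta_{t',j_{t'}}$ equals $1$ iff the length-$t$ prefix of $\bj_t$ coincides with that of $\bj^\theta_T$, so the inner sum over $\bj_t\in[n]^t$ in \eqref{eq:Dyss_MIP_obj} retains only the term $\bj_t=\bj^\theta_t$, giving $\sum_i R_{i,j^\theta_t}\,x^t_{\bj^\theta_{t-1},i}=U(\bx^t_{\bj^\theta_{t-1}},j^\theta_t)$; summing over $t$ and $\theta$ recovers the objective of \eqref{eq:U^DySS}. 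Hence the two programs have the same feasible points (modulo the $\by\leftrightarrow\bj$ bijection and the choice of $a_\theta$) with identical objective value, which yields the equivalence; tie-breaking in the leader's favor is automatic because both programs maximize the same leader objective over the same feasible region.

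I expect the main obstacle to be handling the big-$M$ family cleanly --- specifically keeping straight the asymmetry whereby the upper inequality becomes slack for every query sequence except the realized one while the lower inequality remains active for all query sequences, and checking that, after eliminating $a_\theta$, this asymmetry encodes precisely ``$\bj^\theta_T$ maximizes the follower's cumulative utility.'' A secondary point requiring care is the reverse direction --- verifying that a valid $a_\theta$ always exists, equivalently that the lower and upper bounds on $a_\theta$ are simultaneously satisfiable exactly when $\bj^\theta_T$ is a best response --- together with the bookkeeping that collapses each indicator-weighted inner sum in the objective to its single surviving term.
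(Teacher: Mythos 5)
Your proposal is correct and follows essentially the same route as the paper's proof: the one-hot encoding of response sequences, the observation that $\sum_t y^\theta_{t,j_t}=T$ exactly on the realized sequence so the big-$M$ constraint pins $a_\theta$ to the follower's realized utility and reduces to the best-response inequalities for all other sequences, and the indicator-product argument collapsing the objective to the terms on each type's optimal path, verified in both directions. Your explicit bound $M\ge T$ and the elimination of $a_\theta$ are minor streamlinings of the same argument.
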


\begin{proof}{Proof of Lemma~\ref{lem:optimal_follower_response}.}
\textbf{Reformulating follower's incentive constraints in Program \eqref{eq:U^DySS}. } We argue that under our new variable $ \by^\theta \in \{0, 1\}^{T \times n} $,  the following constraint is equivalent to the first set of constraints in Program \eqref{eq:U^DySS} and thus correctly capture the follower's dynamic optimal responses:
\begin{small}
\begin{align}\label{eq:optimal_follower_response}
    0 \leq a_\theta - \sum_{t, i} x^t_{\bj_{t-1},i} C^\theta_{i,j_t}  \leq M(T - \sum_{t=1}^T y_{t, j_t}^\theta), \, \forall \bvec{j}_T, \theta, 
\end{align}
\end{small}
where $M$ is a very large constant, $\bm{a} \in \mathbb{R}^k$ is a set of newly introduced \emph{decision variables}.  

To see how Inequality \eqref{eq:optimal_follower_response} characterizes the optimality condition of the follower's best responses, consider any  variables $\bx, \by$ and $\bm{a}$ feasible to Inequality \eqref{eq:optimal_follower_response}. Recall that $\by^\theta_t \in \{0, 1\}^n$ is a one-hot vector of length $n$ for any $\theta \in \Theta$ and $t \in [T]$. We denote $j^{*\theta}_{t} \in [n]$ as the  index of the unique $1$-valued entry in $\by^\theta_t$, and vector $\bj_{T}^{*\theta} = (j^{*\theta}_1,\cdots,j^{*\theta}_{T})$. We argue that if $\bx, \by$ and $\bm{a}$ satisfy Inequality \eqref{eq:optimal_follower_response}, then $\bj_{T}^{*\theta}$ must be an optimal response sequence for type $\theta$. 

By construction, we have $\sum_{t}y_{t, j^{*\theta}_{t}}^\theta = T,$ for all $\theta$ and $\bj^{*\theta}_T$ by  definition. Plugging this  into Inequality \eqref{eq:optimal_follower_response}, we obtain   $M(T - \sum_{t}y_{t, j^{*\theta}_{t}}^\theta) = 0$. Thus  Inequality \eqref{eq:optimal_follower_response} becomes $ 
0  \leq (a_\theta - \sum_{t\in[T]} \sum_{i \in [m]} x^{t}_{\bj^{*\theta}_{t-1},i} C^\theta_{i,j^{*\theta}_{t}})\leq 0$, 
which implies 
\begin{equation}\label{eq:best_response}
a_\theta = \sum_{t\in[T]} \sum_{i \in [m]} x^t_{\bj^{*\theta}_{t-1},i} C^\theta_{i,j^{*\theta}_{t}}, \quad \forall \theta.
\end{equation}
That is, $a_{\theta}$ is precisely follower type $\theta$'s total utility from response $\by^\theta$.

Now let us consider any other possible response sequence $\bj_T \in [n]^T \neq \bj_{T}^{*\theta}$. Since $\by^\theta_t$ is a one-hot vector and $y_{t, j_t}^\theta = 1$ if and only if $j_t = j^{*\theta}_{t}$. Therefore, if $\bj_T \neq \bj^{*\theta}_T$, there must exists some $t$ such that $y_{t, j_t}^\theta = 0$, hence, \[\sum_{t}y_{t, j_t}^\theta < T, \quad \forall \theta, \bj_T \neq \bj_{T}^{*\theta}.\] Plugging this $\bj_T$ into \eqref{eq:optimal_follower_response}, we must have $M(T - \sum_{t}y_{t, j_t}^\theta) \geq M$ as a very large constant, which makes the right-hand-side inequality void for any $\bj_T \in [n]^T \neq \bj_{T}^{*\theta}$.  Thus, the only useful constraint is $ 0 \leq (a_\theta - \sum_{t\in[T]} \sum_{i \in [m]} x^t_{\bj_{t-1},i} C^\theta_{i,j_t})$ for any $ \bj_T \neq \bj_{T}^{*\theta}$,
which implies
\begin{equation}\label{eq:non_optimal_response}
\sum_{t\in[T]} \sum_{i \in [m]} x^t_{\bj_{t-1},i} C^\theta_{i,j_t} \leq a_\theta, \quad \forall \theta, \bj_T \neq \bj_{T}^{*\theta}.
\end{equation}

It is now clear to see that $\bj_{T}^{*\theta}$ is the optimal response for type $\theta$ by combining \eqref{eq:non_optimal_response} and \eqref{eq:best_response}. 

\textbf{Reformulating  the objective function in Program \eqref{eq:U^DySS}.}  We claim that for any feasible variable    $\bx, \by$ and $\bm{a}$, the objective of Program   \eqref{eq:U^DySS} equals to the following expression: 

\begin{equation}\label{eq:obj-reformulate}
    \sum_{\theta} \mu(\theta) \sum_{t \in [T], i \in [m], \bj_{t} \in [n]^t} R_{i, j_t} \,x^t_{\bj_{t-1},i} \prod_{t'=1}^{t} y^\theta_{{t'}, j_{t'}}
\end{equation}
The objective above enumerates all possible path histories $\bj_{t} \in [n]^t$ for any $t \in [T]$. However, the product $\prod_{t'=1}^{t} y^\theta_{{t'}, j_{t'}}$ is only non-zero when all $\{ y^\theta_{{t'}, j_{t'}} \}_{t' = 1}^t $ are non-zero, i.e. when $\{ j_{t'} \}_{t'=1}^t$ is on the optimal follower response path for follower type $\theta$. This guarantees that Objective \eqref{eq:obj-reformulate} only counts the $\bj_{t}$ on the optimal path $\bj_{T}^{*\theta}$. Therefore, it correctly calculates the expected leader utility when each follower type $\theta$ follows their optimal response paths.

\textbf{Showing the equivalence.} Consider $\bx$ and $\bj$ as a feasible solution of \eqref{eq:U^DySS}. We will show that 
\begin{small}
    \begin{align*}
    \bx, \quad y^\theta_{t,j} = \begin{cases}1 \text{ if } j = j^{\theta}_t \\ 0 \text{ otherwise,}\end{cases} \text{and } a_\theta = \sum_{t\in[T]} u^f_\theta(\bx^t_{\bj_{t-1}^\theta}, j^\theta_t)
    \end{align*} 
\end{small}

\noindent is a feasible solution of \eqref{eq:Dyss_MIP} of the same objective function value. The last four constraints of \eqref{eq:Dyss_MIP} are satisfied by construction. To see 
how the first constraint is satisfied, note for any $\bj_T \ne \bj_T^\theta$ returned by \eqref{eq:U^DySS} of type $\theta$, we have $\sum_{t\in[T]}y_{t, j_t}^\theta < T$ by construction, and its corresponding follower utility $\sum_{t,i} x^t_{\bj_{t-1},i} C^\theta_{i,j_t} \leq a_\theta$ by the definition that $\bj_T$ is not the best response. When $\bj_T = \bj_T^\theta$, we have  $M(T - \sum_{t\in[T]}y_{t, j_t}^\theta) = 0$ and $ a_\theta = \sum_{t,i} x^t_{\bj_{t-1},i} C^\theta_{i,j_t}$ by construction. Hence, the first constraint of \eqref{eq:Dyss_MIP} is also satisfied. The fact that $y^\theta_{t, j} = 0$ guarantees that $\prod_{t'=1}^{T} y^\theta_{{t'}, j_{t'}} = 0$ if $\bj_T \ne \bj^\theta_T$. Then we have $ \sum_{t, \bj_{t}, i} R_{i, j}\big( x^t_{\bj_{t-1},i} \prod_{t'=1}^{t} y^\theta_{{t'}, j_{t'}} \big) = \sum_{t, i} R_{i, j^\theta_t} x^t_{\bj^\theta_{t-1},i} = \sum_t U(\bx^t_{\bj_{t-1}^\theta}, j^\theta_t)$, which shows the constructed feasible solution of \eqref{eq:Dyss_MIP} has the same objective value as \eqref{eq:U^DySS}.

Let us now consider $\bx$, $\by$, and $\bm{a}$ feasible for \eqref{eq:Dyss_MIP}. We construct $\bj^\theta_T$ such that $j^\theta_t \in [n]$ and $y^\theta_{t, j^\theta_t} = 1$. We will show that $\bx$ and $\by$  are feasible for \eqref{eq:U^DySS} with the same objective value. Recall the discussion from equations \eqref{eq:best_response}-- \eqref{eq:non_optimal_response}, we have that $\bj^\theta_T$ captures the follower's optimal behavior and satisfies $\sum_{t\in[T]} V^\theta(\bx^t_{\bj_{t-1}^\theta}, j^\theta_t) \geq \sum_{t\in[T]} V^\theta(\bx^t_{\hat{\bj}_{t-1}}, \hat{j}_t)$. What is more, by the same argument as in the previous direction, we have $  \sum_{t, \bj_{t}, i} R_{i, j}\big( x^t_{\bj_{t-1},i} \prod_{t'=1}^{t} y^\theta_{{t'}, j_{t'}} \big) = \sum_t U(\bx^t_{\bj_{t-1}^\theta}, j^\theta_t)$, which shows the equivalence  between the objective values of these two programs.%
\end{proof}

\subsubsection{Connection Between the \texttt{MIP} \eqref{eq:Dyss_MIP} and \texttt{MILP} \eqref{eq:DySS_MILP}}
\label{append:MIP_and_MILP}

Our second step is to show the connection between the previous \texttt{MIP} \eqref{eq:Dyss_MIP} and our \texttt{MILP} \eqref{eq:DySS_MILP}. First of all, we present how to linearize the product term $  x^t_{\bj_{t-1},i} \prod_{t'=1}^{t} y^\theta_{{t'}, j_{t'}}$, we introduce two sets of auxiliary  decision variables $\bm{z}$ and $\bm{w}$.  Specifically, 
\begin{equation}
    \label{eq:variable-def}
    \begin{aligned}
        z^{t,\theta}_{\bj_t} &=  \prod_{t'=1}^{t} y^\theta_{{t'}, j_{t'}}, \forall \theta \in \Theta, t \in [T],  \bj_t \in [n]^t, \\
        w^{t,\theta}_{\bj_t,i} &=x^t_{\bj_{t-1}, i} \cdot \prod_{t'=1}^{t} y^\theta_{{t'},  j_{t'}}, \forall \theta , i  , t  , \bj_t \in [n]^t. 
    \end{aligned}
\end{equation}

The key challenge for deriving our re-formulation is to set up the right set of constraints for these new variables $\bz, \bm{w}$ so that they will exactly  enforce the feasibility of the original variables. Constraints \eqref{constr:first_y} -- \eqref{constr:third} in Program \eqref{eq:DySS_MILP} serve the purpose, which leads to our following main theorem and, consequently, a practical formulation for computing the  \DSE. 
Note that $\MILP$ \eqref{eq:DySS_MILP} and $\MIP$  \eqref{eq:Dyss_MIP} themselves are not, in general, equivalent since not all feasible solutions of one correspond to feasible solutions of the other. 
However, we show that we can convert any \emph{optimal} solution of one program into a corresponding feasible \emph{optimal} solution for the other. The proof is as follows.

\begin{proof}{Proof of the connection between $\MILP$ \eqref{eq:DySS_MILP} and $\MIP$  \eqref{eq:Dyss_MIP}.} 
The proof has two steps. We first show that any optimal solution for  $\MIP$  \eqref{eq:Dyss_MIP} must correspond to a feasible solution of $\MILP$ \eqref{eq:DySS_MILP} with the same objective value, and then show its reverse direction. 

\textbf{ Step 1: $OPT\eqref{eq:DySS_MILP} \geq OPT\eqref{eq:Dyss_MIP}$. }  Consider any optimal solution $\bx$, $\by$, and $\bm{a}$ for  $\MIP$  \eqref{eq:Dyss_MIP}. We show that $\bx$, $\by$, $\bm{a}$, together with constructed $\bz, \bm{w}$ via  Equations \eqref{eq:variable-def}, forms a feasible solution of $\MILP$ \eqref{eq:DySS_MILP}. This follows from relatively standard algebraic derivations.  Consider any optimal solution $\bx$, $\by$, and $\bm{a}$ for  MIP \eqref{eq:Dyss_MIP}. We argue that $\bx$, $\by$, $\bm{a}$, together with constructed $\bz, \bm{w}$ via  Equations \eqref{eq:variable-def}, forms a feasible solution of \eqref{eq:DySS_MILP}.  The equivalence of the objective function is immediately satisfied by construction. It is also obvious that $0\leq \bz \leq 1 \text{ and } 0\leq \bm{w} \leq 1$ by construction. Next, we show that constraints \eqref{constr:first_y} -- \eqref{constr:third} are satisfied. Recall that each entry of $\by$ is   binary. For any $\theta$, $t$ and $(j_1,\cdots, j_t)$, we consider the following two possible cases:
\begin{itemize}
    \item If there exists some $j_i \in (j_1,\cdots, j_t)$ such that $y^\theta_{i, j_i} = 0$, then we have $z^{t,\theta}_{\bj_i} = 0$ by construction. Then, we see by induction of $z^{t',\theta}_{\bj_{t'}}\leq z^{t',\theta}_{\bj_{t'-1}}$ for $t' \in [i + 1, \cdots, t]$ that $z^{t,\theta}_{\bj_t} = 0$ and consequently $w^{t,\theta}_{\bj_t,i} = 0$ must hold as well. As a result, it is obvious that the first two constraints of \eqref{constr:first_y} and \eqref{constr:third} are satisfied. To see why the last constraint of \eqref{constr:first_y} holds, note that as long as at least one  of the $j_i \in (j_1,\cdots, j_t)$ satisfy $y^\theta_{i, j_i} = 0$, then at least one of $y^\theta_{t, j_t}$ and $z^{t,\theta}_{\bj_{t-1}}$ must be $0$ as well, meaning $y^\theta_{i, j_t} + z^{t,\theta}_{\bj_{t-1}} \le 0 = z^{t,\theta}_{\bj_t}$.
    \item If $y^\theta_{i, j_i} = 1$ for all $i=1,\cdots,t$, then we have $z^{t,\theta}_{\bj_t} = 1$ and $w^{t,\theta}_{\bj_t,i} = x^t_{\bj_{t-1}, i}$ by construction. It is also straightforward to see that constraint the first two constraints of \eqref{constr:first_y} and \eqref{constr:third} are satisfied. Moreover, we have $\sum_{i\in [t]} y^\theta_{i, j_i} = t$ in this case, and thus $\sum_{i\in [t]} y^\theta_{i, j_i} - (t-1) = 1 = z^{t,\theta}_{\bj_t}$ so  the last constraint of \eqref{constr:first_y} is satisfied as well.
\end{itemize}
Therefore, we have shown the constructed $\bz,\bm{w}$ together with  $\bx, \by, \ba$ form a feasible solution to  MILP \eqref{eq:DySS_MILP} with the same optimal objective value as \eqref{eq:Dyss_MIP}. This implies $OPT\eqref{eq:DySS_MILP} \geq OPT\eqref{eq:Dyss_MIP}$, where $OPT$ denotes the optimal solution value of the program.

\textbf{ Step 2: $OPT\eqref{eq:DySS_MILP} \leq OPT\eqref{eq:Dyss_MIP}$. }  We now prove the reverse direction, which is the more interesting and non-trivial step.  Specifically, suppose that we are given an optimal solution $\bx, \by, \bm{a}, \bz, \bm{w}$ of $\MILP$ \eqref{eq:DySS_MILP}. We show that these $\bx, \by$ and $\bm{a}$ form a feasible solution of \eqref{eq:Dyss_MIP} with the same objective function. It is obvious to see that $\bx, \by$ and $\bm{a}$ are still feasible to \eqref{eq:Dyss_MIP} since the constraints of \eqref{eq:Dyss_MIP} is a subset of constraints of \eqref{eq:DySS_MILP}. The key step is to show these $\bx, \by$ and $\bm{a}$ achieve the same objective value in \eqref{eq:Dyss_MIP}. To show this, we prove that for any optimal solution $\bx, \by, \bm{a}, \bz, \bm{w}$ of \eqref{eq:DySS_MILP}, we must have $w^{t,\theta}_{\bj_t,i} = x^t_{\bj_{t-1}, i} \cdot \prod_{t'=1}^{t} y^\theta_{{t'}, j_{t'}}$ for any $\theta, t, \text{ and } \bj_t$. We prove this is correct through a careful case analysis:

(1) If there exists some $j_i \in (j_1,\cdots, j_t)$ such that $y^\theta_{i, j_i} = 0$, then $w^{t,\theta}_{\bj_t,i} = x^t_{\bj_{t-1}, i} \cdot \prod_{t'=1}^{t} y^\theta_{{t'}, j_{t'}}=0$.

(2) If $y^\theta_{i, j_i} = 1$ for all $i=1,\cdots,t$, we claim  $w^{t,\theta}_{\bj_t,i} = x^t_{\bj_{t-1}, i} \cdot \prod_{t'=1}^{t} y^\theta_{{t'}, j_{t'}} = x^t_{\bj_{t-1}, i}$. For the sake of contradiction, let us assume    $w^{t,\theta}_{\bj_t,i} \neq  x^t_{\bj_{t-1}, i}$.   Note that we must have  $w^{t,\theta}_{\bj_t,i} \leq  x^t_{\bj_{t-1}, i}$ and $w^{t,\theta}_{\bj_t,i} \le z^{t,\theta}_{\bj_t}$  by constraint \eqref{constr:third}. Note that when $y^\theta_{i, j_i} = 1$ for $i=1,\cdots,t$,   we have $z^{t,\theta}_{\bj_t} \leq 1$ by the first two constraints of \eqref{constr:first_y}, as well as $z^{t,\theta}_{\bj_t} \geq 1$ by the last constraint of \eqref{constr:first_y}. Hence, $z^{t,\theta}_{\bj_t} = 1$,  and the second constraint of \eqref{constr:third} will not add additional constraint on $w^{t,\theta}_{\bj_t,i}$. By assumption $w^{t,\theta}_{\bj_t,i} \neq  x^t_{\bj_{t-1}, i}$, then we must have $w^{t,\theta}_{\bj_t,i} <  x^t_{\bj_{t-1}, i}$ because of the first constraint of \eqref{constr:third}.   In this case, we can always construct another $\bm{\hat{w}}$ such that $\hat{w}^{t,\theta}_{\bj_t,i} = x^t_{\bj_{t-1}, i}$. The new $\hat{\bm{w}}$ satisfies all the constraints, so it would still be feasible but has a higher objective value \eqref{eq:DySS_MILP} since $R_{i,j} \in [0, 1]$, which contradicts with $\bm{w}$ is part of an optimal solution.
As a result, we have shown that for any optimal solution $\bx, \by, \bm{a}, \bz, \bm{w}$ of $\MILP$ \eqref{eq:DySS_MILP}, we must have $w^{t,\theta}_{\bj_t,i} = x^t_{\bj_{t-1}, i} \cdot \prod_{t'=1}^{t} y^\theta_{{t'}, j_{t'}}$. Thus, $\bx, \by, \bm{a}$ form a feasible solution to $\MIP$  \eqref{eq:Dyss_MIP} with the same objective value, implying $OPT\eqref{eq:DySS_MILP} \leq OPT\eqref{eq:Dyss_MIP}$.%
\end{proof}

\subsection{\texttt{MILP}s to Compute the \texttt{DSE} Approximately}\label{sec:approximate_programs}
\subsubsection{$\MILP$ to compute the $\Markov$ policy.}
\begin{subequations}
    \label{eq:U^Markov}
    \begin{align}
    &\text{maximize} \quad  \sum_{t \in [T]} \sum_{\theta \in \Theta} \Big[ \mu(\theta) \, U(\bx^t_{j_{t-1}^\theta}, j^\theta_t) \Big] \\
    &\text{subject to}  \\
    & \sum_{t \in [T]} V^\theta(\bx^t_{j_{t-1}^\theta}, j^\theta_t) 
    \geq \sum_{t \in [T]} V^\theta(\bx^t_{\hat{j}_{t-1}}, \hat{j}_t), \quad \forall \theta \in \Theta, \, \hat{\bj}_T \in [n]^T, \bj_T^\theta \in [n]^T, \forall \theta \in \Theta. 
\end{align}
\end{subequations}

\subsubsection{$\MILP$ to compute the \textsc{First}-k policy.}
\begin{subequations}
    \label{eq:U^fixed1}
    \begin{align}
        &\text{maximize} \quad  \sum_{\theta \in \Theta} \mu(\theta) \Bigg[ \sum_{t \in [k]} U(\bx^t_{\j_{t-1}^\theta}, j^\theta_t) + U(\bx^{k+1}_{\j_{k}^\theta}, j^\theta_{k+1}) \cdot (T-k) \Bigg] \\
        &\text{subject to} \quad \\
        & \sum_{t \in [k]} V^\theta(\bx^t_{\j_{t-1}^\theta}, j^\theta_t) + V^\theta(\bx^{k+1}_{\j_{k}^\theta}, j^\theta_{k+1}) \cdot (T-k)  \ge \sum_{t \in [k]} V^\theta(\bx^t_{\hat{j}_{t-1}}, \hat{j}_t) + V^\theta(\bx^{k+1}_{\hat{\j}_{k}}, \hat{j}_{k+1}) \cdot (T-k), \notag 
        \\ &\qquad \forall \theta \in \Theta, \, \hat{\bj}_{k+1} \in [n]^{k+1}, \bj_{k+1}^\theta \in [n]^{k+1}, \forall \theta \in \Theta.
    \end{align}
\end{subequations}

\section{Additional Experimental Results}\label{appendix_additionalExp}
In this section, we provide some additional experimental results not included in Section~\ref{sec:experiment}.

\subsection{Additional Results on Structured Games}

\paragraph{The Dynamic Pricing Game.} We compute the exact $\DSE$ using the MILP from Section \ref{sec:algo} on a dynamic pricing game, as described in Section~\ref{section:dynamic_pricing_games}, to experimentally verify the no learning theorem.
We examine a game with finite buyer value set $V = \{0.4, 0.5, 0.6\}$ and a uniform value distribution. 
It is easy to compute that the optimal single-round mechanism is to post the Myerson price of $0.4$, inducing expected revenue of $0.4$. 
This game can be written as a Stackelberg game, with payoff matrices given by 
Figure~\ref{table:dynamic_pricing_experiment_game}.
Here, $R$ represents the seller's utility matrix; $C^0, C^1, C^2$ represent the buyer's utility matrix when $v=0.4, v=0.5, v=0.6$ accordingly; $i_0, i_1, i_2$ represent the seller sets a posted price $0.4, 0.5, 0.6$; $j_0, j_1$ corresponds to the buyer rejecting or purchasing the item. 
We run experiments for $T = 1,..., 10$, and observe  $U^\DSE = 0.4 T$, as expected. That is, the average optimal dynamic pricing utility is indeed the same as the optimal static revenue. 


\begin{figure}[tb]
\begin{small}
\begin{subtable}{0.24\linewidth}
    \centering
    \begin{tabular}{|c|c|c|} \hline
      $R$  &  $j_0$ & $j_1$  \\ \hline
      $i_0$  & 0 & 0.4 \\ \hline
      $i_1$  & 0 & 0.5 \\ \hline
      $i_2$  & 0 & 0.6 \\ \hline
    \end{tabular} 
\end{subtable}
\begin{subtable}{0.24\linewidth}
    \centering
    \begin{tabular}{|c|c|c|} \hline
      $C^0$  &  $j_0$ & $j_1$  \\ \hline
      $i_0$  & 0 & 0 \\ \hline
      $i_1$  & 0 & -0.1 \\ \hline
      $i_2$  & 0 & -0.2 \\ \hline
    \end{tabular} 
    \end{subtable}
\begin{subtable}{0.24\linewidth}
    \centering
    \begin{tabular}{|c|c|c|} \hline
      $C^1$  &  $j_0$ & $j_1$  \\ \hline
      $i_0$   & 0& 0.1 \\ \hline
      $i_1$  & 0 & 0 \\ \hline
      $i_2$  & 0 & -0.1 \\ \hline
    \end{tabular} 
\end{subtable}
\begin{subtable}{0.24\linewidth}
    \centering
    \begin{tabular}{|c|c|c|} \hline
      $C^2$  &  $j_0$ & $j_1$  \\ \hline
      $i_0$  & 0 & 0.2 \\ \hline
      $i_1$  & 0 & 0.1 \\ \hline
      $i_2$  & 0 & 0 \\ \hline
    \end{tabular} 
    \end{subtable}
    \end{small}
\caption{Utility Matrices for a Dynamic Pricing Game. A pricing game with three buyer types, $V = \{0.4, 0.5, 0.6\}$, from a uniform distribution. The sellers actions are to set a price of $\{0.4, 0.5, 0.6\}$.}
\label{table:dynamic_pricing_experiment_game} 
\end{figure}

\paragraph{Stackelberg security game.}
Additionally, we consider the case where the leader has one more target that needs to be protected (i.e. more actions for both the leader and the follower). Specifically, consider the following utility matrices for both agents.

\begin{table}[tbh]
\begin{small}
\begin{minipage}{0.24\linewidth}
    \centering
    \begin{tabular}{|c|c|c|c|} \hline
      $R$  &  $t_0$ & $t_1$  & $t_2$ \\ \hline
      $t_0$   & 1 & 0& 0\\ \hline
      $t_1$   & 0& 1 & 0\\ \hline
      $t_2$   & 0& 0& 1 \\ \hline
    \end{tabular} 
\end{minipage}
\begin{minipage}{0.24\linewidth}
    \centering
    \begin{tabular}{|c|c|c|c|} \hline
      $C^0$  &  $t_0$ & $t_1$ & $t_2$ \\ \hline
      $t_0$   & 0 & 0.5 & 0.5 \\ \hline
      $t_1$   & 1 & 0 & 0.5 \\ \hline
      $t_2$   & 1 & 0.5 & 0 \\ \hline
    \end{tabular} 
    \end{minipage}
\begin{minipage}{0.24\linewidth}
    \centering
    \begin{tabular}{|c|c|c|c|} \hline
      $C^1$  &  $t_0$ & $t_1$  & $t_2$ \\ \hline
      $t_0$   & 0 & 1 & 0.5\\ \hline
      $t_1$   & 0.5 & 0 & 0.5 \\ \hline
      $t_2$   & 0.5 & 1 & 0 \\ \hline
    \end{tabular} 
\end{minipage}
\begin{minipage}{0.24\linewidth}
    \centering
    \begin{tabular}{|c|c|c|c|} \hline
      $C^2$  &  $t_0$ & $t_1$ & $t_2$ \\ \hline
      $t_0$   & 0& 0.5 & 1 \\ \hline
      $t_1$   & 0.5 & 0& 1 \\ \hline
      $t_2$   & 0.5 & 0.5 & 0\\ \hline
    \end{tabular} 
    \end{minipage}
    \end{small}
    \caption{Utility Matrices for a Stackelberg Security Game with three follower types.}
\end{table}

The leader has 1 unit of resource to protect three targets from the attacker whose type is drawn uniformly from $\{C^0, C^1, C^2\}$. Each follower type $C^i, i\in \{0, 1, 2\}$ prefers target $t_i$.  Table \ref{table:ssg_appendix} shows the leader's expected average utility in the dynamic setup for different numbers of interaction rounds. Note the leader's expected average utility dropped overall compared to the result in the main paper, which is reasonable since the leader has to use the same resource to protect more targets.

\begin{table}[tbh]
\centering
\begin{small}
\begin{tabular}{|c|c|c|} 
 \hline
   & $U^\DSE/T$  & $U^\RME$ \\ 
 \hline
 T = 1  & 1/3  & 1/3\\
 \hline
 T = 2  & \textbf{0.444}  & 1/3\\
 \hline
 T = 3  & \textbf{0.467}   & 1/3\\
 \hline
 T = 4  & \textbf{0.479}  & 1/3 \\
 \hline
 T = 5  & \textbf{0.493}  & 1/3\\
 \hline
\end{tabular}
\end{small}
\caption{Average Leader Utility Per Round. The average utility per round for both the $\DSE$ and $\RME$ is shown for $T = 1, ..., 5$.}
\label{table:ssg_appendix}
\end{table}

\subsection{Additional Results on Structured Games}
In Table \ref{table:m3n3k3_table}, we provide further experimental results on smaller random game instances compared to those discussed in the main paper. We continue to compare the runtime and average leader utility for solving the optimal $\Markov$ policy and the \textsc{First}-k policy, against solving the $\DSE$. These additional experiments lead to the same conclusion as in the main paper, reinforcing our previous findings.

\begin{table}[tbh]
\begin{small}
\begin{tabular}{|c|c|c|c||c|c|c|c|} 
     \hline
     & \multicolumn{3}{c||}{Runtime} & \multicolumn{4}{c|}{Average Utility} \\
     \hline
     & $\First$& $\Markov$ & $\DSE$ & $\First$ & $\Markov$  & $\DSE$ & $\RME$ \\
     \hline
     T = 1 & $0.02 \pm .002$ & $0.02 \pm .004$ & $0.02 \pm .004$ & $0.66 \pm .12$ & $0.66 \pm .12$& $0.66 \pm .12$ & $\mathbf{0.72} \pm .12$\\
     \hline
     T = 2 & $\mathbf{0.44}\pm 0.3$ & $0.48\pm 0.4$ & $\mathbf{0.44} \pm 0.28$ & $\mathbf{0.73} \pm .11$ & $\mathbf{0.73} \pm .11$& $\mathbf{0.73} \pm .11$ &$0.72 \pm .12$ \\
     \hline
     T = 3 & $6.5 \pm 1.9$ & $\mathbf{2.4} \pm 1.4$ & $6.6 \pm 1.9$ & $\mathbf{0.76} \pm .11$ & $0.75 \pm .11$ & $\mathbf{0.76} \pm .11$   & $0.72 \pm .12$ \\
     \hline
     T = 4 &$79 \pm 49$ & $\mathbf{11} \pm 11$  & $79 \pm 49$ & $\mathbf{0.78} \pm .11$  &$0.76 \pm .11$  &$\mathbf{0.78} \pm .11$   &$0.72 \pm .12$  \\
     \hline
     T = 5 & $104 \pm 48$ & $\mathbf{75} \pm 85$  & $1247 \pm 1031$ & $0.78 \pm .11$ & $0.76 \pm .11$ & $\mathbf{0.79} \pm .11$ & $0.72 \pm .12$ \\
     \hline
     T = 6 & $\mathbf{125}$  $\pm 46$  & $658$  $\pm  1060$ & N/A & $\mathbf{0.78} \pm .11$ & $0.76 \pm .12$  & N/A & $0.72 \pm .12$  \\
     \hline
     T = 7 & $ \mathbf{141}$  $\pm 65$  & $ 6657  \pm 1213$  & N/A & $\mathbf{0.78} \pm .11$ & $0.77 \pm .11$ & N/A & $0.72 \pm .12$ \\
     \hline
     T = 8 & $ \mathbf{160}$  $\pm 59$  & N/A & N/A & $\mathbf{0.78} \pm .11$ & N/A  & N/A & $0.72 \pm .12$ \\
     \hline
\end{tabular}
\end{small}
\caption{Additional experimental results on randomized games.  Running time (columns 2-4 with the unit: second) and average utility (columns 5-8) for random game instances with $m=3, n=3, |\Theta|=3$, $k=3$, where ``N/A'' means the algorithm can not return a solution within 3 hours.} 
\label{table:m3n3k3_table}
\end{table}

\end{document}